\makeatletter\@ifpackageloaded{mathpazo}\@tempswatrue\@tempswafalse
  \DeclareFontFamily{OT1}{pzc}{}
  \DeclareFontShape{OT1}{pzc}{m}{it}{<-> s * [1.15] pzcmi7t}{}
  \DeclareMathAlphabet{\mathpzc}{OT1}{pzc}{m}{it}
\makeatletter\@ifpackageloaded{biblatex}{%
  \usepackage{csquotes} 
  \bibliography{../../references}
  \renewbibmacro{in:}{%
    \ifentrytype{incollection}{\printtext{\bibstring{in}\intitlepunct}}{}}
  \renewbibmacro{publisher+location+date}{%
    \iflistundef{publisher}
      {\setunit*{\addcomma\space}}
      {\setunit*{\addcomma\space}}%
    \printlist{publisher}%
    \setunit*{\addcomma\space}%
    \printlist{location}%
    \setunit*{\addcomma\space}%
    \usebibmacro{date}%
    \newunit}
  \DeclareFieldFormat[article]{pages}{#1\isdot}
  \DeclareFieldFormat[article,incollection,inproceedings,unpublished]{title}{#1\isdot}
  \DeclareFieldFormat[thesis]{title}{\mkbibemph{#1\isdot}}
  \DeclareFieldFormat[unpublished]{date}{(#1)\isdot}
  \DeclareFieldFormat[unpublished]{note}{#1\nopunct} 
  \DeclareFieldFormat[article]{journaltitle}{\mkbibemph{#1\isdot}}
  
  \AtEveryBibitem{%
    \ifentrytype{book}{}{
      \clearname{editor}
    }
  }
  \newbibmacro*{bbx:parunit}{%
    \ifbibliography
      {\setunit{\bibpagerefpunct}\newblock
       \usebibmacro{pageref}%
       \clearlist{pageref}%
       \setunit{\adddot\par\nobreak}}
      {}
  }
  \renewbibmacro*{doi+eprint+url}{%
    \usebibmacro{bbx:parunit}
    \iftoggle{bbx:doi}
      {\printfield{doi}}
      {}%
    \iftoggle{bbx:eprint}
      {\usebibmacro{eprint}}
      {}%
    \iftoggle{bbx:url}
      {\usebibmacro{url+urldate}}
      {}
  }
  \renewbibmacro*{eprint}{%
    \usebibmacro{bbx:parunit}
    \iffieldundef{eprinttype}
      {\printfield{eprint}}
      {\printfield[eprint:\strfield{eprinttype}]{eprint}}
  }
  \renewbibmacro*{url+urldate}{%
    \usebibmacro{bbx:parunit}
    \printfield{url}%
    \iffieldundef{urlyear}
      {}
      {\setunit*{\addspace}%
       \printtext[urldate]{\printurldate}}
  }
}{}\makeatother
\declaretheorem[numberwithin=section,refname={theorem,theorems},Refname={Theorem,Theorems}]{theorem}
\declaretheorem[sibling=theorem,style=definition]{definition}
\declaretheorem[sibling=theorem,name=Lemma]{lemma}
\declaretheorem[sibling=theorem,name=Proposition]{proposition}
\declaretheorem[sibling=theorem,style=definition,name=Example]{example}
\declaretheorem[numbered=no,name=Question]{question}
\makeatletter\@ifpackageloaded{hyperref}{%
  \usepackage{xcolor}
  \definecolor{dark-red}{rgb}{0.4,0.15,0.15}
  \definecolor{dark-blue}{rgb}{0.15,0.15,0.4}
  \definecolor{medium-blue}{rgb}{0,0,0.5}
  \hypersetup{
    colorlinks,
    linkcolor={dark-red},
    citecolor={dark-blue},
    urlcolor={medium-blue}%
  }

}{}\makeatother
\providecommand{\abs}[1]{\lvert#1\rvert}
\providecommand{\floor}[1]{\lfloor#1\rfloor}
\providecommand{\ceil}[1]{\lceil#1\rceil}
\newcommand{\infw}[1]{%
  \ifcat\noexpand#1\relax\bm{#1}
  \else\mathbf{#1}\fi}          
\newcommand{\N}{\mathbb{N}}
\newcommand{\aci}[1]{\mathcal{A}(#1)}
\newcommand{\acii}[1]{\mathcal{A}_\infty(#1)}
\newcommand{\OL}[1]{\overline{#1}}
\newcommand{\keywords}[1]{\par\noindent{\footnotesize{\em Keywords\/}: #1}}
\begin{document}
  \title{Avoiding abelian powers cyclically}
  \author[,1,2,3]{Jarkko Peltomäki\footnote{Corresponding author.\\E-mail addresses: \href{mailto:r@turambar.org}{r@turambar.org} (J. Peltomäki), \href{mailto:mawhit@mpi-sws.org}{mawhit@mpi-sws.org} (M. A. Whiteland).}}
  \affil[1]{The Turku Collegium for Science and Medicine TCSM, University of Turku, Turku, Finland}
  \affil[2]{Turku Centre for Computer Science TUCS, Turku, Finland}
  \affil[3]{University of Turku, Department of Mathematics and Statistics, Turku, Finland}
  \affil[4]{Max Plank Institute for Software Systems, Saarbrücken, Germany}
  \author[4]{Markus A. Whiteland}
  \date{}
  \maketitle
  \noindent
  \hrulefill
  \begin{abstract}
    \vspace{-1em}
    \noindent
    We study a new notion of cyclic avoidance of abelian powers. A finite word $w$ avoids abelian $N$-powers cyclically
    if for each abelian $N$-power of period $m$ occurring in the infinite word $w^\omega$, we have $m \geq |w|$. Let
    $\mathcal{A}(k)$ be the least integer $N$ such that for all $n$ there exists a word of length $n$ over a $k$-letter
    alphabet that avoids abelian $N$-powers cyclically. Let $\mathcal{A}_\infty(k)$ be the least integer $N$ such that
    there exist arbitrarily long words over a $k$-letter alphabet that avoid abelian $N$-powers cyclically.
    
    We prove that $5 \leq \mathcal{A}(2) \leq 8$, $3 \leq \mathcal{A}(3) \leq 4$, $2 \leq \mathcal{A}(4) \leq 3$, and
    $\mathcal{A}(k) = 2$ for $k \geq 5$. Moreover, we show that $\mathcal{A}_\infty(2) = 4$,
    $\mathcal{A}_\infty(3) = 3$, and $\mathcal{A}_\infty(4) = 2$.
    \vspace{1em}
    \keywords{abelian equivalence, abelian power, abelian power avoidance, cyclic abelian power avoidance, circular word}
    \vspace{-1em}
  \end{abstract}
  \hrulefill
  
  \section{Introduction}
  Ever since the seminal work of A. Thue \cite{1906:uber_unendliche_zeichenreihen}, repetitions or repetition avoidance
  in infinite words has been a central theme in the field of combinatorics on words. Thue showed that there exists a
  ternary word which avoids squares, in symbols $xx$, that is, two identical blocks occurring adjacently in the word.
  Further, he showed that there exists a binary word avoiding cubes, i.e., factors of the form $xxx$. These results are
  best possible concerning integral powers in terms of the size of the underlying alphabet. Thue's results have
  inspired numerous papers on avoiding powers culminating in the papers by Currie and Rampersad
  \cite{2011:a_proof_of_dejeans_conjecture} and Rao \cite{2011:last_cases_of_dejeans_conjecture} proving Dejean's
  conjecture on repetition thresholds.

  An extremely prominent topic in combinatorics on words is the abelian equivalence of words. Two words $u$ and $v$ are
  \emph{abelian equivalent}, in symbols $u \sim v$, if each letter of the underlying alphabet occurs equally many times
  in both words. This concept leads to that of an abelian power: an abelian $N$-power of period $m$ is a word
  $u_0 u_1 \dotsm u_{N-1}$ such that $u_0 \sim u_1 \sim \ldots \sim u_{N-1}$ and the words $u_i$ have common length
  $m$. Thus avoidance of abelian squares, abelian cubes, etc. can be considered. Erd\H{o}s suggested in 1957 the
  problem of whether abelian squares are avoidable on four letters \cite{1957:some_unsolved_problems}. Thus the problem
  of searching for an alphabet of minimal cardinality over which an infinite word avoiding abelian squares exists was
  initiated. Evdokimov \cite{1964:strongly_asymmetric_sequences_generated_finite} gave the first upper bound of $25$.
  Later, Pleasants \cite{1970:non-repetitive_sequences} lowered the bound to five, and finally in 1992 Keränen
  \cite{1992:abelian_squares_are_avoidable_on_4_letters} answered Erd\H{o}s's question in the positive by constructing
  an appropriate infinite word over four letters. Similar questions were considered for smaller alphabets but
  higher-order powers: Dekking \cite{1979:strongly_non-repetitive_sequences_and_progression-free_sets} showed that
  there exist an infinite binary word which avoids abelian fourth powers and an infinite ternary word avoiding abelian
  cubes. Dekking's and Keränen's results are optimal: any binary word of length $10$ contains an abelian cube, and any
  ternary word of length $8$ contains an abelian square.

  There are many variations of the study of avoidance of abelian powers. One direction is to consider avoidance in
  partial words \cite{2012:abelian_repetitions_in_partial_words,2012:avoiding_abelian_squares_in_partial_words}.
  Another is to consider abelian powers occurring in words belonging to specific word classes; see, e.g.,
  \cite{2013:abelian_powers_in_paper-folding_words,2016:abelian_powers_and_repetitions_in_sturmian_words}. Finally, the
  very notion of abelian equivalence can be generalized. See
  \cite{2013:on_a_generalization_of_abelian_equivalence_and_complexity_of_infinite,diss:markus_whiteland,2015:another_generalization_of_abelian_equivalence_binomial}
  for research on $k$-abelian equivalence and binomial equivalence. Of course, there is research on abelian equivalence
  beyond avoidance. We refer the reader to the recent survey \cite{2019:abelian_properties_of_words}. Abelian
  equivalence has also been studied on graphs: the study of abelian $2$-power-free graph colorings has been initiated
  under the term anagram-free colorings in
  \cite{2018:anagram-free_colourings_of_graphs,2018:anagram-free_graph_colouring}. A coloring of a graph is 
  anagram-free if no sequence of colors corresponding to a path in the graph is an abelian $2$-power. We remark that
  anagram-free colorings of cycles correspond to circular avoidance of abelian $2$-powers (see below).
  
  A notion related to this paper is the notion of \emph{circular avoidance}. A word $w$ avoids $N$-powers
  \emph{circularly} if no word in the conjugacy class of $w$ contains an $N$-power as a factor. This is a more
  restrictive type of avoidance and more difficult to study because the language of words avoiding $N$-powers
  circularly is not closed under taking factors. This results in interesting phenomena. For example, Currie shows in
  \cite{2002:there_are_ternary_circular_square-free} that there exists a ternary word of length $n$ avoiding squares
  circularly for $n \geq 18$, but no such word of length $17$ exists. For more on this notion, see, e.g.,
  \cite{2019:circular_repetition_thresholds_small_alphabets} and references therein. According to our knowledge no
  research on the abelian analogue of circular avoidance exists.

  This paper introduces a stronger form of the circular avoidance called cyclic avoidance, and we mainly study it in
  the abelian setting. A word $w$ avoids abelian $N$-powers \emph{cyclically} if any abelian $N$-power occurring in
  $w^\omega = w w \dotsm$ has period at least the length $\abs{w}$ of $w$. The difference between circular and cyclic
  avoidance is that, in cyclic avoidance, periods up to length $\abs{w} - 1$ are disallowed while in circular avoidance
  only periods up to $\floor{\abs{w}/N}$ are disallowed in $w^\omega$. Cyclic avoidance of abelian powers was
  introduced in the recent paper \cite{2020:all_growth_rates_of_abelian_exponents} by the authors of this paper. There
  it served as a tool to construct infinite words with prescribed growth rate of abelian exponents. Due to the
  different focus, the abelian cyclic avoidance was only briefly studied in
  \cite{2020:all_growth_rates_of_abelian_exponents} and only to the extent that was necessary for the main result of
  that paper. The purpose of this paper is to extend this preliminary research by considering the question of what is
  the least number of letters required to avoid abelian $N$-powers cyclically.

  Let $\aci{k}$ be the least integer $N$ such that for all $n$ there exists a word of length $n$ over a $k$-letter
  alphabet that avoids abelian $N$-powers cyclically. Similarly, let $\acii{k}$ be the least integer $N$ such that
  there exist arbitrarily long words over a $k$-letter alphabet that avoid abelian $N$-powers cyclically. The main
  results of this paper are as follows.

  \begin{theorem}\label{thm:main}
    We have $5 \leq \aci{2} \leq 8$, $3 \leq \aci{3} \leq 4$, $2 \leq \aci{4} \leq 3$, and $\aci{k} = 2$ for
    $k \geq 5$.
  \end{theorem}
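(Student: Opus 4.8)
The plan is to recast cyclic avoidance as a finite, Parikh-theoretic condition and then to handle each alphabet size by pairing explicit constructions (for the upper bounds) with obstruction lengths (for the lower bounds). Let $n=\abs{w}$, let $\mathbf{p}$ be the Parikh vector of $w$, and let $P(j)$ denote the Parikh vector of the length-$j$ prefix of $w^\omega$, so that $P(j+n)=P(j)+\mathbf{p}$. The factor of $w^\omega$ at position $i$ with period $m$ is an abelian $N$-power exactly when the $N+1$ points $P(i),P(i+m),\dots,P(i+Nm)$ form an arithmetic progression, and by periodicity it is enough to take $i\in\{0,\dots,n-1\}$ and $1\le m\le n-1$. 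Hence $w$ avoids abelian $N$-powers cyclically if and only if its partial-sum walk contains no such short progression, and for each fixed length this is a finite condition.

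For the lower bounds it suffices to produce a single obstruction length in each case. The bound $\aci{k}\ge 2$ holds for every $k$, since for $n\ge 2$ any single letter is an abelian $1$-power of period $1<n$. For $\aci{3}\ge 3$ and $\aci{2}\ge 5$ the plan is to exhibit concrete lengths $n$ at which every ternary, respectively binary, word forces a cyclic abelian square, respectively abelian $4$-power, of period $m<n$; as there are finitely many words of each length this can be settled by exhaustive search. The mechanism can be illustrated by hand in the simplest situations: when $n$ is even and $m=n/2$, the square condition becomes that some length-$m$ window of $w^\omega$ has Parikh vector $\tfrac12\mathbf{p}$, and over the binary alphabet a parity-and-intermediate-value argument on the sliding window shows that this is unavoidable as soon as both letter counts are even.

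For the upper bounds I must construct, for every length $n$, a word attaining the stated avoidance. When $k\ge 5$ and $N=2$, the surplus of letters leaves enough freedom to arrange that any two adjacent equal-length windows differ in the count of some letter, so a direct construction and a short verification settle $\aci{k}=2$. For $\aci{2}\le 8$, $\aci{3}\le 4$, and $\aci{4}\le 3$ the strategy is to begin from an infinite word with strong linear abelian-power avoidance---for instance the words underlying the companion values $\acii{k}$, or the classical words of Dekking and Keränen---and to read off length-$n$ words from it, exploiting the generous order $N$ as slack. Abelian $N$-powers that do not cross the seam of $w^\omega$ are then controlled by the avoidance of the underlying word, while the seam-crossing ones are handled by splitting on $n$ modulo a small period and bounding the Parikh defect introduced at the seam.

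The principal obstacle is the uniformity of the upper-bound constructions across all lengths, in sharp contrast with $\acii{k}$, where only infinitely many lengths are required; the strict gap $\acii{2}=4<5\le\aci{2}$ already shows that some lengths are genuinely harder and must be treated separately. Consequently the binary bound $\aci{2}\le 8$ is the crux: one must control seam-crossing abelian $8$-powers of every period simultaneously and for every $n$, which I expect to demand the most intricate case analysis, whereas the coordinatewise slack available when $k\ge 5$ makes that case comparatively easy.
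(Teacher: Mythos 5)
Your Parikh-walk reformulation and your lower-bound plan match the paper: the bounds $\aci{3}\geq 3$ and $\aci{2}\geq 5$ are indeed settled by finite checks (every ternary word of length $8$ contains an abelian square; no binary word of length $8$ avoids abelian $4$-powers cyclically), and $\aci{k}\geq 2$ is trivial. The genuine gap is in your upper bounds. For $k\geq 3$ you propose to read length-$n$ factors off a strongly abelian-$N$-free infinite word and to control seam-crossing powers by ``bounding the Parikh defect introduced at the seam,'' but as stated this fails: a factor $u$ of an abelian-$N$-free word need not avoid abelian $N$-powers cyclically at all. For instance $010$ is a factor of Keränen's abelian-square-free word, yet $(010)^\omega$ contains the square $00$ of period $1$; nothing about the ``generous order $N$'' excludes such seam collisions, and the strict gap $\acii{2}=4<5\leq\aci{2}$ that you yourself cite shows seam defects cannot be waved away generically. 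The paper's mechanism is different and spends one extra letter to kill the seam: if $w$ is abelian $N$-free and $\#$ does not occur in $w$, then $w\#$ avoids abelian $N$-powers cyclically. Indeed, by the halving lemma (if $x^\omega$ contains an abelian $N$-power of period $m$ with $\tfrac12\abs{x}\leq m<\abs{x}$, it contains one of period $\abs{x}-m$) one may assume the period is at most $\tfrac12\abs{w\#}$; then two consecutive blocks contain at most one $\#$, so no block contains $\#$, and the whole power lies inside $w$, a contradiction. Applying this to length-$(n-1)$ factors of Dekking's binary abelian-$4$-free and ternary abelian-$3$-free words and of Keränen's quaternary abelian-$2$-free word gives $\aci{3}\leq 4$, $\aci{4}\leq 3$, and $\aci{k}=2$ for $k\geq 5$. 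Note that this is exactly why each upper bound uses one more letter than the underlying avoidance result---a mismatch your plan (which keeps the same alphabet) never confronts; your sketch for $k\geq 5$ (``arrange that adjacent windows differ in some letter count'') is precisely the statement to be proved, not an argument.

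For $\aci{2}\leq 8$ your proposal contains no argument at all: you correctly identify it as the crux but only predict ``the most intricate case analysis.'' This is the bulk of the paper's proof. There, $w$ is a length-$n$ prefix of the fixed point of $0\mapsto 0001$, $1\mapsto 101$, and $\OL{w}$ is the reversal of its complement; the words $\OL{w}\diamond w$ (odd total length) and $\OL{w}w$ or $\OL{w}^\bullet w$ (even lengths, split by the parity of $\abs{w}$) are shown to avoid abelian $8$-powers cyclically via weight estimates on $\Delta(u)=\abs{u}_0-\abs{u}_1$ (long factors of $w$ are light, long factors of $\OL{w}$ heavy, so they are never abelian equivalent), a classification of the eight blocks of a putative power by how they straddle the seams, and a final descent that manufactures an abelian $4$-power inside $\OL{w}$. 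Without this construction, or a comparable one, the binary upper bound---and with it the theorem---remains unproved.
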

  
  The lower bound for $\aci{2}$ might be a bit surprising at first sight. However, it can be checked that no binary
  word of length $8$ avoids abelian $4$-powers cyclically. The bounds for $\aci{3}$, $\aci{4}$, and $\aci{5}$ are quite
  straightforward from the results of Dekking and Keränen mentioned previously, but the upper bound for $\aci{2}$
  requires an explicit construction.

  Extending the results of \cite{2020:all_growth_rates_of_abelian_exponents}, we prove the following theorem.

  \begin{theorem}\label{thm:main2}
    We have $\acii{2} = 4$, $\acii{3} = 3$, and $\acii{4} = 2$.
  \end{theorem}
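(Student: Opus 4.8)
The plan is to treat the lower and upper bounds separately, and to first record the monotonicity that organizes both. If $w$ avoids abelian $N$-powers cyclically then it also avoids abelian $(N+1)$-powers cyclically, because the first $N$ blocks of an abelian $(N+1)$-power of period $m<\abs{w}$ in $w^\omega$ already form an abelian $N$-power of the same period. Hence, for each fixed $k$, the set of $N$ admitting arbitrarily long cyclic avoiders is upward closed, and to pin down $\acii{k}$ it suffices to exhibit one value of $N$ that works and to rule out $N-1$.

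For the lower bounds I would use a single observation: if $w$ contains an abelian $N$-power as an ordinary factor, that factor has length at most $\abs{w}$ and hence period at most $\abs{w}/N<\abs{w}$, so $w^\omega$ contains an abelian $N$-power of period $<\abs{w}$ and $w$ fails cyclic avoidance. Combined with the optimality facts quoted in the introduction---every binary word of length $10$ contains an abelian cube (of period $\le 3$) and every ternary word of length $8$ contains an abelian square (of period $\le 4$)---this shows that binary words avoiding abelian cubes cyclically and ternary words avoiding abelian squares cyclically have bounded length. By monotonicity this yields $\acii{2}\ge 4$ and $\acii{3}\ge 3$; the bound $\acii{4}\ge 2$ is immediate, since single letters are abelian $1$-powers of period $1<\abs{w}$.

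The upper bounds $\acii{2}\le 4$, $\acii{3}\le 3$, and $\acii{4}\le 2$ are the substance. Here I would begin from the infinite abelian-$N$-power-free words of Dekking (binary with $N=4$, ternary with $N=3$) and Keränen (four letters with $N=2$), and reformulate cyclic avoidance through prefix Parikh vectors: letting $P_i$ denote the Parikh vector of the length-$i$ prefix of $w^\omega$ and $n=\abs{w}$, one has $P_{i+n}=P_i+\Psi(w)$, and an abelian $N$-power of period $m$ at position $i$ is exactly an arithmetic progression $P_i,P_{i+m},\dots,P_{i+Nm}$. The key structural tool I would establish is a \emph{duality lemma}: for $0<m<n$ the length-$m$ block at a position $p$ and the length-$(n-m)$ block at $p+m$ together cover one full period of $w^\omega$, so that the block Parikh vectors of an abelian $N$-power of period $m$ coincide if and only if those of the complementary, reverse-ordered length-$(n-m)$ blocks do. Thus $w^\omega$ contains an abelian $N$-power of period $m$ iff it contains one of period $n-m$, the set of bad periods is symmetric about $n/2$, and it suffices to exclude periods $m\le n/2$.

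It then remains to rule out abelian $N$-powers of period $m\le n/2$, and I expect the \emph{main obstacle to be exactly the powers that wrap around the cyclic boundary}---the genuinely circular powers that are invisible inside the underlying infinite word. For short periods (total length at most $(N-2)n$) every relevant factor of $w^\omega$ appears already inside the generating word, provided $w=\sigma^j(a)$ is taken as an image under the generating morphism $\sigma$ and its two ends are matched to the word's recurrence; such powers are then excluded by ordinary abelian-$N$-power-freeness, via a desynchronization analysis that lifts a hypothetical power back along $\sigma$. Note that this short-period range is empty when $N=2$, so for $k=4$ the entire burden falls on the cyclic analysis, which is why I expect that case to be hardest. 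To attack the wrap-around powers I would exploit the bounded abelian complexity (balancedness) of the Keränen and Dekking words, choosing $w$ with perfectly balanced letter counts $\Psi(w)=\tfrac{n}{k}\mathbf{1}$ so that windows spanning whole periods contribute uniformly; the delicate intermediate range of periods near $n/2$ is where I anticipate needing either a sharpened balance estimate or a bounded, finite verification. Unboundedness of $\abs{w}$ is then immediate by letting $j\to\infty$.
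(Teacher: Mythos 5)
Your monotonicity remark, your lower bounds, and your duality lemma are all sound. The lower bounds are exactly the paper's implicit argument: an abelian $N$-power occurring as a factor of $w$ has period at most $\abs{w}/N < \abs{w}$, so Dekking's and Keränen's optimality results bound the lengths of binary cyclic abelian-$3$-avoiders and ternary cyclic abelian-$2$-avoiders. Your two-sided duality (period $m$ occurs in $w^\omega$ iff period $\abs{w}-m$ does) is a correct strengthening of the paper's \autoref{lem:reduction_short}, which states only the direction $\tfrac12\abs{w} \le m < \abs{w}$ yields period $\abs{w}-m$ --- and that is the only direction your plan (or the paper's) actually needs.

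The genuine gap is in the upper bounds: the wrap-around powers that you correctly identify as the main obstacle are never handled, and both tools you propose fail for the concrete words in question. First, perfect balancedness $\Psi(w)=\tfrac{n}{k}\mathbf{1}$ is unattainable: Dekking's words are unbalanced (in the binary case every factor of length at least $29$ has strictly more $0$s than $1$s, the paper's \autoref{lem:light}), and $\abs{\phi^j(0)} = 85^j$ is not divisible by $4$. Second, the claim that short-period powers live inside the generating infinite word requires a power of the seed letter to occur in the fixed point: for $\sigma_3$ this happens to work ($000$ occurs, so $w^3 = \sigma_3^j(000)$ is abelian $4$-free, and combined with your duality lemma this genuinely completes $\acii{2}\le 4$); for $\sigma_4$ only $00$ occurs ($000$ cannot, since every image block $0012$, $112$, $022$ ends in $2$), so only periods $m \le n/3$ are covered and the range $n/3 < m \le n/2$ is left to your unproven ``intermediate'' analysis; and for Keränen the plan collapses at the start, because $\phi^j(0)$ begins and ends with $0$, so $\phi^j(0)^\omega$ contains the abelian square $00$ of period $1$ and the word $\sigma^j(a)$ is simply not a cyclic avoider --- this is precisely the cautionary remark the paper makes after \autoref{prp:preserves_N_circularity}. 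Thus the case you yourself flag as carrying the entire burden, $\acii{4}=2$ (the paper's main new contribution here), has no argument at all. The missing idea is the paper's \autoref{prp:preserves_N_circularity}: an abelian $N$-free morphism maps words avoiding abelian $N$-powers cyclically to words with the same property (for $N>2$, and for $N=2$ under the hypothesis $\abs{w}\ge 2$), proved by reducing to period at most $\abs{\sigma(w)}/2$ via \autoref{lem:reduction_short}, pulling the power back through $\sigma$ to force $N\le 3$ with equality of lengths, and then a Parikh-vector analysis at the seam words $p_0$ and $s_{\ell-1}$ showing $s_{\ell-1}p_0$ is an abelian $N$-power in the image of a two-letter word. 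Iterating this from the seed $0$ (for $N=4,3$) and, crucially, from the two-letter seed $01$ (for $N=2$) yields all three upper bounds at once; some induction of this kind, together with the seed adjustment for $k=4$, is what your proposal still needs.
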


  The last result of the theorem can be seen as progress in resolving a conjecture appearing in
  \cite[p.~17]{2018:anagram-free_graph_colouring}, which reads: for all but finitely many $n$, there exists a
  four-letter word of length $n$ avoiding abelian $2$-powers circularly.

  We also extend previous results of Aberkane and Currie
  \cite{2002:there_are_ternary_circular_square-free,2004:there_exist_binary_circular_5/2+} concerning the circular
  avoidance of powers to our cyclic setting. The results are as follows.

  \begin{theorem}
    If $n \notin \{5, 7, 9, 10, 14, 17\}$ then there exists a word of length $n$ over a $3$-letter alphabet that avoids
    $2$-powers cyclically.
  \end{theorem}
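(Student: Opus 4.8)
The plan is to sidestep a direct construction and instead reduce cyclic square-freeness to the already-understood notion of circular square-freeness. Concretely, I would first establish the reduction: \emph{if a word $w$ avoids $2$-powers circularly, then it avoids them cyclically.} Since the periods forbidden under circular avoidance (those up to $\floor{\abs{w}/2}$) form a subset of those forbidden under cyclic avoidance (those up to $\abs{w}-1$), the only thing to rule out is the appearance in $w^\omega$ of a square of \emph{intermediate} period $m$ with $\floor{\abs{w}/2} < m < \abs{w}$. The whole point is that such an intermediate square cannot occur in isolation: it forces a shorter square, which circular square-freeness has already excluded.

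Here is the key step I would carry out. Suppose $w^\omega$ contains a square of period $m$ with $\floor{\abs{w}/2} < m < \abs{w}$ at some position $p$; set $n = \abs{w}$ and $d = n - m$, so that $0 < d \le \floor{n/2}$. The square gives $w^\omega[j] = w^\omega[j+m]$ for $j \in [p, p+m)$, and rewriting $w^\omega[j+m] = w^\omega[j-d]$ by the global period $n$ of $w^\omega$ (shifting $p$ by a multiple of $n$ if needed so that all indices are nonnegative), these equalities say exactly that $w^\omega$ has period $d$ on a window of length precisely $n$, that is, on a single conjugate $w'$ of $w$. Because $2d \le n$, the length-$2d$ prefix of $w'$ is then a square $(w'[0..d-1])^2$ of period $d \le \floor{n/2}$ occurring inside a conjugate of $w$, contradicting circular square-freeness. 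Contrapositively, circular square-freeness forbids every square of period below $n$ in $w^\omega$, which is the definition of cyclic square-freeness.

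With the reduction in hand, the theorem follows from the known classification of lengths admitting ternary circular square-free words. Such words exist for every $n \ge 18$ and none exists of length $17$, by Currie \cite{2002:there_are_ternary_circular_square-free}, while a finite backtracking search over ternary words settles the range $n \le 16$ and isolates the remaining exceptional lengths $5,7,9,10,14$. Thus for each $n \notin \{5,7,9,10,14,17\}$ there is a ternary word of length $n$ avoiding $2$-powers circularly, hence — by the reduction above — cyclically, which is precisely the claim.

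I expect the main difficulty to be bookkeeping rather than conceptual: one must make the period arithmetic in the reduction airtight, in particular checking that the window carrying period $d$ is exactly one conjugate and that $d$ always lands in the circular range $[1,\floor{n/2}]$ for every intermediate $m$, treating the parities of $n$ separately. I would also stress that this clean collapse of cyclic avoidance onto circular avoidance is special to ordinary $2$-powers: it rests on the Fine--Wilf-type rigidity of genuine periods and fails in the abelian setting, where cyclic avoidance is strictly stronger. That contrast is exactly why the abelian statements of the earlier sections genuinely require the explicit constructions developed there, whereas the present (non-abelian) result reduces to prior work.
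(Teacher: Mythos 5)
Your proposal is correct and takes essentially the same route as the paper: your reduction of an intermediate period $m$ to the period $\abs{w}-m$ is exactly the paper's \autoref{lem:ordinaryShortPowers} (the ordinary-power analogue of \autoref{lem:reduction_short}, stated there as ``straightforward to prove''), and both arguments then conclude that circular and cyclic square-freeness coincide and invoke Currie's classification of ternary circular square-free lengths. Your window argument simply supplies the proof of that lemma which the paper omits.
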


  \begin{theorem}\label{thm:ordinary_binary}
    For each $n$, there exists a word of length $n$ over a $2$-letter alphabet that avoids $5/2^+$-powers cyclically.
  \end{theorem}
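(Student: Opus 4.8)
The plan is to upgrade the construction of Aberkane and Currie \cite{2004:there_exist_binary_circular_5/2+} from circular to the stronger cyclic setting, and from large lengths to every length. The first step is a reduction to a local condition. A length-$n$ binary word $w$ avoids $5/2^+$-powers cyclically precisely when $w^\omega$ contains no factor of exponent exceeding $5/2$ whose period $m$ is smaller than $n$. Such a factor has length at most $5m/2 < 5n/2 < 3n$, so it lies inside three consecutive copies of $w$; thus the property is local and can be verified on $w^3$. Moreover, since any factor of $w^\omega$ longer than $n$ already inherits the period $n$, the Fine--Wilf theorem lets us assume the offending period $m$ is chosen minimal, so that a violation genuinely witnesses a \emph{short} period rather than a multiple of $n$.

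The second step supplies the building block. I would take a binary morphism $h$ whose fixed point $\infw{u}$ is $5/2^+$-power-free (such morphisms underlie the Aberkane--Currie construction), and record its synchronization properties: the $h$-block boundaries inside $h(v)$ are determined by bounded contexts, so any sufficiently long factor of $h(v)^\omega$ can be \emph{desubstituted} to a factor of $v^\omega$. The core lifting lemma then reads: if $w$ avoids $5/2^+$-powers cyclically, so does $h(w)$. To prove it, suppose $h(w)^\omega$ has a factor of period $m < \abs{h(w)}$ and exponent $> 5/2$; using synchronization, align this repetition with block boundaries and desubstitute to obtain a factor of $w^\omega$ of period $m' < \abs{w}$ whose exponent is still $> 5/2$, contradicting the hypothesis on $w$.

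To reach every length, one first verifies, for all $n$ in a bounded base interval $[n_0, C n_0)$ with $C$ the length blow-up of $h$, that an explicit length-$n$ word avoids $5/2^+$-powers cyclically; this is a finite check on $w^3$. The lifting lemma then propagates solutions: applying $h$ to a length-$n$ word produces one of length close to $Cn$, and as $n$ ranges over the base interval the images cover a full interval $[C n_0, C^2 n_0)$ (small length adjustments being absorbed either by a non-uniform $h$ or by appending a short controlled block whose effect on $w^3$ is checkable). Iterating covers all $n \geq n_0$, while the finitely many $n < n_0$ are settled by direct inspection, since for small $n$ an alternating or otherwise balanced word has no period-one or period-two block long enough to reach exponent $5/2$. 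This interval-covering scheme is exactly what turns existence for large lengths into existence for \emph{every} length.

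The step I expect to be the main obstacle is the desubstitution across the seam in the lifting lemma. Away from the junction, $h$-synchronization is routine, but a period-$m$ repetition in $h(w)^\omega$ may straddle the boundary between consecutive copies of $h(w)$, where the suffix of $h(w)$ abuts its prefix. Ruling out that this seam manufactures a fresh high-exponent factor of small period requires a careful case analysis on the phase of $m$ relative to $\abs{h(w)}$ and on the overlap of $h$-blocks at the seam. This is precisely the phenomenon that separates cyclic from ordinary avoidance, and it is where the threshold $5/2$ (rather than the plain binary repetition threshold $2$) is forced: the seam can tolerate short repetitions only up to exponent $5/2$, and the whole construction must be tuned so that no conjugate pushes a boundary repetition past that bound.
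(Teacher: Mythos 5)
Your proposal is a program, not a proof, and both of its load-bearing steps are left open or fail as stated. First, the lifting lemma (``if $w$ avoids $5/2^+$-powers cyclically, so does $h(w)$'') is exactly where all the difficulty lives, and you defer it: cyclic avoidance forbids periods up to $\abs{w}-1$, far beyond the circular threshold $\floor{\abs{w}/N}$, so a violating repetition in $h(w)^\omega$ may have period $m$ with $\tfrac12\abs{h(w)} < m < \abs{h(w)}$ and then it straddles \emph{every} seam of $h(w)^\omega$ at once; it cannot be localized to a bounded window and desubstituted by routine synchronization, which is why the Aberkane--Currie circular machinery does not transfer. (The reduction \autoref{lem:ordinaryShortPowers} tames such long periods, but using it inside a morphic-image argument still requires the seam analysis you flag and do not carry out.) Second, the length-covering step is false as described: the morphisms underlying the Aberkane--Currie construction are uniform of some width $C$, so images of words of lengths $n \in [n_0, Cn_0)$ have lengths $Cn$ --- multiples of $C$ --- and do not cover an interval $[Cn_0, C^2n_0)$. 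Your patch of ``appending a short controlled block whose effect on $w^3$ is checkable'' misreads the cyclic condition: changing $\abs{w}$ changes the entire set of forbidden periods and relocates every seam of $w^\omega$, so the appended block has no local, once-and-for-all checkable effect; each new length would demand a full re-verification, and infinitely many lengths remain. Two smaller slips: after reducing exponents to the range $(5/2, 3]$ (an exponent above $3$ yields a $3$-power of the same period), the witness is a factor of $w^4$, not $w^3$, since it may begin anywhere in the first copy of $w$; and the Fine--Wilf aside does no work --- the definition already quantifies over all periods $m < \abs{w}$.

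The paper proves the theorem by an entirely different route: it strengthens Aberkane and Currie's result that the Thue--Morse word contains circularly $5/2^+$-power-free factors of every length to the cyclic setting (\autoref{thm:thue-morse}), and it does so by machine verification. Since $\infw{t}$ is $2$-automatic, the predicate ``the length-$n$ factor of $\infw{t}$ at position $i$ is such that $w^4$ contains a factor of period $p < n$ and exponent in $(5/2,3]$'' is expressible in the decidable first-order logic handled by Walnut, and Walnut certifies that for every $n > 0$ some position $i$ avoids it. If you want a human-readable proof along your lines, the honest target is not an interval-covering induction but precisely this statement about a single fixed word: show directly that Thue--Morse contains, for each $n$, a length-$n$ factor that is cyclically $5/2^+$-power-free --- which requires executing, not postponing, the seam case analysis you identify as the main obstacle.
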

  
  The paper is structured as follows. In \autoref{sec:preliminaries}, we introduce notation and the main notions. We
  develop preliminary properties of cyclic abelian repetitions, and recall relevant results from the literature. In
  \autoref{sec:arbitrarily_long} we prove \autoref{thm:main2}. The binary and ternary cases were proved already in our
  previous work. The case of the four letter alphabet requires some technical developments. In
  \autoref{sec:bounds_for_aci}, we prove \autoref{thm:main}. The nonbinary results follow quite straightforwardly from
  results in the literature when combined with our observations in \autoref{sec:arbitrarily_long}. The upper bound for
  the binary case requires an involved construction, splitting into even and odd length words, and is the main
  technical part of the section. In \autoref{sec:ordinary_powers}, we extend known results on circular avoidance of
  ordinary powers to our cyclic setting. We then conclude with future directions of research in
  \autoref{sec:conclusions}.
  
  \section{Preliminaries}\label{sec:preliminaries}
  We use standard terminology and notation of combinatorics on words; see
  \cite{1983:combinatorics_on_words,2002:algebraic_combinatorics_on_words} for standard references. Let $A$ be an
  \emph{alphabet}, that is, a finite set of \emph{letters}, or \emph{symbols}. A \emph{word} over the alphabet $A$ is a
  sequence of letters of $A$ obtained by concatenation. We denote the empty word by $\varepsilon$. The length of a word
  $w$ is denoted by $\abs{w}$, and the symbol $\abs{w}_a$ stands for the number of occurrences of the letter $a$ in
  $w$. If $u$ and $v$ are two words, then we denote their concatenation by $uv$. If $w = uzv$, then $z$ is a
  \emph{factor} of $w$. If $u = \varepsilon$ (resp. $v = \varepsilon$), then $z$ is a \emph{prefix} (resp.
  \emph{suffix}) of $w$. A word $z$ is a \emph{proper prefix} (resp. \emph{proper suffix}) of $w$ if $z$ is a prefix
  (resp. suffix) of $w$ and $z \neq \varepsilon$ and $z \neq w$. If $z$ is a factor of $w$, then we say that $z$
  \emph{occurs} in $w$. If $w = uv$, then by $u^{-1} w$ and $w v^{-1}$ we respectively mean the words $v$ and $u$. If
  $w = u u \dotsm u$ where $u$ is repeated $N$ times, we write $w = u^N$ and say that $w$ is an (ordinary)
  \emph{$N$-power} of period $\abs{u}$. A \emph{fractional power} with exponent $R$, $R > 1$, is a word of the form
  $x^N x'$, where $x'$ is a prefix of $x$ and $R = N + \abs{x'}/\abs{x}$. The set of all words over $A$ is denoted by
  $A^*$. A \emph{language} is a subset of $A^*$. A word $w$ is \emph{primitive} if $w = u^n$ only when $n = 1$. If
  there exist words $x$ and $y$ such that $u = xy$ and $v = yx$, then we say that $u$ and $v$ are \emph{conjugate}. If
  $w = a_0 a_1 \dotsm a_{n-1}$, $a_i \in A$, then the \emph{reversal} of $w$ is the word $a_{n-1} \dotsm a_1 a_0$.

  An infinite word $\infw{w}$ is a mapping from $\N \to A$ (we index words from $0$). We refer to infinite words in
  boldface symbols. We denote the infinite repetition of a finite word $u$ by $u^\omega$.

  Let us define the \emph{Parikh mapping} $\psi\colon A^* \to \N^{\abs{A}}$ by setting
  $\psi(w) = (\abs{w}_a)_{a \in A}$. We refer to the vector $\psi(w)$ as the \emph{Parikh vector} of $w$.

  \begin{definition}
    Let $u, v \in A^*$. We say that $u$ and $v$ are \emph{abelian equivalent} if $\psi(u) = \psi(v)$.
  \end{definition}

  The following definition generalizes $N$-powers.

  \begin{definition}
    Let $u_0$, $\ldots$, $u_{N-1}$, $N \geq 2$, be abelian equivalent and nonempty words of common length $m$. Then
    their concatenation $u_0 \dotsm u_{N-1}$ is an \emph{abelian $N$-power} of period $m$ and exponent $N$. If a word
    (finite or infinite) $w$ does not contain as factors abelian $N$-powers, then we say that $w$ \emph{avoids abelian
    $N$-powers} or that $w$ is \emph{abelian $N$-free}.
  \end{definition}

  The next definition is central to this paper.

  \begin{definition}
    Let $w$ be a word. Then $w$ avoids abelian $N$-powers \emph{cyclically} if for each abelian $N$-power of period $m$
    occurring in the infinite word $w^\omega$, we have $m \geq \abs{w}$.
  \end{definition}

  \begin{example}
    Let $w = 1000100$. Then both $w$ and $w^2$ avoid abelian $5$-powers. However, the word $w^3$ has the abelian
    $5$-power $100 \cdot 010 \cdot 010 \cdot 001 \cdot 001$ of period $3$ as a prefix. Therefore $w$ does not avoid
    abelian $5$-powers cyclically. Since $w^4$ contains an abelian $6$-power of period $4$ beginning from the second
    letter, the word $w$ does not avoid abelian $6$-powers cyclically either. By a straightforward inspection, it can
    be seen that it avoids abelian $7$-powers cyclically. This fact is immediate from the example following the next
    lemma.
  \end{example}

  Notice that there might not be an integer $N$ such that a word $w$ avoids abelian $N$-powers cyclically. This happens
  when, e.g., $w$ is conjugate to an abelian power. The following result characterizes this situation.
  
  \begin{lemma}\label{lem:avoidance_characterization}
    A word $w$ avoids abelian $\abs{w}$-powers cyclically if and only if for each $k < \abs{w}$, $w^k$ is not conjugate
    to an abelian power with period less than $\abs{w}$. Further, if $w$ does not avoid abelian $\abs{w}$-powers
    cyclically, then it does not avoid abelian $N$-powers cyclically for any $N$.
  \end{lemma}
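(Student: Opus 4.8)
The plan is to recast cyclic avoidance purely in terms of those factors of $w^\omega$ whose length is a multiple of $\abs{w}$. Write $n = \abs{w}$. By definition, $w$ avoids abelian $n$-powers cyclically precisely when $w^\omega$ contains no abelian $n$-power of period $m < n$. The single observation driving everything is that a factor of $w^\omega$ of length $\ell n$ is always a conjugate of $w^\ell$, and conversely every conjugate of $w^\ell$ occurs as such a factor (at one of the $n$ starting positions modulo $n$); moreover, if such a factor $z$ begins at position $p$, then, since $\abs{z} = \ell n$ is a multiple of the period $n$ of $w^\omega$, the entire suffix $w^\omega[p..]$ equals $z^\omega$. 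I would state and prove this elementary fact first, as it is used in both directions.

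For the ($\Leftarrow$) implication I argue the contrapositive. Assume $w$ fails to avoid abelian $n$-powers cyclically, so $w^\omega$ contains an abelian $n$-power of period $m < n$. This factor has length $nm$, a multiple of $n$, hence by the observation it is a conjugate of $w^m$. Being an abelian $n$-power (exponent $n \ge 2$) of period $m < n$, it witnesses, with $k = m < n$, that $w^k$ is conjugate to an abelian power of period less than $n$; thus the right-hand condition fails.

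For the ($\Rightarrow$) implication, again via the contrapositive, suppose some $k < n$ has $w^k$ conjugate to an abelian power $z = u_0 \dotsm u_{N'-1}$ of period $m' < n$, where $N'm' = kn$ and each $\abs{u_i} = m'$. The conjugate $z$ occurs in $w^\omega$ at some position $p$, and since $\abs{z} = kn$ is a multiple of $n$ we get $w^\omega[p..] = z^\omega$. In $z^\omega$ the blocks of length $m'$ cut at multiples of $m'$ are exactly $u_0, u_1, \ldots, u_{N'-1}, u_0, \ldots$ repeating; by hypothesis these are all abelian equivalent, so $z^\omega$ has global abelian period $m'$. Any $N$ consecutive such blocks then form an abelian $N$-power of period $m' < \abs{w}$ occurring in $w^\omega$. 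Taking $N = n$ shows $w$ does not avoid abelian $n$-powers cyclically, and the same construction with arbitrary $N \ge 2$ proves the ``Further'' claim at once: failure to avoid abelian $\abs{w}$-powers cyclically forces failure for every exponent $N$.

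I expect the main obstacle to be the temptation to prove the ($\Rightarrow$) direction by propagating the local ``$N'$ abelian-equivalent blocks'' structure through prefix Parikh vectors, which leads to $\gcd(m',n)$-periodicity bookkeeping that does not obviously close unless $N'$ exceeds $n/\gcd(m',n)$. The clean route sidesteps this entirely by exploiting that $\abs{z} = kn$ is an exact multiple of the period of $w^\omega$, so the repetition $z^\omega$ is genuine and the abelian period is global rather than merely local; once that is established, extracting abelian $n$-powers — indeed abelian $N$-powers for all $N$ — of the forbidden period is immediate.
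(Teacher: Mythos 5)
Your proof is correct, and while it rests on the same core fact as the paper's --- that $w^\omega$ has period $\abs{w}$, so its factors of length $\ell\abs{w}$ are conjugates of $w^\ell$ --- the execution differs in both halves. In the substantive direction (conjugacy condition implies avoidance) the paper argues directly: after conjugating so that the abelian $\abs{w}$-power of period $m$ is a prefix of $w^\omega$, it truncates at $\ell = m\abs{w}/\gcd(m,\abs{w})$ to exhibit $w^{\ell/\abs{w}}$ as an abelian $(\ell/m)$-power, applies the hypothesis with $k = \ell/\abs{w}$, and then needs the extra step that $\ell/\abs{w} \geq \abs{w}$ also forces $m \geq \abs{w}$. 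Your contrapositive is leaner: the whole abelian $\abs{w}$-power has length $m\abs{w}$, automatically a multiple of $\abs{w}$, so it is itself a conjugate of $w^m$ and $k = m$ serves as the witness, with no $\gcd$/lcm bookkeeping at all (the paper's route merely yields a smaller witness exponent, which the statement does not require). In the other direction, the paper dismisses as ``immediate'' the claim that failure of the conjugacy condition defeats cyclic avoidance for every $N$, whereas you actually prove it: since $\abs{z} = k\abs{w}$ is a multiple of the period of $w^\omega$, any occurrence of the conjugate $z$ at position $p$ forces $w^\omega[p..] = z^\omega$, so the abelian period $m'$ of $z$ becomes global and produces abelian $N$-powers of period $m' < \abs{w}$ for all $N$ --- exactly what the ``Further'' claim needs. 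Your version is thus slightly more economical in the main direction and more complete in the direction the paper leaves to the reader.
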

  \begin{proof}
    If $w$ is such that $w^k$ is conjugate to an abelian power of period $m$ with $m < \abs{w}$, then it is immediate that $w$ does not
    avoid abelian $N$-powers cyclically for any $N$. Suppose that $w$ is such that for each $k < \abs{w}$, the word
    $w^k$ is not conjugate to an abelian power with period less that $\abs{w}$. Consider an abelian $\abs{w}$-power
    $u_0 \dotsm u_{\abs{w}-1}$ of period $m$ occurring in $w^\omega$. By conjugating $w$ if necessary, we may assume
    that $u_0 \dotsm u_{\abs{w}-1}$ is a prefix of $w^\omega$. Let $\ell = m\abs{w}/\gcd(m, \abs{w})$ so that
    $\smash[t]{u_0 \dotsm u_{\ell/m-1} = w^{\ell/\abs{w}}}$. The assumption then implies that $m \geq \abs{w}$ or
    $\ell/\abs{w} \geq \abs{w}$. The latter also implies that $m \geq \abs{w}$, so $w$ avoids abelian $\abs{w}$-powers
    cyclically.
  \end{proof}

  The previous example shows that the exponent $\abs{w}$ in the above characterization is tight. We apply the above
  characterization to a subclass of words avoiding abelian $\abs{w}$-powers cyclically.

  \begin{example}
    Let $w$ be a word over $A$ with $\gcd(\{\abs{w}_a : a \in A\}) = 1$. This is satisfied, e.g., when $w$ is not a
    power of a letter and $\abs{w}$ is a prime number. We claim that $w$ avoids abelian $\abs{w}$-powers cyclically. If
    $w^k$ is conjugate to an abelian $N$-power $u_0 \dotsm u_{N-1}$ with $N > k$, then
    \begin{equation*}
       k = \gcd(\{\abs{w^k}_a : a \in A\}) = \gcd(\{N \abs{u_0}_a : a \in A\}) = N \gcd(\{\abs{u_0}_a : a \in A\}) \geq N > k
    \end{equation*}
    which is impossible. Thus if $w^k$ is conjugate to an abelian power, the period of this abelian power must be at
    least $\abs{w}$. The claim follows from \autoref{lem:avoidance_characterization}.

    The condition $\gcd(\{\abs{w}_a : a \in A\}) = 1$ is not necessary: the word $001122$ avoids abelian $3$-powers
    cyclically and so avoids abelian $6$-powers cyclically.
  \end{example}

  The following lemma is elementary, but it simplifies the arguments in the rest of the paper drastically.

  \begin{lemma}\label{lem:reduction_short}
    Assume that $x^\omega$ contains an abelian $N$-power of period $m$ with $\frac12 \abs{x} \leq m < \abs{x}$. Then
    it contains an abelian $N$-power with period $\abs{x} - m$.
  \end{lemma}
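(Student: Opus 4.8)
The plan is to work inside $\infw{y}=x^\omega$ using abelian prefix sums. Write $n=\abs{x}$, $P=\psi(x)$, and let $\Psi(p)$ denote the Parikh vector of the length-$p$ prefix of $\infw{y}$, so that the Parikh vector of any factor is a difference $\Psi(b)-\Psi(a)$ and periodicity yields the single relation I will lean on, namely $\Psi(p+n)=\Psi(p)+P$. Let the given abelian $N$-power of period $m$ start at position $i$; since it reoccurs $n$ positions later, I may assume $i\ge N(n-m)$. Its defining property is that $\Psi(i+jm)=\Psi(i)+jQ$ for $j=0,\dots,N$, where $Q$ is the common Parikh vector of the $N$ blocks. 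The goal is to produce $N$ \emph{consecutive} abelian-equivalent blocks of length $m'=n-m$.

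A natural first attempt is complementation: consecutive blocks $u_j,u_{j+1}$ both have Parikh vector $Q$, and because $m\ge\frac12 n$ the factor $u_j u_{j+1}$ of length $2m\ge n$ has a length-$n$ prefix that is a conjugate of $x$ (hence of Parikh vector $P$); subtracting shows that the length-$m'$ prefix of each $u_{j+1}$ has Parikh vector $P-Q$. This already yields $N$ blocks of length $m'$ all sharing the Parikh vector $P-Q$, but they sit at positions spaced $m$ apart rather than adjacently, so they do not yet constitute an abelian power. Realigning these complementary blocks into consecutive positions is the crux of the argument.

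To realign them I reuse the periodicity relation on the prefix sums themselves. From $jm=jn-jm'$ together with $\Psi(q+jn)=\Psi(q)+jP$ applied at $q=i-jm'\ge 0$ I obtain $\Psi(i+jm)=\Psi(i-jm')+jP$; combining this with $\Psi(i+jm)=\Psi(i)+jQ$ rearranges to
\[
\Psi(i-jm')=\Psi(i)+j(Q-P),\qquad j=0,\dots,N,
\]
an arithmetic progression in $j$. Substituting $k=N-j$ and setting $i'=i-Nm'$ turns this into $\Psi(i'+km')=\Psi(i)+(N-k)(Q-P)$ for $k=0,\dots,N$, whence $\Psi(i'+(k+1)m')-\Psi(i'+km')=P-Q$ for every $k$. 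Thus the $N$ consecutive length-$m'$ blocks of $\infw{y}$ starting at $i'$ all have Parikh vector $P-Q$ and form an abelian $N$-power of period $m'=n-m$, as required. I expect the only real bookkeeping to be the index shift $k=N-j$ and the nonnegativity check $i-jm'\ge 0$; note that the computation itself uses only $m<n$, the hypothesis $m\ge\frac12\abs{x}$ merely guaranteeing $m'\le m$, i.e.\ that the extracted period is genuinely shorter.
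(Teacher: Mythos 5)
Your proof is correct, but it takes a genuinely different route from the paper's. The paper proceeds by induction on $N$: a base case $N=2$ obtained by factoring $x = u_0 s_0$, $u_1 = s_0 p$, $u_0 = p s_1$ and cancelling Parikh vectors to show $x$ ends with the abelian $2$-power $s_1 s_0$, and an induction step that passes to the conjugate $z = s_0 u_0$ whose $\omega$-power begins with $u_1 \dotsm u_{N-1}$. Your argument replaces this with a single global telescoping computation on prefix Parikh sums, using only the periodicity relation $\Psi(p+n)=\Psi(p)+P$ and the realignment identity $i+jm = (i-jm')+jn$; the index shift and the nonnegativity check $i \geq Nm'$ (secured by translating the power $n$ positions forward) are handled correctly. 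Two points of comparison are worth noting. First, your computation is slightly more general: as you observe, it needs only $m < n$, whereas the paper's base case factorization genuinely requires $m > \tfrac12\abs{x}$ (it needs $\abs{u_0} < \abs{x} < \abs{u_0 u_1}$); the hypothesis $\tfrac12\abs{x} \leq m$ serves in your version only to make the extracted period at most $m$. Second, the paper's inductive proof delivers a positional refinement that is silently used later (in the proof of \autoref{lem:8_greater_than_w}, where it is invoked that the extracted power $s_3 \dotsm s_0$ \emph{ends where the original power begins}); pleasingly, your construction recovers exactly the same fact, since your new power occupies positions $i-Nm'$ through $i-1$ and hence abuts the (shifted) original power at $i$ — it would be worth stating this explicitly if your proof were to be substituted into the paper. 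In short: same conclusion, a cleaner and marginally stronger one-shot argument in place of induction plus conjugation, at the cost of introducing the prefix-sum formalism.
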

  \begin{proof}
    There is nothing to prove when $m = \tfrac12 \abs{x}$, so we may assume that $m > \tfrac12 \abs{x}$. Without loss
    of generality, we may further assume that $x^\omega$ begins with an abelian $N$-power $u_0 \dotsm u_{N-1}$. We
    show, by induction on $N$, that if $x^{\omega}$ begins with an abelian $N$-power with period $m$, then
    the word $x^{N-1}$ ends with an abelian $N$-power $s_{N-1} \dotsm s_0$ of period
    $\abs{x} - m$.

    Consider first the base case $N = 2$. Since $m$ satisfies $\tfrac12 \abs{x} < m < \abs{x}$, we have
    $\abs{u_0} < \abs{x} < \abs{u_0 u_1}$. We may write $x = u_0 s_0$ and $u_1 = s_0 p$, where $s_0$ is the length
    $\abs{x} - m$ suffix of $x$ and $p$ is a prefix of $x$. Notice that $\abs{p} < m$, so we have $u_0 = p s_1$ for the
    suffix $s_1$ of $u_0$ of length $\abs{s_0}$. We find that
    \begin{equation*}
      0 = \psi(u_0) - \psi(u_1) = \psi(p s_1) - \psi(s_0 p) = \psi(s_1) - \psi(s_0).
    \end{equation*}
    Thus $s_1$ is abelian equivalent to $s_0$, and $x$ ends with the abelian $2$-power $s_1 s_0$.

    Let then $N > 2$. By proceeding as in the base case, we find that $x$ ends with the abelian $2$-power $s_1 s_0$ of
    period $\abs{x}-m$. Consider the conjugate $z = s_0 u_0$ of $x$: the word $z^\omega$ begins with the abelian power
    $u_1 \dotsm u_{N-1}$. By the induction hypothesis, $z^{N-2}$ ends with the abelian power $s_{N-1} \dotsm s_1$ of
    period $\abs{x}-m$. To conclude the proof, we notice that $x^{N-1} = u_0 z^{N-2} s_0$. The claim follows.
  \end{proof}

  \section{Values of \texorpdfstring{$\acii{k}$}{A\_infty(k)}}\label{sec:arbitrarily_long}
  The aim of this section is to prove \autoref{thm:main2}. Recall that $\acii{k}$ is the least $N$ such that there
  exist arbitrarily long words over a $k$-letter alphabet that avoid abelian $N$-powers cyclically. Our constructions
  for the main result of the section involve building arbitrarily long words with morphisms. Next we recall the
  definition of a morphism and related abelian avoidance results.
  
  A \emph{morphism} $\sigma\colon A^* \to B^*$ is a mapping such that $\sigma(uv) = \sigma(u)\sigma(v)$ for all words
  $u, v \in A^*$. The morphism $\sigma$ is \emph{prolongable} on a letter $a$ if $\sigma(a)$ has prefix $a$ and
  $\lim_{n\to\infty} \abs{\sigma^n(a)} = \infty$. Thus iterating $\sigma$ on the letter $a$ produces an infinite word
  that is a fixed point of $\sigma$. We denote this fixed point by $\sigma^\omega(a)$. The set
  $\{w\colon \text{$w$ is a factor of $\sigma^n(a)$ for some $n \geq 0$ and $a \in A$}\}$ is called the language of the
  morphism $\sigma$.

  \begin{definition}
    A morphism $\sigma\colon A^* \to B^*$ is \emph{abelian $N$-free} if $\sigma(w)$ is abelian $N$-free for all abelian
    $N$-free words $w$ in $A^*$.
  \end{definition}

  Notice that $\abs{\sigma(a)} \geq 1$ for each letter $a$ when $\sigma$ is a prolongable abelian $N$-free morphism.
  Indeed, if $\sigma(a) = \varepsilon$ and $\sigma(b) \neq \varepsilon$, then the abelian $N$-free word $bab^{N-1}$ has
  an abelian $N$-power in its image.

  There are several results in the literature concerning abelian $N$-free morphisms. For example, Dekking gave
  sufficient conditions for a morphism to be abelian $N$-free in
  \cite{1979:strongly_non-repetitive_sequences_and_progression-free_sets}. Later, Carpi extended the results of Dekking
  by giving sharper sufficient conditions for a morphism to be abelian $N$-free in
  \cite{1993:on_abelian_power-free_morphisms}. It is worth mentioning that Carpi's general conditions, for abelian
  $N$-freeness, are necessary and sufficient once the domain alphabet has cardinality at least $6$
  \cite[Proposition~2]{1993:on_abelian_power-free_morphisms}. It remains open to this day, whether the conditions
  characterize abelian $N$-free morphisms for smaller domain alphabets.

  Let us recall some morphisms that are abelian $N$-free for small values of $N$. The first is the morphism $\sigma_3$,
  found in \cite{2004:the_number_of_binary_words_avoiding_abelian}, that satisfies Dekking's conditions for the
  exponent $4$. The morphism $\sigma_3$ is different from the morphism of
  \cite[Thm.~1]{1979:strongly_non-repetitive_sequences_and_progression-free_sets}. We use this different morphism to
  reduce the amount of computations required to prove \autoref{thm:binary_case}. See
  \cite[Example~2]{2004:the_number_of_binary_words_avoiding_abelian} for the proof of the following lemma.

  \begin{lemma}\label{lem:modified_dekking}
    The morphism $\sigma_3\colon 0 \mapsto 0001$, $1 \mapsto 101$ satisfies Dekking's conditions for the exponent $4$.
    It is thus abelian $4$-free.
  \end{lemma}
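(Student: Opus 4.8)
The lemma really has two layers: the substantive content is to \emph{check} that $\sigma_3$ meets Dekking's hypotheses for exponent $4$, after which abelian $4$-freeness is an immediate consequence of Dekking's theorem \cite{1979:strongly_non-repetitive_sequences_and_progression-free_sets}, which guarantees that any morphism satisfying these conditions maps every abelian $4$-free word to an abelian $4$-free word. So the plan is to reduce the infinitary assertion ``$\sigma_3(w)$ is abelian $4$-free whenever $w$ is'' to Dekking's finite, mechanically verifiable criterion. That criterion amounts to an algebraic condition on the Parikh vectors of the images together with a bounded combinatorial verification governing how an abelian $4$-power may straddle the $\sigma_3$-block boundaries.

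The first and decisive step is the algebraic condition. I would record $\psi(\sigma_3(0)) = (3,1)$ and $\psi(\sigma_3(1)) = (1,2)$, so that the incidence matrix is $M = \begin{pmatrix} 3 & 1 \\ 1 & 2 \end{pmatrix}$ with $\det M = 5 \neq 0$. Because $\psi(\sigma_3(w)) = M\,\psi(w)$, invertibility of $M$ yields the abelian independence of the images: $\sigma_3(u) \sim \sigma_3(v)$ forces $\psi(u) = \psi(v)$, that is, $u \sim v$. This is exactly what prevents $\sigma_3$ from manufacturing abelian equivalences out of nothing, and it is the hypothesis that makes the rest of Dekking's argument run.

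The second step is the synchronization and bounded-period reduction underlying the combinatorial part of the criterion. Assuming for contradiction that some abelian $4$-free $w$ has an abelian $4$-power $U_0U_1U_2U_3$ of period $m$ occurring in $\sigma_3(w)$, I would split on the size of $m$. For large $m$, a bounded-discrepancy estimate forces the three internal cut points to lie within a fixed distance of genuine $\sigma_3$-block boundaries; absorbing these bounded boundary corrections and invoking the abelian independence above lets one read off an abelian $4$-power in $w$, contradicting the choice of $w$. For $m$ below Dekking's threshold the offending power is a short factor of $\sigma_3(w)$, so the claim collapses to inspecting the images of all abelian $4$-free words up to the bounded length needed to realize such a power; this is precisely the finite check carried out in \cite[Example~2]{2004:the_number_of_binary_words_avoiding_abelian}.

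The part I expect to be the main obstacle is the non-uniformity of $\sigma_3$, where $\abs{\sigma_3(0)} = 4$ but $\abs{\sigma_3(1)} = 3$: the block boundaries are spaced irregularly, so the synchronization estimate cannot simply snap the cut points to a fixed lattice, and one must instead control the Parikh discrepancy accumulated along each of the four blocks and show that it never exceeds the threshold past which the finite case analysis applies. Here the quantitative strength of $\det M = 5 \neq 0$ is what keeps a spurious abelian coincidence from surviving across these irregular boundaries, and confirming that the finite inspection genuinely exhausts the small-period regime is the delicate, computation-heavy endpoint of the argument.
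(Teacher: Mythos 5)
The paper offers no proof of this lemma at all: its entire justification is the pointer to \cite[Example~2]{2004:the_number_of_binary_words_avoiding_abelian}, where Dekking's conditions are verified for $\sigma_3$, after which abelian $4$-freeness follows from Dekking's theorem \cite{1979:strongly_non-repetitive_sequences_and_progression-free_sets}. You get this outer reduction right, but your reconstruction of what Dekking's conditions \emph{are} is wrong, and the substitute mechanism you propose would fail. Invertibility of the incidence matrix is not one of Dekking's conditions and is far too weak to carry the argument: for $\tau\colon 0 \mapsto 01$, $1 \mapsto 11$ the incidence matrix has determinant $2 \neq 0$, yet $\tau(011) = 011111$ contains the $4$-power $1111$ while $011$, being of length $3$, is trivially abelian $4$-free. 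The implication $\sigma_3(u) \sim \sigma_3(v) \Rightarrow u \sim v$ is true but beside the point, since an abelian $4$-power in $\sigma_3(w)$ need not be a concatenation of whole blocks; controlling straddling occurrences is the entire content of the criterion. Your large-period step --- ``a bounded-discrepancy estimate forces the cut points to lie within a fixed distance of genuine block boundaries'' --- is vacuous for this morphism, because $\abs{\sigma_3(0)} = 4$ and $\abs{\sigma_3(1)} = 3$ mean \emph{every} position is within distance $4$ of a boundary; what the descent to an abelian $4$-power of $w$ actually needs is exact phase alignment of the cut points, and no approximate discrepancy bound delivers that.

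What Dekking's criterion actually provides is exact algebraic synchronization, via the progression-free sets of his title. For $\sigma_3$ one takes the homomorphism $g\colon \{0,1\}^* \to \mathbb{Z}/5\mathbb{Z}$ given by $g(u) = \abs{u}_0 + 2\abs{u}_1 \bmod 5$, so that $g(\sigma_3(0)) = g(\sigma_3(1)) = 0$; this is where the number $5 = \det M$ genuinely enters, as the order of the group, not merely as a nonvanishing determinant. Since full blocks contribute $0$, the $g$-value at any cut point equals the $g$-value of a proper prefix of $0001$ or $101$, and these prefix values form the set $\{0,1,2,3\}$, omitting $4$. If $u_0 u_1 u_2 u_3$ occurs in $\sigma_3(w)$, the five cut points carry $g$-values in arithmetic progression with common difference $g(u_0)$; a nonzero difference in $\mathbb{Z}/5\mathbb{Z}$ would force all five residues, including the forbidden $4$, so $g(u_0) = 0$ and all cut points share one $g$-value. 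Dekking's remaining compatibility requirement on prefixes sharing a $g$-value (here, for instance, the value-$2$ prefixes $00$ and $1$ are both followed by the letter $0$, and the value-$3$ prefixes $000$ and $10$ are both followed by $1$) then pins the cuts to consistent offsets and lets the power descend, after bounded adjustments, to an abelian $4$-power in $w$. None of this machinery appears in your sketch, and Example~2 of \cite{2004:the_number_of_binary_words_avoiding_abelian} is precisely this verification of the conditions, not a small-period case analysis; so your proposal, as written, has a genuine gap exactly at its claimed decisive step.
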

  %
  
  By \cite[Thm.~2]{1979:strongly_non-repetitive_sequences_and_progression-free_sets}, the morphism
  $\sigma_4\colon 0 \mapsto 0012, 1 \mapsto 112, 2 \mapsto 022$ satisfies Dekking's conditions for the exponent $3$ and
  is thus abelian $3$-free. The above two morphisms are prolongable on the letter $0$. It thus follows that the
  infinite words $\sigma_3^{\omega}(0)$ and $\sigma_4^{\omega}(0)$ avoid abelian $4$-powers and $3$-powers
  respectively.

  Let $\pi\colon 0 \mapsto 1, 1 \mapsto 2, 2 \mapsto 3, 3 \mapsto 0$. Consider the morphism
  $\phi\colon \{0, 1, 2, 3\}^* \to \{0, 1, 2, 3\}^*$ defined by setting
  \begin{align*}
    \phi(0) &= 0120232123203231301020103101213121021232021 \cdot \\
            &\hspace{1.45em} 013010203212320231210212320232132303132120, \\
    \phi(1) &= \pi(\phi(0)), \\
    \phi(2) &= \pi(\phi(1)), \\
    \phi(3) &= \pi(\phi(2)).
  \end{align*}
  Keränen proved in the breakthrough paper \cite{1992:abelian_squares_are_avoidable_on_4_letters} that the fixed point
  $\phi^\omega(0)$ of the morphism $\phi$ is abelian $2$-free. See also his more recent paper
  \cite{2009:a_powerful_abelian_square-free_substitution} for additional morphisms with this property. Carpi simplified
  Keränen's proof in \cite{1993:on_abelian_power-free_morphisms} by showing that it satisfies Carpi's conditions for
  the exponent $2$, and it is thus abelian $2$-free. It is noteworthy that the morphism does not satisfy Dekking's
  conditions for the exponent $2$. 
  

  Let us prove a general result related to abelian $N$-power cyclical avoidance and abelian $N$-free morphisms.

  \begin{proposition}\label{prp:preserves_N_circularity}
    Let $\sigma\colon A^* \to B^*$ be an abelian $N$-free morphism, and assume that $w$ in $A^*$ is a word that avoids
    abelian $N$-powers cyclically. If $N > 2$, then $\sigma(w)$ avoids abelian $N$-powers cyclically. If $N=2$ and
    $\abs{w}\geq 2$, then $\sigma(w)$ avoids abelian $2$-powers cyclically.
    \end{proposition}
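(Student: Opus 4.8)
The plan is to argue by contradiction, exploiting the identity $\sigma(w)^\omega = \sigma(w^\omega)$. Suppose $\sigma(w)$ does not avoid abelian $N$-powers cyclically, so that $\sigma(w^\omega)$ contains an abelian $N$-power $P = U_0 \dotsm U_{N-1}$ of period $m' < \abs{\sigma(w)}$. The first thing I would record is the basic consequence of the hypothesis on $w$: since every abelian $N$-power in $w^\omega$ has period at least $\abs{w}$, hence length at least $N\abs{w}$, every factor of $w^\omega$ of length less than $N\abs{w}$ is abelian $N$-free. Because $\sigma$ is an abelian $N$-free morphism, the $\sigma$-image of every such short factor is abelian $N$-free as well.

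Next I would desubstitute $P$. Let $z$ be the shortest factor of $w^\omega$ whose image contains the chosen occurrence of $P$, and write $\sigma(z) = X P Y$, where $X$ is a proper prefix of the image of the first letter of $z$ and $Y$ a proper suffix of the image of its last letter. As $\sigma(z)$ contains the abelian $N$-power $P$, it is not abelian $N$-free, so by the previous paragraph $z$ is not abelian $N$-free either; thus $z$ contains an abelian $N$-power, which by the cyclic avoidance of $w$ has period at least $\abs{w}$, whence $\abs{z} \ge N\abs{w}$. Using that every length-$\abs{w}$ window of $w^\omega$ is a conjugate of $w$ and therefore has image of length exactly $\abs{\sigma(w)}$, partitioning $z$ into $\lfloor \abs{z}/\abs{w}\rfloor \ge N$ consecutive such windows yields $\abs{\sigma(z)} \ge N\abs{\sigma(w)}$.

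It remains to convert this length estimate into the period bound $m' \ge \abs{\sigma(w)}$, and this is the crux. From $\abs{P} = N m' = \abs{\sigma(z)} - \abs{X} - \abs{Y}$ one sees that the entire obstruction is concentrated in the two boundary fragments $X$ and $Y$. After applying \autoref{lem:reduction_short} to reduce to a minimal bad period $m' \le \tfrac12\abs{\sigma(w)}$, the crude estimate $\abs{X}+\abs{Y} \le 2\bigl(\max_a\abs{\sigma(a)}-1\bigr) < 2\abs{\sigma(w)}$ conflicts with $\abs{X}+\abs{Y} = \abs{\sigma(z)} - Nm' \ge \tfrac{N}{2}\abs{\sigma(w)}$ once $N \ge 4$, settling those exponents. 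For $N \in \{2,3\}$ this bound is too lossy, so the plan for the hard direction is not to bound $X,Y$ by brute force but to track the $N+1$ block boundaries $s + t m'$ of $P$ relative to the factorization of $\sigma(w^\omega)$ into letter-images. Since $\sigma$ preserves both abelian equivalence and length, an abelian $N$-power of period $q \ge \abs{w}$ in $w^\omega$ maps to an abelian $N$-power of period at least $\abs{\sigma(w)}$ in $\sigma(w^\omega)$; so the short power $P$ cannot descend from the periodic structure of $w^\omega$, and the goal is instead to align the blocks of $P$ with the image boundaries and pull back, producing an abelian $N$-power of period strictly less than $\abs{w}$ in $w^\omega$, contradicting the cyclic avoidance of $w$.

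I expect this synchronization step to be the main obstacle: because $\sigma$ need not be uniform, the block boundaries of $P$ need not fall on image boundaries, and one must use the abelian $N$-freeness of $\sigma$ itself — not a mere counting bound — to force the alignment and to guarantee that the pulled-back blocks are genuinely abelian equivalent of a common positive length below $\abs{w}$. The exponent $N = 2$ is exactly where this can degenerate when $\abs{w} = 1$: for instance $\sigma(a) = 0110$ creates the abelian square $10\cdot 01$ of period $2$ across the seam of $\sigma(a)^\omega$, which is why the hypothesis $\abs{w}\ge 2$ is needed there. Accordingly I would treat $N = 2$ separately, using $\abs{w}\ge 2$ to ensure that at least one genuine image boundary lies strictly inside each block, so that the pullback goes through.
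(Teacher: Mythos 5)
Your setup and your counting argument are correct and coincide with the paper's own first half: reduce to $m' \leq \tfrac12\abs{\sigma(w)}$ via \autoref{lem:reduction_short}, desubstitute the power minimally, use abelian $N$-freeness of $\sigma$ to find an abelian $N$-power in the preimage $z$, use cyclic avoidance of $w$ to get $\abs{z} \geq N\abs{w}$ and hence $\abs{\sigma(z)} \geq N\abs{\sigma(w)}$, and let the boundary estimate eliminate $N \geq 4$ (the paper phrases this as $N\abs{\sigma(w)} \leq (\ceil{N/2}+1)\abs{\sigma(w)}$, which is the same computation). The genuine gap is that for $N \in \{2,3\}$ --- precisely the cases the paper needs for Keränen's and Dekking's morphisms, and where the hypothesis $\abs{w} \geq 2$ enters --- you offer only a plan, and the plan aims in the wrong direction. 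You propose to ``force the alignment'' of the block boundaries of $P$ with letter-image boundaries and pull back a short abelian $N$-power into $w^\omega$; but for a non-uniform $\sigma$ no such alignment can be forced, and the paper never attempts it. Instead, equality in the counting forces $v_i = w$ for all $i$ and $z = w^N$; writing $\sigma(z) = p_0 u_0 \dotsm u_{N-1} s_{\ell-1}$, one shows $\abs{p_0 u_0} \geq \abs{\sigma(w)}$ (or symmetrically at the other end) and $\abs{p_0 u_0 u_1} \leq 2\abs{\sigma(w)}$, so that $\sigma(w) = p u_1 s$ with $p_0 u_0 = \sigma(w) p$. Then a Parikh-vector identity, $\psi(p_0) - \psi(p) = \psi(p) + \psi(s)$, has nonnegative entries, forcing $p$ to be a prefix of $p_0$ (and, for $N=3$, $s$ to be a suffix of $s_{\ell-1}$), whence the \emph{seam word} $s_{\ell-1} p_0 = s_{\ell-1}s^{-1} \cdot sp \cdot p^{-1}p_0$ is itself an abelian $N$-power lying inside $\sigma(a_{\ell-1} a_0)$. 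For $N=3$ this directly contradicts abelian $3$-freeness of $\sigma$ applied to the (trivially abelian $3$-free) two-letter word $a_{\ell-1}a_0$; for $N=2$ it forces $a_{\ell-1} = a_0$, i.e., an abelian square of period $1$ in $w^\omega$, which is where $\abs{w} \geq 2$ is used. So the abelian $2$/$3$-freeness of $\sigma$ is exploited not to synchronize boundaries but to rule out an abelian power straddling a single seam; that Parikh bookkeeping is the missing core of the proof, not a technicality.

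A smaller but concrete error: your motivating example for the necessity of $\abs{w} \geq 2$ does not work, since $\sigma(a) = 0110$ is not an abelian $2$-free morphism ($\sigma(a)$ already contains the abelian square $11$, while $a$ is abelian $2$-free). A correct witness is $a \mapsto 010$, or the paper's own remark after the proposition: Keränen's $\phi(0)$ begins and ends with $0$, so $\phi(0)^2$ contains $00$ even though the word $0$ avoids abelian $2$-powers cyclically.
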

  \begin{proof}
    Suppose for a contradiction that $\sigma(w)$ does not avoid abelian $N$-powers cyclically. Assume thus that
    $u_0 \dotsm u_{N-1}$ is an abelian $N$-power occurring in $\sigma(w)^\omega$ with $\abs{u_0} < \abs{\sigma(w)}$. By
    \autoref{lem:reduction_short}, we may assume that $\abs{u_0} \leq \floor{\abs{\sigma(w)}/2}$, so
    \begin{equation*}
      \abs{u_0 \dotsm u_{N-1}} \leq N \floor{\abs{\sigma(w)}/2} \leq N \abs{\sigma(w)}/2 \leq \ceil{N/2} \abs{\sigma(w)}.
    \end{equation*}
    We conclude that $u_0 \dotsm u_{N-1}$ is a factor of $\sigma(w)^{\ceil{N/2} + 1}$. Let $a_0 \dotsm a_{\ell-1}$,
    $a_i \in A$, be a factor of $w^{\lceil N/2 \rceil + 1}$ of minimal length for which $\sigma(a_0 \dotsm a_{\ell-1})$
    contains $u_0 \dotsm u_{N-1}$. We may write $\sigma(a_0 \dotsm a_{\ell-1}) = p_0 u_0 \dotsm u_{N-1} s_{\ell-1}$
    with $\sigma(a_0) = p_0 s_0$ and $\sigma(a_{\ell-1}) = p_{\ell-1} s_{\ell-1}$. Since $\sigma$ is abelian $N$-free,
    it follows that $a_0 \dotsm a_{\ell-1}$ contains an abelian $N$-power $v_0 \dotsm v_{N-1}$. As $w$ avoids abelian
    $N$-powers cyclically, we have $\abs{v_0} \geq \abs{w}$. Therefore the word $v_i$ has a conjugate of $w$ as a
    factor, so $\abs{\sigma(v_i)} \geq \abs{\sigma(w)}$ for all $i$. Thus
    \begin{equation*}
      N\abs{\sigma(w)} \leq \abs{\sigma(v_0 \dotsm v_{N-1})} \leq \abs{\sigma(a_0 \dotsm a_{\ell-1})} \leq \abs{\sigma(w)^{\ceil{N/2} + 1}} = (\ceil{N/2} + 1)\abs{\sigma(w)}.
    \end{equation*}
    This inequality holds only when $N \leq 3$ in which case equality is forced. For $N\geq 4$, this contradiction
    suffices for the claim. For the remainder of the proof, we operate under the assumption $N \leq 3$. Observe that
    the above computation shows that
    $\abs{\sigma(v_0 \dotsm v_{N-1})} = \abs{\sigma(a_0 \dotsm a_{\ell-1})} = N\abs{\sigma(w)}$. It follows that
    $v_i = w$ for all $i$ and $w^N = a_0 \dotsm a_{\ell-1}$.

    We claim that either $\abs{p_0 u_0} \geq \abs{\sigma(w)}$ or $\abs{u_{N-1} s_{\ell-1}} \geq \abs{\sigma(w)}$.
    Indeed, this is clear if $N = 2$ and if $N = 3$ and $\abs{p_0 u_0}, \abs{u_2 s_{\ell-1}} < \abs{\sigma(w)}$, then
    $\abs{u_1} > \abs{\sigma(w)}$ contrary to our assumptions. We assume that $\abs{p_0 u_0} \geq \abs{\sigma(w)}$; the
    other case is symmetric.

    Next we claim that $\abs{p_0 u_0 u_1} \leq 2\abs{\sigma(w)}$. If not, then
    $\abs{p_0} > \abs{\sigma(w)} \geq \abs{\sigma(a_0)}$ because
    $\abs{u_0 u_1} \leq 2 \floor{\abs{\sigma(w)/2}} \leq \abs{\sigma(w)}$ by our assumption. Since $p_0$ is a prefix of $\sigma(a_0)$,
    this is impossible. We may thus write $\sigma(w) = p u_1 s$ in such a way that $p_0 u_0 = \sigma(w) p$.

    Observe that $\psi(u_0) = \psi(\sigma(w)) - \psi(p_0) + \psi(p)$ and
    $\psi(u_1) = \psi(\sigma(w)) - \psi(p) - \psi(s)$. Since $\psi(u_0) = \psi(u_1)$, we conclude that
    $\psi(p_0) - \psi(p) = \psi(p) + \psi(s)$. Since the Parikh vector $\psi(p) + \psi(s)$ has nonnegative entries, we 
    see that $p$ is a prefix of $p_0$ (both words are prefixes of $\sigma(w)$). We conclude that the words
    $sp$ and $p^{-1}p_0$ are abelian equivalent. Thus by writing $sp_0 = sp \cdot p^{-1}p_0$, we see that $sp_0$ is an
    abelian $2$-power. Suppose now that $N = 2$. This implies that $s = s_{\ell-1}$, so $s_{\ell-1} p_0$ is an abelian
    $2$-power. Since $s_{\ell-1} p_0$ is a factor of $\sigma(a_{\ell-1} a_0)$, it must be that $a_{\ell-1} = a_0$ as
    $\sigma$ is abelian $2$-free. Therefore $w^\omega$ contains the abelian $2$-power $a_{\ell-1} a_0$ of period $1$.
    Since $w$ avoids abelian $2$-powers cyclically, we infer that $\abs{w} = 1$. This gives the latter claim.

    Suppose finally that $N = 3$. Then $s \sigma(w) = u_2 s_{\ell-1}$. We have
    $\psi(u_2) = \psi(\sigma(w)) + \psi(s) - \psi(s_{\ell-1})$. Since $\psi(u_0) = \psi(u_1) = \psi(u_2)$, we get
    $\psi(p_0) - \psi(p) = \psi(p) + \psi(s) = \psi(s_{\ell-1}) - \psi(s)$. Since $\psi(p) + \psi(s)$ has nonnegative
    entries, we conclude that $s$ is a suffix of $s_{\ell-1}$. We may now write
    $s_{\ell-1} p_0 = s_{\ell-1} s^{-1} \cdot s p \cdot p^{-1}p_0$ and conclude that $s_{\ell-1} p_0$ is an abelian
    $3$-power. Now $s_{\ell-1} p_0$ is a factor of $\sigma(a_{\ell-1} a_0)$, so the image of the abelian $3$-free word
    $a_{\ell-1} a_0$ contains an abelian $3$-power. This contradicts the fact that $\sigma$ is abelian $3$-free. This
    proves the former claim.
  \end{proof}
  
  Notice that for the case $N=2$ in the above proposition, the assumption $\abs{w}\geq 2$ cannot be omitted. Indeed,
  the word $0$ avoids abelian $2$-powers cyclically, but the word $\phi(0)$ does not (here $\phi$ is Keränen's
  morphism). This is evident from the fact that $\phi(0)$ begins and ends with the letter $0$, so $\phi(0)^2$ contains
  the abelian $2$-power $00$.
  
  The fact that $\acii{2} = 4$ and $\acii{3} = 3$ was already established in
  \cite[Thm.~8]{2020:all_growth_rates_of_abelian_exponents}. Our following proof simplifies and unifies the arguments
  due to the above proposition. The main task here is to prove that $\acii{4} = 2$, and we do this by iterating
  Keränen's morphism $\phi$ on suitable words.

  \begin{proof}[Proof of \autoref{thm:main2}]
    Recall the morphisms $\sigma_3$ and $\sigma_4$ defined above. They are abelian $4$-free and abelian $3$-free,
    respectively. Now the word $0$ avoids abelian $N$-powers cyclically for all $N \geq 2$. Thus
    \autoref{prp:preserves_N_circularity} implies that the words in the sequences $(\sigma_3^n(0))_n$ and
    $(\sigma_4^n(0))_n$ avoid abelian $N$-powers cyclically for $N = 4$ and $N = 3$, respectively. As the morphisms are
    prolongable on $0$, this establishes that $\acii{2} = 4$ and $\acii{3} = 3$.

    The word $01$ avoids abelian $2$-powers cyclically. Thus \autoref{prp:preserves_N_circularity} implies that the
    words in the sequence $(\phi^n(01))_n$ avoid abelian $2$-powers cyclically. Therefore $\acii{4} = 2$ as $\phi$ is
    prolongable on $0$.
  \end{proof}

  \section{Bounds for \texorpdfstring{$\aci{k}$}{A(k)}}\label{sec:bounds_for_aci}
  Recall that $\aci{k}$ is the least $N$ such that for all $n$ there exists a word of length $n$ over a $k$-letter
  alphabet that avoids abelian $N$-powers cyclically. This section is devoted to proving \autoref{thm:main}. When
  $k \geq 3$, the idea is simply to add a new letter to a word avoiding abelian $N$-powers cyclically. For $k = 2$,
  this idea does not work, and we provide an explicit construction of the required words.

  \begin{lemma}\label{lem:add_letter}
    Let $w$ be a word that avoids abelian $N$-powers and $\#$ a letter that does not appear in $w$. Then the word
    $w \#$ avoids abelian $N$-powers cyclically.
  \end{lemma}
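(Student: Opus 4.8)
The plan is to argue by contradiction: suppose $(w\#)^\omega$ contains an abelian $N$-power of period $m < \abs{w\#}$, and show that this either forces an abelian $N$-power inside $w$ itself or contradicts the rigid placement of the fresh letter $\#$. Write $n = \abs{w\#} = \abs{w} + 1$. The crucial structural fact is that $\#$ occurs in $(w\#)^\omega$ exactly at the positions congruent to $-1$ modulo $n$; that is, consecutive occurrences of $\#$ are spaced exactly $n$ apart, and every $\#$-free factor of $(w\#)^\omega$ is a factor of $w$.

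First I would reduce to the case of a short period. Among all abelian $N$-powers of period less than $n$ occurring in $(w\#)^\omega$, choose one whose period $m$ is minimal. If $m$ were to satisfy $\tfrac12 n < m < n$, then \autoref{lem:reduction_short}, applied with $x = w\#$, would produce an abelian $N$-power of period $n - m < \tfrac12 n < m$, contradicting minimality. Hence $m \le \tfrac12 n$, so in particular $2m - 1 < n$. Fix such a minimal abelian $N$-power $u_0 \dotsm u_{N-1}$ of period $m$; since all blocks are abelian equivalent, they contain the same number $t$ of occurrences of $\#$.

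The argument then splits according to whether $t = 0$ or $t \ge 1$. If $t = 0$, the whole factor $u_0 \dotsm u_{N-1}$ is $\#$-free and therefore lies inside a single copy of $w$ in $(w\#)^\omega$; this exhibits an abelian $N$-power inside $w$, contradicting the hypothesis that $w$ avoids abelian $N$-powers. If $t \ge 1$, I would exploit the spacing of the $\#$'s: since each block $u_i$ has length $m < n$ while consecutive $\#$'s are $n$ apart, a block contains at most one $\#$, forcing $t = 1$. Letting $q_i$ denote the position of the unique $\#$ in $u_i$, every $q_i$ is congruent to $-1$ modulo $n$, so each difference $q_{i+1} - q_i$ is a positive multiple of $n$. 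On the other hand, as $u_i$ and $u_{i+1}$ are adjacent blocks of length $m$, one has $1 \le q_{i+1} - q_i \le 2m - 1 < n$, leaving no room for a positive multiple of $n$ — a contradiction (here $N \ge 2$ guarantees at least two blocks).

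I expect the main obstacle to be the period range $\tfrac12 n \le m < n$, where a single block is long enough to reach across the $n$-spacing of the $\#$'s and the pigeonhole count breaks down; this is exactly the difficulty that \autoref{lem:reduction_short} is designed to remove, letting me assume $m \le \tfrac12 n$ so that each block meets the arithmetic progression of $\#$-positions in at most one point. The remaining work is bookkeeping: checking that $\#$-free factors of $(w\#)^\omega$ are factors of $w$, and pinning down the interval $\{1, \dots, 2m-1\}$ in which consecutive $\#$-positions of the power can differ.
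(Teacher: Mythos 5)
Your proof is correct and follows essentially the same route as the paper: reduce to period $m \leq \tfrac12\abs{w\#}$ via \autoref{lem:reduction_short}, then use the $\abs{w\#}$-spacing of the occurrences of $\#$ to conclude that the power is $\#$-free and hence a factor of $w$. The paper dispatches your case $t \geq 1$ slightly more directly---since $\abs{u_0 u_1} \leq \abs{w\#}$, the word $u_0 u_1$ contains at most one occurrence of $\#$, whereas abelian equivalence of $u_0$ and $u_1$ would force at least two---but this is the same pigeonhole idea as your congruence argument on the positions $q_i$.
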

  \begin{proof}
    Set $\infw{w} = (w \#)^\omega$, and assume for a contradiction that an abelian $N$-power $u_0 \dotsm u_{N-1}$ such
    that $\abs{u_0} < \abs{w \#}$ occurs in $\infw{w}$. By \autoref{lem:reduction_short}, we may assume that
    $\abs{u_0} \leq \tfrac12 \abs{w \#}$. Thus $\abs{u_0 u_1} \leq \abs{w \#}$ and $\#$ can occur in $u_0 u_1$ at most
    once. Thus $\#$ does not occur in $u_0$, and so $u_0 \dotsm u_{N-1}$ must be a factor of $w$. This contradicts the
    assumption that $w$ avoids abelian $N$-powers.
  \end{proof}

  \begin{theorem}\label{thm:main_3_4_5}
    We have $3 \leq \aci{3} \leq 4$, $2 \leq \aci{4} \leq 3$, and $\aci{k} = 2$ for $k \geq 5$.
  \end{theorem}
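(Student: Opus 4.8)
The plan is to derive every upper bound from \autoref{lem:add_letter} together with the infinite abelian-power-free words already recalled in this section, and to obtain the lower bounds from the known optimality of Keränen's and Dekking's constructions.

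For the upper bounds, I fix a target length $n$, take a prefix $w$ of length $n-1$ of a suitable infinite word, and append a fresh letter $\#$ to obtain $w\#$ of length $n$. Since a prefix of an abelian $N$-free word is again abelian $N$-free, \autoref{lem:add_letter} then immediately gives that $w\#$ avoids abelian $N$-powers cyclically (the degenerate case $n=1$ is covered by taking $w=\varepsilon$). Concretely, for $\aci{3}\leq 4$ I would use a prefix of the binary word $\sigma_3^\omega(0)$, which is abelian $4$-free, over the three-letter alphabet $\{0,1,\#\}$; for $\aci{4}\leq 3$ a prefix of the ternary word $\sigma_4^\omega(0)$, which is abelian $3$-free, over $\{0,1,2,\#\}$; and for $\aci{k}\leq 2$ with $k\geq 5$ a prefix of Keränen's abelian $2$-free word $\phi^\omega(0)$ over $\{0,1,2,3,\#\}$. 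As enlarging the alphabet never hurts, $\aci{k}$ is non-increasing in $k$, so the case $k\geq 5$ reduces to $k=5$.

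For the lower bounds, the inequality $\aci{k}\geq 2$ holds for all $k$ by definition, since an abelian $N$-power requires $N\geq 2$; this already settles $2\leq\aci{4}$ and, combined with the upper bound above, yields $\aci{k}=2$ for $k\geq 5$. The one substantive bound is $\aci{3}\geq 3$. Here I would first record the elementary observation that cyclic avoidance of abelian $2$-powers forces ordinary abelian $2$-freeness: if $w$ contains an abelian square as a factor, then that square has period $m$ with $2m\leq\abs{w}$, whence $m<\abs{w}$, and it already occurs in $w^\omega$, so $w$ fails to avoid abelian $2$-powers cyclically. Consequently, any ternary word avoiding abelian $2$-powers cyclically must be abelian square-free. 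Since every ternary word of length $8$ contains an abelian square, no ternary word of length $8$ avoids abelian $2$-powers cyclically; thus the defining ``for all $n$'' condition for $\aci{3}=2$ already fails at $n=8$, giving $\aci{3}\geq 3$.

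I do not expect a genuine obstacle: the content lies entirely in correctly invoking \autoref{lem:add_letter} and the standard constructions. The only point requiring care is the lower bound $\aci{3}\geq 3$, where one must first notice that cyclic avoidance is strictly stronger than ordinary avoidance and only then feed in Keränen's tightness result for abelian squares on three letters.
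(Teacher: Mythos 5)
Your proposal is correct and follows essentially the same route as the paper: upper bounds via \autoref{lem:add_letter} applied to factors of $\sigma_3^\omega(0)$, $\sigma_4^\omega(0)$, and $\phi^\omega(0)$ with an appended fresh letter (the paper handles $n=1$ by noting the word $0$ works, while you take $w=\varepsilon$; both are fine), and lower bounds from the trivial $\aci{k}\geq 2$ together with the finite check that every ternary word of length $8$ contains an abelian square. Your explicit observation that cyclic avoidance of abelian $2$-powers implies ordinary abelian $2$-freeness merely spells out a step the paper leaves implicit.
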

  \begin{proof}
    It is straightforward to verify that every ternary word of length $8$ contains an abelian $2$-power, so
    $\aci{3} \geq 3$. Obviously $\aci{k} \geq 2$ for $k \geq 4$.

    Recall the abelian $4$-free morphism $\sigma_3$ from \autoref{lem:modified_dekking}. Taking $w$ to be a factor of
    $\sigma_3^\omega(0)$ of length $n - 1$, we see by an application of \autoref{lem:add_letter} that the word $w \#$
    of length $n$ over the alphabet $\{0, 1, \#\}$ avoids abelian $4$-powers cyclically. In addition, the word $0$
    avoids abelian $4$-powers cyclically, so $\aci{3} \leq 4$.

    The morphisms $\sigma_4$ and $\phi$, as defined in \autoref{sec:arbitrarily_long}, are abelian $3$-free and abelian
    $2$-free, respectively. Similar to the previous paragraph, we see that $\aci{4} \leq 3$ and $\aci{5} \leq 2$.
  \end{proof}

  Our next aim is to prove the following theorem.

  \begin{theorem}\label{thm:binary_case}
    We have $5 \leq \aci{2} \leq 8$.
  \end{theorem}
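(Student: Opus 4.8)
The plan is to treat the two inequalities separately, since they are of quite different natures: the lower bound $\aci{2} \geq 5$ is a finite verification, whereas the upper bound $\aci{2} \leq 8$ requires an explicit family of words, one for every length. For the lower bound I would show that \emph{no} binary word of length $8$ avoids abelian $4$-powers cyclically. This already gives $\aci{2} \geq 5$: avoiding abelian $(N+1)$-powers cyclically is implied by avoiding abelian $N$-powers cyclically (an abelian $(N+1)$-power contains an abelian $N$-power of the same period), so a failure at length $8$ for the exponent $4$ is simultaneously a failure for the exponents $2$ and $3$, ruling out $N \in \{2,3,4\}$. Concretely, for each candidate $w$ of length $8$ it suffices to search $w^\omega$ for abelian $4$-powers of period $m < 8$; by \autoref{lem:reduction_short} one may restrict to $m \in \{1,2,3,4\}$, so each such power has length at most $16$ and is already visible in $w^3$. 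This is a finite check over the (few, up to symmetry) binary words of length $8$, and it confirms each of them contains such a power.

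For the upper bound I would produce, for every $n$, a binary word $w_n$ of length $n$ whose cyclic closure $w_n^\omega$ has no abelian $8$-power of period $m < n$. The backbone of the construction is an infinite binary word that is abelian $4$-free in the ordinary (factor) sense and, crucially, has small abelian discrepancy, for instance a suitably chosen word in the language of the morphism $\sigma_3$ of \autoref{lem:modified_dekking}. The word $w_n$ is then cut out of a power of this backbone word, with the cut adjusted so that the Parikh vector of $w_n$ and the running letter counts along $w_n^\omega$ stay tightly controlled. Because this control depends on parities — the density $\abs{w_n}_1/n$ and the position where the discrepancy function returns to its mean behave differently according to $n \bmod 2$ — the construction splits naturally into an even case and an odd case.

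The verification that $w_n^\omega$ avoids abelian $8$-powers of period $m < n$ then proceeds by separating the possible periods into three ranges. Large periods $\tfrac12 n \leq m < n$ are disposed of immediately by \autoref{lem:reduction_short}, which replaces them by periods $n - m \leq \tfrac12 n$. Small periods $m \leq \tfrac15 n$ are handled by a counting argument: an abelian $8$-power of such a short period meets some single copy of $w_n$ in at least $\floor{n/m} - 1 \geq 4$ consecutive length-$m$ blocks, and these four mutually abelian-equivalent blocks form an abelian $4$-power that is a factor of $w_n$, contradicting the abelian $4$-freeness of the backbone. I expect these two ranges to be routine.

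The genuinely hard part, and the main obstacle, is the middle range $\tfrac15 n < m \leq \tfrac12 n$, where neither \autoref{lem:reduction_short} nor the counting argument applies, because an abelian $8$-power of such a period need not localise inside one copy of $w_n$. Here one must argue directly with the abelian sum function $f(i) = \abs{x_0 \cdots x_{i-1}}_1$ of $\infw{x} = w_n^\omega$: an abelian $8$-power of period $m$ at position $j$ is exactly a nine-term arithmetic progression $f(j), f(j+m), \dots, f(j+8m)$. Since $f(i)$ equals $i\,\abs{w_n}_1/n$ up to a small bounded error, such a long progression can occur only if the step $m$ forces the error term to be periodic in a manner that the explicit, parity-dependent structure of $w_n$ was designed to preclude. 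Making this final step precise, uniformly in $n$ and for both parities, is where the bulk of the technical effort lies.
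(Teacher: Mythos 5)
Your lower bound coincides with the paper's: it too deduces $\aci{2}\geq 5$ from the finite verification that no binary word of length $8$ avoids abelian $4$-powers cyclically, and your monotonicity remark and the reduction of periods to $m\leq 4$ via \autoref{lem:reduction_short} are correct. Likewise, your large-period reduction ($\tfrac12 n\leq m<n$ replaced by $n-m$) and your small-period counting argument are sound and mirror reductions the paper also makes (one small slip: when $8m\leq n$ the power straddles at most one junction and a run of at least $4$ of its $8$ blocks lies in one copy, while the bound $\floor{n/m}-1\geq 4$ is the right count only when an entire copy of $w_n$ lies inside the power; either way the conclusion that an abelian $4$-power sits inside an abelian-$4$-free word holds).

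The genuine gap is that your upper bound has no actual construction and no mechanism for the middle range $\tfrac15 n<m\leq\tfrac12 n$, which is precisely where all the content of the paper's proof lives. "A factor of the language of $\sigma_3$ with the cut adjusted so the letter counts stay tightly controlled" does not determine a word, and the $9$-term arithmetic-progression reformulation of an abelian $8$-power in the sum function $f$ cannot close this range on its own: bounded discrepancy is perfectly compatible with long progressions. Concretely, for even $n$ and $m=n/2$, an abelian $8$-power of period $m$ in $w_n^\omega$ is exactly the statement that some conjugate of $w_n$ splits into two abelian-equivalent halves; along the two-residue orbit $j, j+n/2 \pmod n$ the periodic error term of $f$ need only take two equal values, so no bound on it forbids this. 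The paper's solution is structural rather than metric: it takes $w$ a prefix of $\sigma_3^\omega(0)$, sets $\OL{w}$ to be the \emph{reversal of the complement} $h(w)$, and uses $f=\OL{w}\diamond w$ for odd lengths and $g_1=\OL{w}w$, $g_2=\OL{w}^\bullet w$ for even lengths. Complementation makes every long factor of $w$ light and every long factor of $\OL{w}$ heavy (\autoref{lem:light}, \autoref{lem:imbalance}), which at a stroke forbids abelian-equivalent medium blocks from lying in the two halves and forces every block of a putative $8$-power to straddle a junction; reversal produces the cancellation $\alpha_i=z\OL{\beta_i}$, so $\Delta(u_i)=\Delta(z)$ is tightly bounded (\autoref{lem:split_bounds_1}, \autoref{lem:split_bounds_2}), and the final case $m=\abs{w}$ is killed by the parity identity $\Delta(z_0)=\Delta(\OL{z_0})=-\Delta(z_0)$, which forces $\Delta(z_0)=0$ and hence a parity of $\abs{w}$ contradicting the hypothesis — this identity, not any "mean-return" behaviour of the discrepancy, is why the even case splits into $g_1$ ($\abs{w}$ odd) and $g_2$ ($\abs{w}$ even, one letter flipped). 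Without an analogous device your plan has nothing to exclude medium periods — you say so yourself — so as written the proposal establishes the lower bound but not $\aci{2}\leq 8$.
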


  We prove \autoref{thm:binary_case} by explicitly constructing the required words for each length. Our construction is
  inspired by the proof of \cite[Thm.~4]{2012:abelian_repetitions_in_partial_words}. Consider the morphism
  $\sigma\colon 0 \mapsto 0001, 1 \mapsto 101$ of \autoref{lem:modified_dekking} and the prefix $w$ of its fixed point
  $\sigma^\omega(0)$ of length $n$. Let $h\colon 0 \mapsto 1, 1 \mapsto 0$ and $\OL{w}$ be the reversal of $h(w)$. Set
  \begin{align*}
    f   &= \OL{w} \! \diamond \! w, \\
    g_1 &= \OL{w} w, \text{ and} \\
    g_2 &= \OL{w}^\bullet w,
  \end{align*}
  where $\diamond \in \{0, 1\}$ and $\OL{w}^\bullet$ is obtained from $\OL{w}$ by changing its final letter to $0$. We
  further define $\infw{F} = f^\omega$, $\infw{G}_1 = g_1^\omega$, and $\infw{G}_2 = g_2^\omega$. Recall that the words
  $w$ and $\OL{w}$ do not contain abelian $4$-powers as factors. This follows from \autoref{lem:modified_dekking} and
  the discussion following it. Furthermore, $\OL{w}^{\bullet}$ avoids abelian $5$-powers.
  
  In \autoref{ssec:odd_case}, we prove that $f$ avoids abelian $8$-powers cyclically for all $n$.
  \autoref{ssec:even_case} establishes that $g_1$ avoids abelian $8$-powers cyclically if $n$ is odd and $g_2$ avoids
  abelian $8$-powers cyclically when $n$ is even. These results establish that $\aci{2} \leq 8$.
  \autoref{thm:binary_case} follows from the observation that there does not exist a binary word of length $8$ avoiding
  abelian $4$-powers cyclically. However, such a word exists in the circular sense (see the introduction): $00010011$.

  The approach taken in \autoref{ssec:odd_case} is identical to that of \autoref{ssec:even_case}. Several of the
  structural lemmas carry over with very minor modifications. In particular, we encourage the reader to notice that the
  presence of the symbol $\diamond$ does not often play any role. We shall make use of the following notion.
 
  \begin{definition}
    Let $u$ be a binary word over the alphabet $\{0, 1\}$, and define $\Delta(u) = \abs{u}_0 - \abs{u}_1$. If
    $\Delta(u) > 0$ (resp. $\Delta(u) < 0$, $\Delta(u) = 0$), then $u$ is \emph{light} (resp. \emph{heavy},
    \emph{neutral}).
  \end{definition}

  Let us first establish some properties of the fixed point $\sigma^\omega(0)$ of $\sigma$. In particular, we consider
  properties of short factors of $\sigma^\omega(0)$, which can be verified with the help of a computer.

  The word $w$ below refers to the construction of the words $f$, $g_1$, and $g_2$.

  \begin{lemma}\label{lem:light}
    If $u$ is a factor of $w$ such that $\abs{u} \geq 29$, then $u$ is light.
  \end{lemma}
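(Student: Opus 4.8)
The plan is to prove the equivalent statement that every factor $u$ of the fixed point $\sigma^\omega(0)$ with $\abs{u} \geq 29$ is light; since the word $w$ of the construction is a prefix of $\sigma^\omega(0)$, this covers all factors of $w$. I will use that $\Delta$ is additive, so $\Delta(xy) = \Delta(x) + \Delta(y)$, and first record how $\Delta$ transforms under $\sigma$. From $\abs{\sigma(v)}_0 = 3\abs{v}_0 + \abs{v}_1$ and $\abs{\sigma(v)}_1 = \abs{v}_0 + 2\abs{v}_1$ one obtains the identity $\Delta(\sigma(v)) = 2\abs{v}_0 - \abs{v}_1 = \abs{v}_0 + \Delta(v)$. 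The crucial feature is the extra nonnegative term $\abs{v}_0$: applying $\sigma$ increases $\Delta$ by the number of $0$'s, and this is the mechanism that forces long factors to be light.

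The argument is then a desubstitution induction on $\abs{u}$. Since $\sigma^\omega(0) = \sigma(\sigma^\omega(0))$, any occurrence of a factor $u$ can be cut at the block boundaries it contains and written as $u = p\,\sigma(v)\,s$, where $v$ is a factor of $\sigma^\omega(0)$, $p$ is a proper suffix of some $\sigma(a)$, and $s$ is a proper prefix of some $\sigma(b)$. Inspecting the proper prefixes and suffixes of $\sigma(0) = 0001$ and $\sigma(1) = 101$ shows that each of them satisfies $\Delta \geq -1$, so $\Delta(p) + \Delta(s) \geq -2$. Combined with the transfer identity this gives $\Delta(u) = \Delta(p) + \Delta(s) + \abs{v}_0 + \Delta(v) \geq \abs{v}_0 + \Delta(v) - 2$.

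To close the induction I use the length estimate $\abs{u} = \abs{p} + \abs{\sigma(v)} + \abs{s} \leq 4\abs{v} + 6$, so $\abs{u} \geq 122$ forces $\abs{v} \geq 29$, and also $\abs{v} < \abs{u}$ (as $\abs{u} \geq \abs{\sigma(v)} \geq 3\abs{v}$), so strong induction applies. For such $u$ the induction hypothesis makes $v$ light, whence $\abs{v}_0 = (\abs{v}+\Delta(v))/2 \geq 15$ and $\Delta(v) \geq 1$, and the displayed inequality yields $\Delta(u) \geq 15 + 1 - 2 = 14 > 0$. It then remains to settle the finitely many factors of length $29 \leq \abs{u} \leq 121$; these all occur in $\sigma^k(0)$ for a suitable (small) $k$ and can be enumerated and checked to be light by computer, which is the computational ingredient alluded to in the statement.

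The only conceptual subtlety is that $\sigma$ is not a Pisot substitution: its incidence matrix $\left(\begin{smallmatrix}3&1\\1&2\end{smallmatrix}\right)$ has both eigenvalues $(5\pm\sqrt5)/2$ larger than $1$, so the imbalance $\Delta$ of factors is not bounded; it grows, but only sublinearly, while the mean of $\Delta$ grows linearly. This is exactly why no uniform balancedness shortcut is available and why the renormalization is the right tool: each desubstitution trades $u$ for a strictly shorter factor $v$ while injecting the linear surplus $\abs{v}_0$, which dominates the bounded boundary loss. The threshold $29$ is sharp, as the same computation exhibits a non-light factor of length $28$.
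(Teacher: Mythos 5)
Your proof is correct, but it takes a genuinely different route from the paper's. The paper's argument is a one-step concatenation: it computer-checks that every factor of the language of $\sigma$ with length in $[29,58]$ is light, and then observes that any factor of length at least $58$ can be cut into blocks whose lengths all lie in $[29,58]$; since $\Delta$ is additive and each block is itself a factor, the whole word is light. Your desubstitution induction --- writing $u = p\,\sigma(v)\,s$, using the transfer identity $\Delta(\sigma(v)) = \abs{v}_0 + \Delta(v)$ and the boundary bound $\Delta(p)+\Delta(s)\geq -2$ --- is sound: the length estimates $\abs{u}\leq 4\abs{v}+6$ and $\abs{u}\geq 3\abs{v}$ do make the strong induction close once the base range $29\leq\abs{u}\leq 121$ is verified, and the degenerate cases of the decomposition ($v$ or $p$ or $s$ empty, $u$ inside a single image) cannot occur, or are harmless, at these lengths. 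The trade-off is clear: the paper's finite check is smaller ($29$--$58$ versus $29$--$121$) and its argument is two lines, while yours is heavier but structural --- it isolates the injected surplus $\abs{v}_0$ as the mechanism forcing long factors to be light, and it yields more, namely $\Delta(u)\geq 14$ for $\abs{u}\geq 122$ and, upon iteration, that $\Delta$ tends to infinity along factors; this comes close to subsuming Lemma~\ref{lem:delta_lower_bound_6} ($\Delta(u)\geq 6$ for $\abs{u}\geq 64$), which the paper proves separately by the same concatenation trick with six blocks. Your sharpness aside is also correct and can be made explicit: $0110$ is a factor of the fixed point (it occurs at positions $10$--$13$ of $\sigma^2(0)=000100010001101$), and inside $\sigma^2(0110)$ one finds the length-$28$ factor $1101\cdot\sigma^2(1)\sigma^2(1)\cdot 0001$ with $\Delta = -2+0+0+2 = 0$, which is neutral, so the threshold $29$ cannot be lowered. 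One last remark: both proofs ultimately rest on a finite computer verification, so your approach does not eliminate the computational ingredient of the paper --- it only relocates and enlarges it in exchange for the structural explanation.
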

  \begin{proof}
    It is straightforward to check that if $u$ is a factor of the language of $\sigma$ such that
    $29 \leq \abs{u} \leq 2 \times 29 = 58$, then $u$ is light. Any factor of length at least $58$ can be written as a
    concatenation of words of length between $29$ and $58$, so it follows that all factors $u$ with $\abs{u} \geq 29$
    are light.
  \end{proof}

  \begin{lemma}\label{lem:difference_upper_bound}
    If $u$ is a factor of $w$ such that $\abs{u} < 29$, then $\Delta(u) \geq -3$.
  \end{lemma}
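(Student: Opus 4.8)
The plan is to reduce the statement to a finite verification, exactly in the spirit of the proof of \autoref{lem:light}. Since $w$ is a prefix of the fixed point $\sigma^\omega(0)$, every factor $u$ of $w$ with $\abs{u} < 29$ is a factor of the language of $\sigma$, and there are only finitely many such factors. Hence it suffices to enumerate all factors of $\sigma^\omega(0)$ of length at most $28$ and to confirm that each satisfies $\Delta(u) = \abs{u}_0 - \abs{u}_1 \geq -3$. Unlike \autoref{lem:light}, where a length-doubling trick propagates lightness from a bounded window to all longer factors, here the quantity $\Delta$ is not monotone under extension, so I would check \emph{all} lengths $1, 2, \ldots, 28$ rather than a single band.

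First I would fix a concrete finite word that already contains every factor of $\sigma^\omega(0)$ of length at most $28$. Because $\sigma$ is prolongable on $0$ and $\sigma(0)$ begins with $0$, each iterate $\sigma^n(0)$ is a prefix of $\sigma^{n+1}(0)$, so the sets of length-$m$ factors form an increasing chain whose union is the set of length-$m$ factors of the fixed point. Since $\min\{\abs{\sigma(0)}, \abs{\sigma(1)}\} = 3$, a factor of length at most $28$ of $\sigma(v)$ is contained in $\sigma$ applied to a factor $v$ of length at most $\ceil{28/3} + 2$, so desubstitution sends short factors to short factors; iterating this a fixed number of times shows that every factor of length $\le 28$ of $\sigma^\omega(0)$ already occurs in $\sigma^4(0)$ (whose length is $200 > 28$). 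I would therefore take the factor set of $\sigma^4(0)$ as the enumeration.

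With this finite factor set in hand, the remaining step is purely mechanical: evaluate $\Delta$ on each factor and record that the minimum value attained is $-3$, so that $\Delta(u) \geq -3$ holds throughout. Combined with \autoref{lem:light}, which gives $\Delta(u) > 0$ for $\abs{u} \geq 29$, this yields the uniform lower bound $\Delta(u) \geq -3$ for every factor $u$ of $w$, which is the real point of the lemma.

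The only genuine obstacle is ensuring completeness of the enumeration, that is, that no short factor occurring deeper in $\sigma^\omega(0)$ is missed; the lemma itself carries no deeper mathematical content. The safest way to settle this is to verify directly that the set of factors of length exactly $28$ of $\sigma^4(0)$ coincides with that of $\sigma^5(0)$. By the desubstitution bound above, stabilization of the length-$28$ factor sets at this step forces stabilization for all larger iterates, so all factors of length $\le 28$ of the fixed point indeed appear in $\sigma^4(0)$. Everything beyond this bookkeeping is a routine count best delegated to a computer.
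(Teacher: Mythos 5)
Your proposal is correct and matches the paper's approach: the paper's entire proof is ``This is a finite check,'' i.e.\ precisely the finite enumeration you describe. Your additional care about completeness of the enumeration (desubstitution bounds and stabilization of the length-$28$ factor sets between $\sigma^4(0)$ and $\sigma^5(0)$) is sound bookkeeping that the paper leaves implicit.
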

  \begin{proof}
    This is a finite check.
  \end{proof}

  \begin{lemma}\label{lem:delta_lower_bound_6}
    If $u$ is a factor of $w$ such that $\abs{u} \geq 64$, then $\Delta(u) \geq 6$.
  \end{lemma}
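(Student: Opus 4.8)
The plan is to mirror the proof of \autoref{lem:light}: reduce the infinite family of factors to a finite computer verification by exploiting the fact that the bound $\Delta(u) \ge 6$ is invariant under concatenation, since the contribution of each block to $\Delta$ is nonnegative.

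First I would verify by computer that every factor $u$ of the language of $\sigma$ with $64 \le \abs{u} \le 2 \times 64 = 128$ satisfies $\Delta(u) \ge 6$. This is a finite check: since $\sigma^\omega(0)$ is generated by the morphism $\sigma$, all of its factors of length at most $128$ already occur in a sufficiently long prefix $\sigma^n(0)$, so the set of such factors is finite and can be enumerated, and one computes $\Delta$ for each. Because $w$ is a prefix of $\sigma^\omega(0)$, every factor of $w$ is a factor of the language of $\sigma$, so this check applies to the factors named in the statement.

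Next I would treat an arbitrary factor $u$ with $\abs{u} \ge 64$ by decomposing it into blocks of the verified lengths. The underlying arithmetic observation is that every integer $L \ge 64$ can be written as a sum of integers each lying in the interval $[64, 128]$: if $64 \le L \le 128$ take $L$ itself, and if $L \ge 129$ peel off a block of length $64$ and recurse on $L - 64 \ge 65$. Applying this to $L = \abs{u}$ writes $u = u_1 \dotsm u_k$ with each $\abs{u_i} \in [64, 128]$, and each $u_i$ is again a factor of $w$. The base check gives $\Delta(u_i) \ge 6$ for every $i$, and since $\Delta$ is additive over concatenation we obtain $\Delta(u) = \sum_{i=1}^{k} \Delta(u_i) \ge 6k \ge 6$.

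The only genuine obstacle is the scale of the finite verification rather than any conceptual difficulty: one must choose the prefix $\sigma^n(0)$ long enough to contain every length-$128$ factor of $\sigma^\omega(0)$, and then confirm that the minimum of $\Delta$ over this large but finite set of factors is at least $6$. The structural point that legitimizes the reduction — exactly as in \autoref{lem:light} — is that each block contributes a nonnegative amount to $\Delta$, so splitting a long factor into several verified blocks can only increase the difference; a single block of length in $[64,128]$ already secures the claimed bound.
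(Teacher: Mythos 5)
Your argument is correct, and while its skeleton (a finite computer check plus additivity of $\Delta$ over a block decomposition) matches the paper's, the decomposition itself is genuinely different. The paper verifies by computer that $\Delta(u) \geq 6$ for all factors with $64 \leq \abs{u} < 174$, and for $\abs{u} \geq 174 = 6 \times 29$ it factorizes $u$ into six blocks of length at least $29$, each of which is light by \autoref{lem:light}, so that $\Delta(u) \geq 6$. You instead verify the target bound itself on the window $[64,128]$ and tile any longer factor by blocks whose lengths lie in $[64,128]$; your peeling argument is sound precisely because $128 = 2 \times 64$, just as the paper's own reduction in \autoref{lem:light} relies on $58 = 2 \times 29$. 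Your route is self-contained --- it never invokes \autoref{lem:light} --- its finite check covers a strictly smaller range than the paper's $[64,174)$ (and is indeed confirmed by it), and it even yields the stronger conclusion $\Delta(u) \geq 6k$ when $u$ is tiled by $k$ blocks. What the paper's version buys is economy of per-block information: it recycles the already-established lightness fact ($\Delta > 0$ for every factor of length at least $29$), so the asymptotic step needs nothing beyond \autoref{lem:light}, at the cost of a somewhat larger one-off computation. One point you rightly flag but would need to make precise in an actual verification: to certify that your enumeration of factors of length at most $128$ is complete, you need a stopping criterion --- for instance, iterate $\sigma$ until the set of factors of length $128$ stabilizes, or use that $\sigma$ is primitive, hence $\sigma^\omega(0)$ is uniformly recurrent with an effective recurrence bound; this is the same standard computational issue implicit in the paper's own ``finite check'' steps, so it is not a gap in your proof.
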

  \begin{proof}
    Let $u$ be a factor of $w$ such that $\abs{u} \geq 6 \times 29 = 174$ and factorize $u = u_0 \dotsm u_5$ in such a
    way that $\abs{u_i} \geq 29$ for all $i$. Since $\abs{u_i} \geq 29$, we have $\Delta(u_i) > 0$ by
    \autoref{lem:light}. Consequently, we see that $\Delta(u) = \sum_{i=0}^5 \Delta(u_i) \geq 6$. It can be verified
    with the help of a computer that if $64 \leq \abs{u} < 174$, then $\Delta(u) \geq 6$.
  \end{proof}

  \subsection{Odd Length Case}\label{ssec:odd_case}
  The aim of this subsection is to prove the following proposition.

  \begin{proposition}\label{prp:odd_8_power}
    The word $f$ avoids abelian $8$-powers cyclically.
  \end{proposition}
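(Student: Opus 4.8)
The plan is to exploit the fact that over the binary alphabet two equal-length words are abelian equivalent precisely when they share the same value of $\Delta$. An abelian $8$-power of period $m$ in $\infw{F} = f^\omega$ is therefore nothing but a run of eight consecutive length-$m$ factors $u_0, \dots, u_7$ with $\Delta(u_0) = \dots = \Delta(u_7)$. Applying \autoref{lem:reduction_short} to $x = f$, any such power with $n+1 \le m \le 2n$ produces one of period $2n+1-m \le n$; since $\abs{f} = 2n+1$ is odd, it thus suffices to rule out the periods $1 \le m \le n$.

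Next I would record the heavy/light structure of $\infw{F}$. Because $\OL{w}$ is the reversal of $h(w)$ and $h$ exchanges the two letters, $\OL{w}$ is heavy exactly when $w$ is light; hence $\infw{F}$ consists of alternating heavy blocks $\OL{w}$ and light blocks $w$ of common length $n$, joined by the single letter $\diamond$ on one side and directly on the other. Through the reversal–complement symmetry, the three estimates \autoref{lem:light}, \autoref{lem:difference_upper_bound}, \autoref{lem:delta_lower_bound_6} give sign information for factors of $w$ and the opposite sign information for factors of $\OL{w}$: a long factor lying inside a heavy region is strictly heavy and a long factor inside a light region is strictly light. The goal then becomes to show that among the eight equally spaced blocks at least one is forced to satisfy $\Delta < 0$ and at least one to satisfy $\Delta > 0$, which contradicts $\Delta(u_0) = \dots = \Delta(u_7)$.

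To locate such blocks I would track the residues $\theta_j = (a + jm) \bmod (2n+1)$, where $a$ is the starting position; these form an arithmetic progression of step $m$ modulo $\abs{f}$, and $u_j$ lies entirely inside a heavy (resp. light) region precisely when $\theta_j$ falls in $[0, n-m]$ (resp. $[n+1, 2n+1-m]$). For $m$ bounded away from $n$ these target intervals are long, so I would use the spread of the progression $(\theta_j)$ together with \autoref{lem:light} (for $m \ge 29$) or \autoref{lem:delta_lower_bound_6} (for $m \ge 64$) to exhibit one block deep inside a heavy region and one deep inside a light region. For small periods $m < 29$ I would instead bound $\abs{\Delta(u_j)}$ via \autoref{lem:difference_upper_bound} and invoke the abelian $4$-freeness of $w$ and $\OL{w}$ to forbid four consecutive equal-$\Delta$ blocks inside a single region.

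The main obstacle is the regime where $m$ is comparable to $n$, together with the blocks straddling the two junctions $\OL{w}\diamond w$ and $w\OL{w}$. There a block may split into substantial heavy and light parts, so neither ``contained in a region'' nor ``dominated by one region'' is automatic and its $\Delta$ can have either sign; in particular one must also exclude the possibility that every block straddles a junction and is neutral. The delicate step is to show, using the precise thresholds $29$ and $64$ in the $\Delta$-lemmas together with the positions $\theta_j$, that a genuinely heavy and a genuinely light block still occur among the eight: the letter $\diamond$ and the separator-free $w\OL{w}$ junction contribute only a bounded error, which the gap of at least $6$ furnished by \autoref{lem:delta_lower_bound_6} absorbs. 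Finally, the finitely many lengths $n$ lying below the thresholds of the $\Delta$-lemmas are dispatched by a direct computation.
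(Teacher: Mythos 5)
Your reduction via \autoref{lem:reduction_short} to periods $m \leq \abs{w}$, your small-period argument (for $m < 29$, four consecutive blocks must lie in a single region, producing an abelian $4$-power in $w$ or $\OL{w}$), and your intermediate-period argument (a straddling block forces one neighbour inside $\OL{w}$ and one inside $w$, which cannot be abelian equivalent by the light/heavy imbalance) all match the paper's Lemmas \ref{lem:imbalance} and \ref{lem:odd_half} in substance. The genuine gap is the regime $\tfrac12\abs{w} < m \leq \abs{w}$, which you correctly flag as the main obstacle but then propose to settle by the same sign-forcing idea: that ``a genuinely heavy and a genuinely light block still occur among the eight,'' with the gap of $6$ from \autoref{lem:delta_lower_bound_6} absorbing boundary errors. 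That step fails. When $m > \tfrac12\abs{w}$, \emph{every} one of the eight blocks straddles a junction (this is exactly what the paper establishes via \autoref{lem:cube_1}), and for straddling blocks the split-bound lemmas (\ref{lem:split_bounds_1} and \ref{lem:split_bounds_2}) yield only one-sided estimates such as $\Delta(u_i) \geq -3$ or $\Delta(u_i) \leq 3$. A configuration in which the blocks alternate between types A and B with a common small value of $\Delta$ is perfectly consistent with all available $\Delta$-estimates; no block is forced deep into either region, so no heavy/light contradiction exists, and there is no ``bounded error'' for the gap of $6$ to absorb.

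What the paper does in this regime (\autoref{lem:8_greater_than_w}) is a positional argument, not a sign argument: once the $u_i$ are known to alternate between types A and B, it pairs them as $v_j = u_{2j}u_{2j+1}$, obtaining an abelian $4$-power of period $2m > \tfrac12\abs{f}$; it then applies \autoref{lem:reduction_short} a second time, to this doubled power, producing an abelian $4$-power $s_3 s_2 s_1 s_0$ of the complementary period $M = \abs{f} - 2m > 0$ which — by inspecting the proof of that lemma — ends exactly where $v_0 \dotsm v_3$ begins. Tracking the prefixes via $\beta_{2i-1} = \beta_{2i+1} s_i$ then localizes $s_3 s_2 s_1 s_0$ entirely inside $\OL{w}$, contradicting the abelian $4$-freeness of $\OL{w}$. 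This doubling-and-complementing device is the heart of the proof and is absent from your proposal; without it, or an equivalent replacement, the case $\tfrac12\abs{w} < m \leq \abs{w}$ remains open. Your remaining ingredients (the residue bookkeeping $\theta_j$, the computer verification for $\abs{w} < 145$) are sound but do not touch this case.
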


  While the letter $\diamond$ can be freely chosen to be either $0$ or $1$, we use the symbol as a marker in the proofs
  that follow. \autoref{prp:odd_8_power} can be verified to be true when $\abs{w} < 5 \times 29 = 145$. Thus in what
  follows, we implicitly assume that $\abs{w} \geq 145$.

  Let $u_0 \dotsm u_{N-1}$ be an abelian $N$-power occurring in $\infw{F}$, and consider a word $u_i$ for some $i$. We
  classify the word $u_i$ as follows.

  \begin{enumerate}[(A)]
    \item $u_i = \alpha_i \! \diamond \! \beta_i$ for a suffix $\alpha_i$ of $\OL{w}$ and a prefix $\beta_i$ of $w$;
    \item $u_i = \alpha_i \beta_i$ for a nonempty suffix $\alpha_i$ of $w$ and a nonempty prefix $\beta_i$ of $\OL{w}$.
  \end{enumerate}

  In the proofs, we implicitly use the above factorizations using the words $\alpha_i$ and $\beta_i$. Notice that it is
  not necessary for $u_i$ to have type A or B. See \autoref{fig:types} for clarification.

  \begin{figure}
  \centering
  \begin{tikzpicture}
    \tikzstyle{ln}=[line width=0.8pt,black]
    \tikzstyle{arc}=[line width=0.8pt,black]
    \tikzstyle{point}=[line width=0.5pt,black,fill=white]
    \tikzstyle{point2}=[line width=0.8pt,black]

    \newcommand{\sqdiamond}[1][fill=black]{\mathbin{\tikz [x=1ex,y=1ex,line width=.1ex,line join=round] \draw [#1] (0,.5) -- (.5,1) -- (1,.5) -- (.5,0) -- (0,.5) -- cycle;}}
    \tikzmath{\width = 3; \aoffset = 0.5; \awidth = 1.8; \ayoffset = -0.2;}

    \draw[ln,|-|] (0,0) -- (\width,0);
    \draw[ln,-|] (\width,0) -- ({2*\width},0);
    \draw[ln,-|] ({2*\width},0) -- ({3*\width},0);

    \node at ({3*\width + 0.5},0) {\footnotesize $\dotsm$};
    \node at (-0.5,0) {\footnotesize $\mathbf{F}\colon$};
    \node at ({0.5*\width},0) {$\sqdiamond[fill=white]$};
    \node at ({1.5*\width},0) {$\sqdiamond[fill=white]$};
    \node at ({2.5*\width},0) {$\sqdiamond[fill=white]$};

    \node at ({\aoffset + 0.5*\awidth},0.7) {\footnotesize $u_0$};
    \node at ({\aoffset + 1.5*\awidth},0.7) {\footnotesize $u_1$};
    \node at ({\aoffset + 2.5*\awidth},0.7) {\footnotesize $u_2$};

    \draw[arc,-] (\aoffset,0) to[bend left] ({\aoffset + \awidth},0);
    \draw[arc,-] ({\aoffset+\awidth},0) to[bend left] ({\aoffset + 2*\awidth},0);
    \draw[arc,-] ({\aoffset+2*\awidth},0) to[bend left] ({\aoffset + 3*\awidth},0);

    \draw [thick,decoration={brace,mirror},decorate] (\aoffset,\ayoffset) -- ({0.5*\width},\ayoffset) node [midway,align=center,yshift=-10] {\footnotesize $\alpha_0$};
    \draw [thick,decoration={brace,mirror},decorate] ({0.5*\width},\ayoffset) -- ({\aoffset + 1*\awidth},\ayoffset) node [midway,align=center,yshift=-10] {\footnotesize $\beta_0$};
    \draw [thick,decoration={brace,mirror},decorate] ({\aoffset + \awidth},\ayoffset) -- ({1*\width},\ayoffset) node [midway,align=center,yshift=-10] {\footnotesize $\alpha_1$};
    \draw [thick,decoration={brace,mirror},decorate] ({1*\width},\ayoffset) -- ({\aoffset + 2*\awidth},\ayoffset) node [midway,align=center,yshift=-10] {\footnotesize $\beta_1$};
    \draw [thick,decoration={brace,mirror},decorate] ({\aoffset + 2*\awidth},\ayoffset) -- ({1.5*\width},\ayoffset) node [midway,align=center,yshift=-10] {\footnotesize $\alpha_2$};
    \draw [thick,decoration={brace,mirror},decorate] ({1.5*\width},\ayoffset) -- ({\aoffset + 3*\awidth},\ayoffset) node [midway,align=center,yshift=-10] {\footnotesize $\beta_2$};

    \draw [thick,decoration={brace,mirror},decorate] (0,-1) -- ({0.5*\width},-1) node [midway,align=center,yshift=-10] {\footnotesize $\OL{w}$};
    \draw [thick,decoration={brace,mirror},decorate] (0.5*\width,-1) -- ({1*\width},-1) node [midway,align=center,yshift=-10] {\footnotesize $w$};
  \end{tikzpicture}
  \caption{A depiction of the structure of $\infw{F}$. The words $u_0$ and $u_2$ are of type A and $u_1$ is of type B.}
  \label{fig:types}
  \end{figure}

  The following simple observation is very important in the subsequent proofs.

  \begin{lemma}\label{lem:imbalance}
    Suppose that $u$ and $v$ are words of common length such that $\abs{u} \geq 29$. If $u$ is a factor of $w$ and $v$
    is a factor of $\OL{w}$, then $u$ and $v$ are not abelian equivalent.
  \end{lemma}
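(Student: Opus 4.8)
The plan is to separate factors of $w$ from factors of $\OL{w}$ by means of the imbalance statistic $\Delta$, whose behaviour on long factors of $w$ is already controlled by \autoref{lem:light}. The only new ingredient required is an understanding of how $\Delta$ acts on factors of $\OL{w}$, and for this I would simply unwind the definition $\OL{w} = \text{(reversal of } h(w))$.

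First I would record two elementary facts about the Parikh map. Reversal preserves Parikh vectors, since it only reorders letters; and the complementation $h\colon 0 \mapsto 1,\ 1 \mapsto 0$ swaps the two coordinates, so that $\abs{h(y)}_0 = \abs{y}_1$ and $\abs{h(y)}_1 = \abs{y}_0$ for every word $y$, whence $\Delta(h(y)) = -\Delta(y)$. Because $h$ is a letter-to-letter morphism, the factors of $h(w)$ are exactly the words $h(y)$ with $y$ a factor of $w$; taking reversals, the factors of $\OL{w}$ are precisely the reversals of such words $h(y)$. Consequently every factor $v$ of $\OL{w}$ is the reversal of $h(y)$ for some factor $y$ of $w$ with $\abs{y} = \abs{v}$, and combining the two facts yields $\Delta(v) = \Delta(h(y)) = -\Delta(y)$.

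With this in hand the lemma follows immediately. Given $u$ a factor of $w$ and $v$ a factor of $\OL{w}$ of common length at least $29$, \autoref{lem:light} applies to $u$ and gives $\Delta(u) > 0$; it also applies to the associated factor $y$ of $w$, whose length equals $\abs{v} \geq 29$, giving $\Delta(y) > 0$ and hence $\Delta(v) = -\Delta(y) < 0$. If $u$ and $v$ were abelian equivalent we would have $\psi(u) = \psi(v)$ and therefore $\Delta(u) = \Delta(v)$, contradicting the strict inequalities $\Delta(u) > 0 > \Delta(v)$. Thus $u$ and $v$ cannot be abelian equivalent.

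I expect no genuine obstacle here: all the content lies in the bookkeeping of how $\Delta$ transforms under the complement-and-reverse construction of $\OL{w}$, after which \autoref{lem:light} does the work. The one point to state carefully is that the length bound $29$ must be invoked both for $u$ and for the preimage factor $y$ of $w$, which is legitimate precisely because $\abs{u} = \abs{v} = \abs{y}$.
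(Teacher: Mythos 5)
Your proof is correct and is essentially the paper's own argument: the paper likewise applies \autoref{lem:light} to $u$ and to $\OL{v}$ (your factor $y$) to conclude that $v$ is heavy while $u$ is light, so they cannot be abelian equivalent. You merely spell out the complement-and-reverse bookkeeping that the paper compresses into the single observation that $\OL{v}$ is a factor of $w$.
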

  \begin{proof}
    If $u$ is a factor of $w$ and $\abs{u} \geq 29$, then $u$ is light by \autoref{lem:light}. If $v$ is a factor of
    $\OL{w}$, then $\OL{v}$ is a factor of $w$ and must thus also be light. This means that $v$ is heavy, so $u$ and
    $v$ cannot be abelian equivalent.
  \end{proof}

  Next we show that any abelian $8$-power occurring in $\infw{F}$ must have a relatively large period.

  \begin{lemma}\label{lem:odd_half}
    If an abelian $8$-power of period $m$ occurs in $\infw{F}$, then $m > \tfrac12 \abs{w}$.
  \end{lemma}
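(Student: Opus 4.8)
The plan is to argue by contradiction: suppose $\infw{F}$ contains an abelian $8$-power $u_0 \dotsm u_7$ of period $m \leq \tfrac12\abs{w}$, and write $n = \abs{w}$, so the power occupies an interval of length $8m \leq 4n < 2\abs{f}$. I would exploit the periodic structure of $\infw{F} = (\OL{w}\diamond w)^\omega$: it alternates \emph{heavy} regions (copies of $\OL{w}$) and \emph{light} regions (copies of $w$), each of length $n$, with the two kinds separated by boundary markers, namely the symbol $\diamond$ and the junction between a copy of $w$ and the following copy of $\OL{w}$. Consecutive markers are spaced at distances $n$ and $n+1$ alternately. Call a block $u_i$ \emph{clean} if it contains no boundary marker, so that a clean block is a factor of a single copy of $w$ or of $\OL{w}$.

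First I would bound the number of dirty blocks. Since any four consecutive inter-marker gaps sum to $4n+2 > 4n \geq 8m$, the interval spanned by the power meets at most four markers; as each marker lies in at most one block, at least four of $u_0,\dotsc,u_7$ are clean. The two mechanisms for a contradiction are then: (i) four consecutive clean blocks lying in one region, which would form an abelian $4$-power that is a factor of $w$ or of $\OL{w}$, impossible since these words are abelian $4$-free (\autoref{lem:modified_dekking} and the ensuing discussion); and (ii) a clean factor of $w$ and a clean factor of $\OL{w}$ of equal length at least $29$, impossible by \autoref{lem:imbalance}. I would also record the key structural fact that the clean blocks cannot occupy two regions of the same type: any full region lying strictly between two clean blocks is traversed entirely by the power and, having length $n \geq 2m$, contains a clean block of its own; hence consecutive clean-containing regions are adjacent and thus of opposite parity.

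It remains to show that every $m \leq \tfrac12 n$ triggers mechanism (i) or (ii), which I would do by splitting on the size of $m$. If $m > \tfrac15 n$, then $m > 29$ since $n \geq 145$; the at-least-four clean blocks either include four consecutive ones in a single region, giving (i), or occupy at least two regions, which by the structural fact are of opposite parity, yielding a clean factor of $w$ and a clean factor of $\OL{w}$ of common length $m \geq 29$ and hence (ii). If instead $m \leq \tfrac15 n$, then $8m \leq \tfrac{8}{5}n < 2n+1$, so the span is shorter than one period and meets at most two markers; if it meets two, the full region strictly between them has length at least $n \geq 5m$ and therefore contains at least four consecutive clean blocks, while if it meets at most one, then at least seven clean blocks lie in at most two regions, again forcing four consecutive clean blocks in one region. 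In both sub-cases we reach mechanism (i).

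I expect the main obstacle to be precisely this last, intermediate regime of short periods: when $m$ is too small for \autoref{lem:light} to certify a clean block as heavy or light (blocks shorter than $29$), yet the span is long enough to cross region boundaries so that the blocks are not automatically confined to one region. The heart of the matter is the counting showing that, for such short periods, a sub-period span must still contain a fully covered region packed with at least four consecutive clean blocks; this reduces everything to the abelian $4$-freeness of $w$ and $\OL{w}$ and sidesteps the heavy/light classification entirely, while the complementary large-period regime is handled cleanly by \autoref{lem:imbalance}.
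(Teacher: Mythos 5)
Your proposal is correct and takes essentially the same route as the paper: both argue by contradiction via the same two mechanisms, namely an abelian $4$-power forced inside a single copy of $w$ or $\OL{w}$ (contradicting their abelian $4$-freeness) when the period is short, and two abelian-equivalent blocks of common length at least $29$ lying in a copy of $w$ and a copy of $\OL{w}$ respectively (contradicting \autoref{lem:imbalance}) when the period is long. The only difference is bookkeeping: the paper localizes around one block of type A or B and splits at $m < 29$ versus $m \geq 29$, whereas you count boundary markers globally and split at $m \leq \tfrac15\abs{w}$ versus $m > \tfrac15\abs{w}$, which under the standing assumption $\abs{w} \geq 145$ amounts to the same threshold.
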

  \begin{proof}
    Assume for a contradiction that $\infw{F}$ contains an abelian $8$-power $u_0 \dotsm u_7$ such that
    $\abs{u_0} \leq \tfrac12 \abs{w}$. There exists $u_i$ such that $u_i$ is of type A or B because $w$ and $\OL{w}$
    avoid abelian $4$-powers. We suppose that $u_i$ is of type A; the case that it is of type B is analogous. Suppose
    first that $\abs{u_0} < 29$. If $i \leq 3$, then $u_{i+1} u_{i+2} u_{i+3} u_{i+4}$ is a factor of $w$ because
    $\abs{w} \geq 5 \times 29 = 145$. This is impossible as $w$ avoids abelian $4$-powers. Thus $i \geq 4$, but then
    $u_{i-4} u_{i-3} u_{i-2} u_{i-1}$ is an abelian $4$-power occurring in $\OL{w}$. We conclude that
    $\abs{u_0} \geq 29$.

    Assume that $1 \leq i \leq 6$, so that $u_{i-1}$ and $u_{i+1}$ exist. Since $\abs{u_0} \leq \tfrac12 \abs{w}$, the
    word $u_{i+1}$ ends before the end of $w$ and the word $u_{i-1}$ begins after the beginning of $\OL{w}$. Therefore
    $u_{i-1}$ is a factor of $\OL{w}$ and $u_{i+1}$ is a factor of $w$. \autoref{lem:imbalance} shows that $u_{i-1}$
    and $u_{i+1}$ cannot be abelian equivalent; a contradiction. Suppose then that $i = 0$. Then $u_1$ is a factor of
    $w$ since $\abs{u_0} \leq \tfrac12 \abs{w}$. Since $w$ avoids abelian $4$-powers, the word $u_1 u_2 u_3 u_4$ cannot
    be a factor of $w$. Thus $u_2$, $u_3$, or $u_4$ is of type B. Consequently, one of the words $u_3$, $u_4$, and
    $u_5$ must be a factor $\OL{w}$. This again contradicts \autoref{lem:imbalance}. The case $i = 7$ is similar.
  \end{proof}


  The following two lemmas are technical lemmas that indicate what values $\Delta(u_i)$ may take for a $u_i$ of type A
  or B depending on the lengths of the corresponding words $\alpha_i$ and $\beta_i$.


  \begin{lemma}\label{lem:split_bounds_1}
    Suppose that the word $\infw{F}$ contains an abelian $N$-power $u_0 \dotsm u_{N-1}$. Say $u_i$ is of type B and
    write $u_i = \alpha_i \beta_i$.
    \begin{enumerate}[(i)]
      \item If $\abs{\alpha_i} \geq \abs{\beta_i}$, then $\Delta(u_i) \geq -3$.
      \item If $\abs{\alpha_i} \leq \abs{\beta_i}$, then $\Delta(u_i) \leq 3$.
    \end{enumerate}
  \end{lemma}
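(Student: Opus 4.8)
The plan is to exploit the defining relation between $\OL{w}$ and $w$: since $\OL{w}$ is the reversal of $h(w)$, and $h$ exchanges $0$ and $1$ while reversal preserves Parikh vectors, the prefix of $\OL{w}$ of length $\ell$ and the suffix of $w$ of length $\ell$ have opposite $\Delta$-values. This single identity is the engine that makes the two halves of $u_i$ cancel up to the length of the shorter half, leaving only a factor of $w$ (or of $\OL{w}$) whose $\Delta$ is then controlled by \autoref{lem:light} and \autoref{lem:difference_upper_bound}.

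For part (i), I would assume $\abs{\alpha_i} \geq \abs{\beta_i}$ and let $s$ be the suffix of $w$ of length $\abs{\beta_i}$. By the observation above, $\Delta(\beta_i) = -\Delta(s)$. Since $\alpha_i$ is the suffix of $w$ of length $\abs{\alpha_i} \geq \abs{\beta_i}$, the word $s$ is a suffix of $\alpha_i$; writing $\alpha_i = \gamma s$ gives $\Delta(u_i) = \Delta(\alpha_i) + \Delta(\beta_i) = \Delta(\gamma) + \Delta(s) - \Delta(s) = \Delta(\gamma)$. As $\gamma$ is a factor of $w$, I would bound $\Delta(\gamma) \geq -3$ by a case split on $\abs{\gamma}$: if $\abs{\gamma} < 29$ this is \autoref{lem:difference_upper_bound}, and if $\abs{\gamma} \geq 29$ then $\gamma$ is light by \autoref{lem:light}, so $\Delta(\gamma) \geq 1$. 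Either way $\Delta(u_i) \geq -3$.

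Part (ii) is symmetric. Assuming $\abs{\alpha_i} \leq \abs{\beta_i}$, let $\beta_i'$ be the length-$\abs{\alpha_i}$ prefix of $\beta_i$, which is the prefix of $\OL{w}$ of length $\abs{\alpha_i}$. Since $\alpha_i$ is itself the suffix of $w$ of length $\abs{\alpha_i}$, the observation gives $\Delta(\beta_i') = -\Delta(\alpha_i)$. Writing $\beta_i = \beta_i' \delta$ then yields $\Delta(u_i) = \Delta(\delta)$, where $\delta$ is a factor of $\OL{w}$. Translating $\delta$ back to a factor of $w$ of the same length (again via $h$ and reversal) flips the sign, so the same two lemmas give $\Delta(\delta) \leq 3$, that is, $\Delta(u_i) \leq 3$.

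The only step that needs real care is getting the reversal-complement dictionary exactly right—pinning down precisely which subword survives the cancellation and in which direction the sign flips. Once that identity is stated cleanly, the remainder is the routine length-below-or-above-$29$ case split, and I do not expect any genuine obstacle beyond this bookkeeping.
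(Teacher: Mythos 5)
Your proof is correct and follows essentially the same route as the paper: your $s$ and $\gamma$ are exactly the paper's $\OL{\beta_i}$ and $z$ in the factorization $\alpha_i = z\OL{\beta_i}$, the cancellation $\Delta(u_i) = \Delta(z)$ is identical, and the final bound is the same case split on $\abs{z}$ via \autoref{lem:light} and \autoref{lem:difference_upper_bound}. The paper simply dispatches part (ii) as ``symmetric,'' which your spelled-out version (with the sign flip when translating $\delta$ back to a factor of $w$) matches.
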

  \begin{proof}
    Suppose that $\abs{\alpha_i} \geq \abs{\beta_i}$. Since $\beta_i$ is a prefix of $\OL{w}$, the word $\OL{\beta_i}$
    is a suffix of $w$. We may thus write $\alpha_i = z \OL{\beta_i}$ for some word $z$. Since
    $\abs{\OL{\beta_i} \beta_i}_0 = \abs{\OL{\beta_i} \beta_i}_1$, we have $\Delta(u_i) = \Delta(z)$. The word
    $z$ is a factor of $w$, so if $\Delta(z) \leq 0$, then $\abs{z} < 29$ by \autoref{lem:light}, and hence
    $\Delta(z) \geq -3$ by \autoref{lem:difference_upper_bound}. Claim (i) follows. Claim (ii) is proved
    symmetrically.
  \end{proof}

  \begin{lemma}\label{lem:split_bounds_2}
    Suppose that the word $\infw{F}$ contains an abelian $N$-power $u_0 \dotsm u_{N-1}$. Say $u_i$ is of type A and
    write $u_i = \alpha_i \! \diamond \! \beta_i$.
    \begin{enumerate}[(i)]
      \item If $\abs{\alpha_i} \geq \abs{\beta_i}$, then $\Delta(u_i) - \Delta(\diamond) \leq 3$.
      \item If $\abs{\alpha_i} \leq \abs{\beta_i}$, then $\Delta(u_i) - \Delta(\diamond) \geq -3$.
    \end{enumerate}
  \end{lemma}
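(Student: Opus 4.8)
The plan is to mirror the proof of \autoref{lem:split_bounds_1}, exploiting the single structural identity that makes both lemmas work: since $\OL{w}$ is the reversal of $h(w)$ and $h$ swaps the two letters, the operation $v \mapsto \OL{v}$ preserves lengths, sends prefixes of $w$ to suffixes of $\OL{w}$ (and suffixes of $w$ to prefixes of $\OL{w}$), maps factors of $\OL{w}$ to factors of $w$, and negates the imbalance, i.e.\ $\Delta(\OL{v}) = -\Delta(v)$ for every word $v$. The factorization $u_i = \alpha_i \diamond \beta_i$ of a type~A word straddles the marker, so the term $\Delta(\diamond)$ in the statement is precisely the contribution of the single letter $\diamond$; subtracting it off isolates the contribution of $\alpha_i\beta_i$, and the heart of the proof is to show that this non-marker part reduces to the same computation as in \autoref{lem:split_bounds_1} with the roles of $w$ and $\OL{w}$ interchanged.

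For claim~(i), assume $\abs{\alpha_i} \geq \abs{\beta_i}$. Since $\beta_i$ is a prefix of $w$, the word $\OL{\beta_i}$ is a suffix of $\OL{w}$ of length $\abs{\beta_i} \leq \abs{\alpha_i}$; as $\alpha_i$ is also a suffix of $\OL{w}$, I can write $\alpha_i = z \OL{\beta_i}$ with $z$ a factor of $\OL{w}$. Then
\begin{equation*}
  \Delta(u_i) - \Delta(\diamond) = \Delta(\alpha_i) + \Delta(\beta_i) = \Delta(z) + \bigl(\Delta(\OL{\beta_i}) + \Delta(\beta_i)\bigr) = \Delta(z),
\end{equation*}
using $\Delta(\OL{\beta_i}) = -\Delta(\beta_i)$. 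It remains to bound $\Delta(z)$ from above. The word $\OL{z}$ is a factor of $w$ with $\Delta(\OL{z}) = -\Delta(z)$, so I would argue exactly as in \autoref{lem:split_bounds_1}: if $\Delta(\OL{z}) \leq 0$ then $\abs{\OL{z}} < 29$ by \autoref{lem:light}, whence $\Delta(\OL{z}) \geq -3$ by \autoref{lem:difference_upper_bound}; and if $\Delta(\OL{z}) > 0$ this bound is trivial. In either case $\Delta(\OL{z}) \geq -3$, so $\Delta(z) \leq 3$, and claim~(i) follows.

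Claim~(ii) is symmetric. Assuming $\abs{\alpha_i} \leq \abs{\beta_i}$, the word $\OL{\alpha_i}$ is a prefix of $w$ of length at most $\abs{\beta_i}$, so $\OL{\alpha_i}$ is a prefix of $\beta_i$ and I write $\beta_i = \OL{\alpha_i} z'$ with $z'$ a factor of $w$. The same cancellation gives $\Delta(u_i) - \Delta(\diamond) = \Delta(z')$, and now I bound $\Delta(z')$ from below directly on $w$: if $\Delta(z') \geq 0$ there is nothing to prove, and otherwise \autoref{lem:light} forces $\abs{z'} < 29$, so \autoref{lem:difference_upper_bound} gives $\Delta(z') \geq -3$.

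I do not anticipate a genuine obstacle: once the identity $\Delta(\OL{v}) = -\Delta(v)$ and the prefix/suffix correspondence under $v \mapsto \OL{v}$ are in hand, the marker $\diamond$ contributes additively and cancels out, reducing everything to the two finitary facts already recorded in \autoref{lem:light} and \autoref{lem:difference_upper_bound}. The only point requiring care is bookkeeping the direction of the overline operation---keeping track of which of $z$, $\OL{z}$ (respectively $z'$) is a factor of $w$ as opposed to $\OL{w}$---so that the length threshold $29$ and the bound $-3$ are applied to genuine factors of $w$.
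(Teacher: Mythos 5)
Your proof is correct and takes essentially the same route as the paper's: both write $\alpha_i = z\,\OL{\beta_i}$ (resp.\ $\beta_i = \OL{\alpha_i}\,z'$ in claim~(ii)), cancel the abelian-complementary pair so that $\Delta(u_i) - \Delta(\diamond) = \Delta(z)$, and then bound $\Delta(z)$ using \autoref{lem:light} and \autoref{lem:difference_upper_bound}. The only difference is presentational: you make explicit the passage to $\OL{z}$ as a factor of $w$ and spell out claim~(ii), which the paper compresses as ``analogous.''
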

  \begin{proof}
    This proof is similar to that of \autoref{lem:split_bounds_1}. Say $\abs{\alpha_i} \geq \abs{\beta_i}$. Then
    $\OL{\beta_i}$ is a suffix of $\OL{w}$, and we may write $\alpha_i = z \OL{\beta_i}$. Thus
    $\Delta(\alpha_i \beta_i) = \Delta(z)$. If $\Delta(z) \geq 0$, then $\Delta(z) \leq 3$ by
    Lemmas \ref{lem:light} and \ref{lem:difference_upper_bound}. It follows that
    $\Delta(u_i) = \Delta(\alpha_i \! \diamond \! \beta_i) = \Delta(z) + \Delta(\diamond) \leq 3 + \Delta(\diamond)$.
    Claim (ii) is analogous.
  \end{proof}

  We aim to combine \autoref{lem:odd_half} and the following observation. Together they imply that if an abelian
  $8$-power $u_0 \dotsm u_7$ occurs in $\infw{F}$, then each of the factors $u_i$ has type A or type B.

  \begin{lemma}\label{lem:cube_1}
    Let $u_0 u_1 u_2$ be an abelian $3$-power occurring in $\infw{F}$. If
    \begin{enumerate}[(i)]
      \item $u_0$ occurs in $w$ or $\OL{w}$ or
      \item $u_2$ occurs in $w$ or $\OL{w}$,
    \end{enumerate}
    then $\abs{u_0} \leq \tfrac12 \abs{w}$.
  \end{lemma}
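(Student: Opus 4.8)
The plan is to prove the contrapositive: assuming that the period $m = \abs{u_0}$ satisfies $m > \tfrac12\abs{w}$, I will show that neither $u_0$ nor $u_2$ can lie inside a single copy of $w$ or $\OL{w}$ in $\infw{F}$. Throughout I exploit the coarse structure of $\infw{F} = (\OL{w} \diamond w)^\omega$: it alternates heavy $\OL{w}$-blocks and light $w$-blocks, where a marker $\diamond$ sits between each $\OL{w}$ and the following $w$, but no marker sits between a $w$ and the following $\OL{w}$. Since $\abs{w} \geq 145$ we have $m > 72 \geq 64$, so by \autoref{lem:light} and \autoref{lem:delta_lower_bound_6} a length-$m$ factor lying inside a single $w$ is light with $\Delta \geq 6$, and (by the reversal argument of \autoref{lem:imbalance}) a length-$m$ factor inside a single $\OL{w}$ is heavy with $\Delta \leq -6$. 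Crucially, $\Delta(u_0) = \Delta(u_1) = \Delta(u_2) =: d$ because the three blocks are abelian equivalent.

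Consider condition (i), say $u_0$ lies in a single block. Place that block at coordinates $[0,\abs{w})$ and write the three blocks as consecutive length-$m$ intervals starting at $u_0$. If the block is a light $w$-block, then $d \geq 6$, and the inequality $m > \tfrac12\abs{w}$ together with $u_0 \subseteq [0,\abs{w})$ forces $u_1$ to cross the boundary at $\abs{w}$ into the adjacent heavy $\OL{w}$-block; as this $w \to \OL{w}$ boundary carries no marker, $u_1$ is of type B. Now \autoref{lem:split_bounds_1}(ii) forbids $\abs{\alpha_1} \leq \abs{\beta_1}$ (which would give $\Delta(u_1) \leq 3 < d$), so the light part $\alpha_1$ dominates; translating this length inequality into the interval arithmetic places the right endpoint of $u_2$ strictly below $2\abs{w}$ while its left endpoint already exceeds $\abs{w}$, so $u_2$ is a factor of $\OL{w}$ with $\Delta(u_2) < 0$, contradicting $d \geq 6$. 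If instead $u_0$ lies in a heavy $\OL{w}$-block, then $d \leq -6$ and the mirror argument applies: $u_1$ now crosses an $\OL{w} \to w$ boundary, which carries a marker, so $u_1$ is of type A, and \autoref{lem:split_bounds_2}(ii) (using $\abs{\Delta(\diamond)} = 1$) forces the heavy part of $u_1$ to dominate, pushing $u_2$ entirely into the following light $w$-block with $\Delta(u_2) > 0$, against $d \leq -6$.

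Condition (ii) is the mirror image, read from $u_2$ backwards: $u_1$ crosses the boundary preceding $u_2$'s block. When that block is a $w$-block the boundary carries a marker (so $u_1$ is of type A), and when it is an $\OL{w}$-block the boundary carries none (so $u_1$ is of type B); in each sub-case the matching split bound (\autoref{lem:split_bounds_2} resp.\ \autoref{lem:split_bounds_1}) forces the imbalance in the direction that drives $u_0$ entirely into the oppositely coloured block, again producing a $\Delta$ of the wrong sign for $d$.

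The conceptual skeleton is short; the real work lies in the positional bookkeeping. The main obstacle is to verify, from $m > \tfrac12\abs{w}$, $\abs{w} \geq 145$, and the imbalance inequality supplied by the split bounds, that the far block ($u_2$ in case (i), $u_0$ in case (ii)) lands strictly inside one monochromatic region and does not reach across a marker or a second boundary. I expect the decisive inequality to be $3m \leq 2\abs{w}$, which emerges from combining $\abs{\alpha_1} \gtrless \abs{\beta_1}$ with the anchoring hypothesis $u_0 \subseteq w$ (resp.\ $u_2 \subseteq w$); this is exactly what prevents the third block from reaching the next boundary. The only case outside the arithmetic is the degenerate one where the light or heavy part of $u_1$ is empty, i.e.\ $u_1$ itself already sits inside a single block; there the contradiction is even quicker, as $u_1$ is then monochromatic of the wrong sign relative to $d$.
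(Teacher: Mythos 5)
Your proposal is correct and follows essentially the same route as the paper's proof: assume $\abs{u_0} > \tfrac12\abs{w}$, use Lemmas \ref{lem:light} and \ref{lem:delta_lower_bound_6} to pin the sign of $\Delta(u_i)$, invoke the split bounds (Lemmas \ref{lem:split_bounds_1} and \ref{lem:split_bounds_2}) to force the longer half of the boundary-crossing $u_1$ to lie on the side of $u_0$'s (resp.\ $u_2$'s) block, and conclude via the length bookkeeping that the far factor sits inside the oppositely coloured block, contradicting \autoref{lem:imbalance}. Your interval arithmetic matches the paper's suffix/prefix computation (e.g.\ $\abs{u_0\alpha_1} \leq \abs{w}$ giving $\abs{\beta_1 u_2} \leq \abs{w}$), and your explicit treatment of the degenerate case where $u_1$ lies in a single block is a point the paper leaves implicit.
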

  \begin{proof}
    Assume on the contrary that $\abs{u_0} > \tfrac12 \abs{w}$ and $u_0$ occurs in $w$. Now $\abs{u_0} \geq 29$, so
    $u_0$ is light, and thus $u_1$ is also light. Since $\abs{u_0} > \tfrac12 \abs{w}$, the word $u_1$ is of type B. If
    $\abs{\alpha_1} \leq \abs{\beta_1}$, then $\Delta(u_0) = \Delta(u_1) \leq 3$ by \autoref{lem:split_bounds_1}, and
    this contradicts \autoref{lem:delta_lower_bound_6} (recall that we assume that $\abs{w} \geq 145$, so
    $\abs{u_0} > \tfrac12 \abs{w} \geq 72$). Therefore $\abs{\alpha_1} > \abs{\beta_1}$. Since $u_0 \alpha_1$ is a
    suffix of $w$, it follows that $\abs{u_0 \alpha_1} \leq \abs{w}$. Consequently, we have
    $\abs{\beta_1 u_2} \leq \abs{w}$ which means that $u_2$ is a factor of $\OL{w}$. This contradicts
    \autoref{lem:imbalance} since $u_0$ is a factor of $w$.

    The remaining cases are proved by applying the analogous \autoref{lem:split_bounds_2}.
  \end{proof}

  We next prove the main technical lemma of this part. The proof of \autoref{prp:odd_8_power} is almost immediate after
  this.

  \begin{lemma}\label{lem:8_greater_than_w}
    The word $\infw{F}$ does not contain abelian $8$-powers of period $m$ such that $m \leq \abs{w}$.
  \end{lemma}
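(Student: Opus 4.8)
The plan is to argue by contradiction: suppose $\infw{F}$ contains an abelian $8$-power $u_0 \dotsm u_7$ of period $m \le \abs{w}$. By \autoref{lem:odd_half} we already know $m > \tfrac12\abs{w}$, so throughout $\tfrac12\abs{w} < m \le \abs{w}$. The first goal is to fix the coarse shape of the eight blocks. Applying \autoref{lem:cube_1} to each consecutive triple $u_i u_{i+1} u_{i+2}$ and using $\abs{u_i} = m > \tfrac12\abs{w}$, no block can occur entirely inside a copy of $w$ or of $\OL{w}$; as the first and last members of the triples range over all of $u_0,\dotsc,u_7$, every $u_i$ is of type A or type B. Moreover, since $m \le \abs{w}$ each block is short enough to straddle at most one boundary of $\infw{F}$ (a $\diamond$ marker or a $w$-$\OL{w}$ junction), and consecutive boundaries are of opposite kind; as the blocks tile $\infw{F}$ consecutively, their types must alternate. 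Thus, up to the symmetric relabelling, the pattern is $ABABABAB$, with four blocks of each type.

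Next I would extract the arithmetic structure. For an adjacent pair of type A then type B, the inner parts combine into a full block, $\beta_i\alpha_{i+1} = w$; for type B then type A one gets $\beta_j\alpha_{j+1} = \OL{w}$. Because $\OL{w}$ is the reversal of $h(w)$ we have $\Delta(\OL{w}) = -\Delta(w)$, and since $\abs{w} \ge 145 \ge 64$, \autoref{lem:delta_lower_bound_6} gives $\Delta(w) \ge 6$ and $\Delta(\OL{w}) \le -6$. Let $D = \Delta(u_i)$ be the common imbalance of the blocks and set $x_i = \abs{\alpha_i} - \abs{\beta_i}$. The cancellation underlying \autoref{lem:split_bounds_1} and \autoref{lem:split_bounds_2} (write $\alpha_i = z\OL{\beta_i}$, so that $\Delta(\OL{\beta_i}\beta_i) = 0$) shows that $D$ equals $\Delta$ of a single factor of $w$ or of $\OL{w}$ whose length is exactly $\abs{x_i}$ (plus $\Delta(\diamond)$ in the type A case). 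Tracking how the split point moves from one block to the next, a short computation gives $x_{i+1} = x_i + \delta_0$ with $\delta_0 = 2(\abs{w}-m)+1 \ge 1$; that is, the $x_i$ form an increasing arithmetic progression of step $\delta_0$.

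The main contradiction then comes from \autoref{lem:imbalance}. Consider the four blocks of a fixed type. Their $\alpha$-parts are nested suffixes (and their $\beta$-parts nested prefixes) of $w$, respectively $\OL{w}$, the outermost pair differing in length by $3\delta_0$. Constancy of $D$ forces the corresponding difference factors to have equal imbalance: there is a factor $S$ of $w$ and a factor $T$ of $\OL{w}$ with $\abs{S} = \abs{T} = 3\delta_0$ and $\Delta(S) = \Delta(T)$. By \autoref{lem:imbalance} this is impossible once $3\delta_0 \ge 29$, that is, as soon as $\abs{w} - m \ge 5$. This disposes of all but the narrow window $\abs{w}-m \in \{0,1,2,3,4\}$.

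The hard part is precisely this narrow window, where $\delta_0 \le 9$ and the eight values $x_i$ span less than $64$, so \autoref{lem:imbalance} no longer bites on the inner difference factors and Lemmas~\ref{lem:light} and~\ref{lem:difference_upper_bound} only confine $D$ to a bounded range. To finish I would split on the location of the window: if it avoids $(-64,64)$, then some type A and some type B block have $\abs{x_i}\ge 64$ with matching sign, which by the sharpened cancellation forces $D \ge 6$ and $D \le -6$ simultaneously; if instead the window lies inside $(-64,64)$, then the near-balanced blocks have both $\alpha_i$ and $\beta_i$ long, so each prefix $\beta_i$ is strongly imbalanced ($\Delta(\beta_i)\ge 6$ when $\beta_i$ is a prefix of $w$ and $\Delta(\beta_i)\le -6$ when it is a prefix of $\OL{w}$), and telescoping the full-period identities $\Delta(\beta_{i+1})-\Delta(\beta_i) = D - \Delta(w)$ and $\Delta(\beta_{i+1})-\Delta(\beta_i) = D - \Delta(\diamond) + \Delta(w)$ yields the contradiction. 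I expect this last subcase — reconciling the single constant $D$ with the strong imbalance of the long prefixes when $m$ is essentially equal to $\abs{w}$ — to be the genuine obstacle and to demand the most careful bookkeeping; the two global patterns $ABABABAB$ and $BABABABA$ are handled symmetrically.
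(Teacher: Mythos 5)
Your first phase coincides with the paper's: \autoref{lem:odd_half} gives $m > \tfrac12\abs{w}$, \autoref{lem:cube_1} applied to triples shows every $u_i$ has type A or B, and the types alternate; your computation $x_{i+1} - x_i = 2(\abs{w}-m)+1$ is correct, and the ensuing argument via \autoref{lem:imbalance} does dispose of $\abs{w} - m \geq 5$. But the hardest content of the lemma is exactly your ``narrow window'' $\abs{w} - m \leq 4$ (in particular $m = \abs{w}$, where $\delta_0 = 1$), and there your sketch has a genuine gap, which you half-concede. Concretely: (a) in the subcase where all $x_i$ lie in $(-64,64)$, you only get $\abs{\beta_i} \geq (m-1-63)/2$, which with $m \geq \abs{w}-4$ and $\abs{w} \geq 145$ is about $39$ --- below the threshold $64$ of \autoref{lem:delta_lower_bound_6} --- so you cannot conclude $\abs{\Delta(\beta_i)} \geq 6$; (b) even granting that imbalance, your two telescoping identities are mutually consistent rather than contradictory: they yield $D \leq \Delta(w) - 12$ and $D \geq 12 - \Delta(w) + \Delta(\diamond)$, which is harmless since $\Delta(w)$ grows linearly in $\abs{w}$ (the letter frequencies of $\sigma_3^\omega(0)$ are unequal), and their sum $2D - \Delta(\diamond) = \Delta(\beta_{i+2}) - \Delta(\beta_i)$ merely bounds $D$. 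More fundamentally, pure $\Delta$-bookkeeping cannot close this window: the relevant difference factors have length $3\delta_0 \leq 27 < 29$, below the range where Lemmas \ref{lem:light} and \ref{lem:imbalance} bite, and short factors of $w$ and $\OL{w}$ can legitimately share imbalances, so Parikh-difference information alone does not suffice.

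The paper closes the case with a structural reduction that is absent from your proposal. After the common first phase, pair the blocks as $v_i = u_{2i}u_{2i+1}$, so $v_0v_1v_2v_3$ is an abelian $4$-power of period $2m$; since $2m \leq 2\abs{w} < \abs{f}$, the quantity $M = \abs{f} - 2m$ is positive, and \autoref{lem:reduction_short} (whose proof shows the new power ends where the old one begins) produces an abelian $4$-power $s_3s_2s_1s_0$ of period $M$ ending exactly where $v_0 \dotsm v_3$ begins. Writing $\OL{w} = \beta_{-1}\alpha_0$ and $v_0 = \alpha_0 \diamond w \beta_1$, one gets the boundary relations $\beta_{2i-1} = \beta_{2i+1}s_i$ among prefixes of $\OL{w}$, which telescope to $\beta_{-1} = \beta_7\, s_3s_2s_1s_0$; hence $\OL{w}$ itself contains an abelian $4$-power, contradicting its abelian $4$-freeness. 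This argument is uniform in $m$ over the whole range $\tfrac12\abs{w} < m \leq \abs{w}$ and needs no case split; some such use of the abelian $4$-freeness of $w$ and $\OL{w}$ (rather than only their imbalance statistics) is what your window argument is missing.
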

  \begin{proof}
    Assume for a contradiction that $\infw{F}$ contains an abelian $8$-power $u_0 \dotsm u_7$ such that
    $\abs{u_0} \leq \abs{w}$. By \autoref{lem:odd_half}, we may assume that $\abs{u_0} > \tfrac12 \abs{w}$. By
    \autoref{lem:cube_1}, the words $u_0$, $\ldots$, $u_7$ are not factors of $w$ or $\OL{w}$. Therefore each $u_i$ is
    of type A or B. In fact, the words $u_0$, $u_2$, $u_4$, and $u_6$ are of the same type, as are $u_1$, $u_3$, $u_5$,
    and $u_7$. Moreover, the word $u_0$ is of type A if and only if $u_1$ is of type $B$.
 
   Notice that $v_0 v_1 v_2 v_3$, with $v_i = u_{2i} u_{2i+1}$, is an abelian $4$-power of period $2\abs{u_0}$
   occurring in $\infw{F}$. Let $M = \abs{f} - 2\abs{u_0}$. Since $\abs{u_0} \leq \abs{w} < \abs{f}/2$, we have
   $M > 0$. By applying \autoref{lem:reduction_short}, we see that $\infw{F}$ contains an abelian $4$-power
   $s_3 s_2 s_1 s_0$ of period $M$. In fact, by inspecting the proof of the aforementioned lemma, the abelian $4$-power
   $s_3 \dotsm s_0$ ends where $v_0 \dotsm v_3$ begins.
   
   Assume that $u_0$ is of type A, the other case being symmetric. Let us write $\OL{w} = \beta_{-1} \alpha_0$ for a
   word $\beta_{-1}$. Since $u_1$ is of type $B$, we may write $v_0 = \alpha_0 {\diamond} w \beta_1$. Moreover, we have
   $\beta_{-1} = \beta_1 s_0'$ with $\abs{s_0'} = M$. Since $s_3 \dotsm s_0$ ends where $v_0 \dotsm v_3$ begins, we see
   that $s_0' = s_0$. Repeating the argument for $v_i$, $i=1,2,3$, in place of $v_0$, we see that
   $\beta_{2i-1} = \beta_{2i + 1} s_i$. Hence $\beta_{-1} = \beta_7 s_3s_2s_1s_0$. But now $\OL{w}$ contains the
   abelian $4$-power $s_3 \dotsm s_0$, which is absurd.
  \end{proof}

  \begin{proof}[Proof of \autoref{prp:odd_8_power}]
    Suppose for a contradiction that $\infw{F}$ contains an abelian $8$-power of period $m$ such that
    $m < \abs{f} = 2\abs{w} + 1$. By \autoref{lem:reduction_short}, we may suppose that $m \leq \abs{w}$. However,
    \autoref{lem:8_greater_than_w} indicates that no such abelian power exists. This is a contradiction.
  \end{proof}

  \subsection{Even Length Case}\label{ssec:even_case}
  In this section, we prove the following two propositions.

  \begin{proposition}\label{prp:even_8_power_1}
    The word $g_1$ avoids abelian $8$-powers cyclically if $\abs{w}$ is odd.
  \end{proposition}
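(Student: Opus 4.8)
The plan is to run the argument of \autoref{ssec:odd_case} essentially verbatim for $\infw{G}_1 = g_1^\omega$, where $g_1 = \OL{w}w$, and to isolate the one genuinely new phenomenon that the oddness of $\abs{w}$ is designed to kill. As in the odd case, I would first dispose of the short words by a finite computation (the claim is routine to verify for $\abs{w} < 5\times 29 = 145$), and assume $\abs{w}\geq 145$ throughout. I then classify each block $u_i$ of an abelian power occurring in $\infw{G}_1$ by the junction it straddles: type A is $u_i=\alpha_i\beta_i$ with $\alpha_i$ a suffix of $\OL{w}$ and $\beta_i$ a prefix of $w$ (the $\OL{w}\mid w$ junction, now \emph{without} the marker $\diamond$), and type B is $u_i=\alpha_i\beta_i$ with $\alpha_i$ a suffix of $w$ and $\beta_i$ a prefix of $\OL{w}$ (the period junction $w\mid\OL{w}$). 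Here $\abs{g_1}=2\abs{w}$.

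Next I would observe that the structural lemmas transfer with only cosmetic changes, precisely because their proofs never use the presence of $\diamond$. \autoref{lem:imbalance} is unchanged. The split-bound lemmas \autoref{lem:split_bounds_1} and \autoref{lem:split_bounds_2} hold verbatim for $\infw{G}_1$ \emph{after deleting the $\Delta(\diamond)$ term}: their proofs rest only on the complement–reversal identity $\Delta(\OL{\beta_i}\beta_i)=0$ together with \autoref{lem:light} and \autoref{lem:difference_upper_bound}, none of which see the marker. Consequently the analogues of \autoref{lem:odd_half} and \autoref{lem:cube_1} go through word for word, so any abelian $8$-power in $\infw{G}_1$ has period $m>\tfrac12\abs{w}$ and all of its blocks are of type A or B.

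With these in hand I would organise the periods $m<\abs{g_1}=2\abs{w}$ into three ranges. First, by \autoref{lem:reduction_short} applied to $x=g_1$, any period with $\abs{w}<m<2\abs{w}$ reduces to a period $2\abs{w}-m\in(0,\abs{w})$, so together with the odd-half analogue it suffices to exclude $\tfrac12\abs{w}<m\leq\abs{w}$. For $\tfrac12\abs{w}<m<\abs{w}$ I would mirror \autoref{lem:8_greater_than_w} exactly: put $v_i=u_{2i}u_{2i+1}$, an abelian $4$-power of period $2m$ in $\infw{G}_1$; set $M=2\abs{w}-2m$, which is strictly positive in this range; extract from \autoref{lem:reduction_short} an abelian $4$-power $s_3s_2s_1s_0$ of period $M$ abutting $v_0\dotsm v_3$; and telescope the junction identities $\beta_{2i-1}=\beta_{2i+1}s_i$ (with $v_0=\alpha_0 w\beta_1$, the $\diamond$ simply absent) to conclude that $\OL{w}$ itself contains the abelian $4$-power $s_3\dotsm s_0$, contradicting its abelian $4$-freeness.

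The crux, and the only place the hypothesis on $\abs{w}$ is used, is the leftover midpoint $m=\abs{w}=\tfrac12\abs{g_1}$. This is exactly the period that \autoref{lem:reduction_short} cannot shrink, and the telescoping above degenerates there because $M=2\abs{w}-2m=0$. Here I would argue abelian-arithmetically instead: every full period of $\infw{G}_1$ is a conjugate of $g_1=\OL{w}w$, and since complement–reversal preserves length while swapping letters, $\abs{\OL{w}}_0=\abs{w}_1$ and $\abs{\OL{w}}_1=\abs{w}_0$, so each full period has Parikh vector $(\abs{w},\abs{w})$. Two consecutive blocks of length $\abs{w}$ tile one full period, so in an abelian power of period $\abs{w}$ each block would have Parikh vector $(\tfrac12\abs{w},\tfrac12\abs{w})$; this is non-integral when $\abs{w}$ is odd, so no such block—hence no abelian power—of period $\abs{w}$ exists. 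This is also exactly why even $\abs{w}$ forces the perturbed word $g_2$: there a neutral length-$\abs{w}$ window can occur, producing a period-$\abs{w}$ abelian power and, by \autoref{lem:avoidance_characterization}, destroying cyclic avoidance altogether. Assembling the three ranges shows $\infw{G}_1$ has no abelian $8$-power of period below $2\abs{w}$, which is the proposition. I expect the main obstacle to be re-deriving the junction telescoping cleanly once the marker $\diamond$ is removed, and above all recognising that the midpoint period $m=\abs{w}$ escapes the reduction machinery and must be handled by the parity count—the whole reason the statement is conditioned on $\abs{w}$ being odd.
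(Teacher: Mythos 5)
Your proposal matches the paper's architecture almost everywhere, but it contains one genuine gap: the claim that the analogue of \autoref{lem:odd_half} ``goes through word for word''. It does not, and this is precisely the point where the marker $\diamond$ earns its keep in \autoref{ssec:odd_case}. In $\infw{F}$, if no block of an abelian $8$-power of period $m\leq\tfrac12\abs{w}$ has type A, then no block contains $\diamond$, so the whole power avoids the $\OL{w}\,\diamond\,w$ junctions and sits inside $w\OL{w}$; eight blocks then force four consecutive ones inside a single copy of $w$ or $\OL{w}$, contradicting abelian $4$-freeness. In the markerless $\infw{G}_1$ this inference breaks: the power can have \emph{every} block boundary aligned with the $\OL{w}\mid w$ and $w\mid\OL{w}$ junctions (for instance when $m$ divides $\abs{w}$ and the power starts at a junction), so that no block is of type A or B --- both types require $\alpha_i,\beta_i$ nonempty --- even though the power spans several segments. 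A full segment is then a concatenation $u_j\dotsm u_{k-1}$ of whole blocks, i.e., an abelian $N$-power with $N=\abs{w}/m\geq 2$; for $N\geq 4$ this contradicts abelian $4$-freeness, but $N\in\{2,3\}$ is perfectly consistent with it and must be excluded by a separate mechanism. This is exactly what the first paragraph of the paper's \autoref{lem:even_half} supplies: in that configuration some block $u_j$ is a proper prefix of $\OL{w}$, hence heavy by \autoref{lem:light}, while $u_0$ is a factor of $w$ of length at least $145/4>29$, hence light, contradicting $u_0\sim u_j$; the degenerate alternative $w'=u_j$ contradicts $\abs{u_j}\leq\tfrac12\abs{w}$. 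Without this (or an equivalent) argument, your entire chain --- half-period bound, then the cube lemma, then ``all blocks typed'', then the telescoping --- never gets started.

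Everything else is sound and tracks the paper: the split bounds do transfer with the $\Delta(\diamond)$ term deleted, and your range $\tfrac12\abs{w}<m<\abs{w}$ is handled by the same $\diamond$-omitted telescoping as the paper's \autoref{lem:8_less_than_w_even_1}. Where you genuinely depart from the paper --- for the better --- is the midpoint $m=\abs{w}$. The paper stays structural: it rules out $u_i\in\{w,\OL{w}\}$ via \autoref{lem:even_cube_1}, writes $\alpha_0=z_0\OL{\beta_0}$ and $u_1=z_1\OL{\beta_1}\beta_1$, extracts $z_1=\OL{z_0}$ from the length equalities, and concludes $\Delta(z_0)=\Delta(\OL{z_0})=0$, forcing $\abs{w}=2\abs{\beta_0}+\abs{z_0}$ to be even. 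Your argument --- every length-$2\abs{w}$ factor of $g_1^\omega$ is a conjugate of $g_1$, whose Parikh vector is $(\abs{w},\abs{w})$ since $\OL{\,\cdot\,}$ exchanges the letter counts, so a block of period $\abs{w}$ would need the non-integral Parikh vector $(\abs{w}/2,\abs{w}/2)$ --- needs no typing of blocks at all, excludes all abelian $2$-powers of period $\abs{w}$ rather than just $8$-powers, and makes the role of the parity hypothesis transparent. Your side remark about even $\abs{w}$ is also correct: a sliding-window argument produces a neutral window of length $\abs{w}$, hence a conjugate of $g_1$ that is an abelian square, and \autoref{lem:avoidance_characterization} then destroys cyclic avoidance for every exponent, which is indeed why $g_2$ is needed. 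So: import or reprove the paper's \autoref{lem:even_half}, and the rest of your proof, including the midpoint step exactly as you wrote it, stands.
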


  \begin{proposition}\label{prp:even_8_power_2}
    The word $g_2$ avoids abelian $8$-powers cyclically if $\abs{w}$ is even.
  \end{proposition}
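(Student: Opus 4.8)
The plan is to run the argument of \autoref{ssec:odd_case} almost verbatim, replacing $f$ and $\infw{F}$ by $g_2$ and $\infw{G}_2 = g_2^\omega$ and treating the modified final letter of $\OL{w}^\bullet$ in a way analogous to the separating symbol $\diamond$ for $f$. Concretely, I would first introduce the two boundary types for a block $u_i$ occurring in $\infw{G}_2$: type A for a block straddling an $\OL{w}^\bullet w$ boundary (so $u_i = \alpha_i\beta_i$ with $\alpha_i$ a nonempty suffix of $\OL{w}^\bullet$ and $\beta_i$ a nonempty prefix of $w$) and type B for a block straddling a $w\OL{w}^\bullet$ boundary. The only substantive change from the odd case is bookkeeping for the flipped letter, which shifts $\Delta$ of any factor of $\OL{w}^\bullet$ containing it by exactly $2$. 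I would therefore re-prove the analogues of \autoref{lem:imbalance}, \autoref{lem:split_bounds_1}, and \autoref{lem:split_bounds_2} with slightly larger length thresholds (using \autoref{lem:delta_lower_bound_6} to keep factors of $\OL{w}^\bullet$ heavy even after the $+2$), together with the analogues of \autoref{lem:cube_1} and \autoref{lem:odd_half}. In the last two, whenever a run of consecutive blocks would otherwise lie entirely inside $\OL{w}^\bullet$ and produce an abelian $4$-power, I instead extract an abelian $5$-power there and invoke the fact that $\OL{w}^\bullet$ is only guaranteed to be abelian $5$-free; this costs one more block and hence a modestly larger initial finite check.

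Then I would prove the analogue of \autoref{lem:8_greater_than_w}, that $\infw{G}_2$ has no abelian $8$-power of period $m \le \abs{w}$, splitting into $m < \abs{w}$ and $m = \abs{w}$. For $m < \abs{w}$ the half-period lemma gives $\tfrac12\abs{w} < m < \abs{w}$, so pairing $v_i = u_{2i}u_{2i+1}$ yields an abelian $4$-power of period $2m \in (\abs{w}, 2\abs{w}) = (\tfrac12\abs{g_2}, \abs{g_2})$; applying \autoref{lem:reduction_short} inside $\infw{G}_2$ extracts an abelian $4$-power $s_3 s_2 s_1 s_0$ of period $M = \abs{g_2} - 2m$ ending where $v_0\dotsm v_3$ begins. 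If $u_0$ is of type B, the full copy of $\OL{w}^\bullet$ sitting in the middle of each $v_i$ cancels in the abelian equalities (so the flip is irrelevant) and $s_3\dotsm s_0$ lands in a prefix of $w$, contradicting that $w$ is abelian $4$-free. If $u_0$ is of type A, then $s_3\dotsm s_0$ lands in the prefix $\beta_{-1}$ of the copy of $\OL{w}^\bullet$ in which $u_0$ begins; since that block starts with the nonempty suffix $\alpha_0$, the prefix $\beta_{-1}$ is proper and omits the flipped final letter, so $\beta_{-1}$ is a factor of $\OL{w}$ and $s_3\dotsm s_0$ is an abelian $4$-power in $\OL{w}$ — again a contradiction. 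This is exactly the odd-case argument, the point being that the flip is always either cancelled or avoided.

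The remaining case $m = \abs{w}$ is the one \autoref{lem:reduction_short} cannot reach (there $2m = \abs{g_2}$), and it is the heart of the matter and the reason the even case needs $g_2$ rather than $g_1$. Here a period-$\abs{w}$ abelian $8$-power at offset $j$ amounts to the two length-$\abs{w}$ blocks $X_0 = \OL{w}^\bullet[j{:}]\,w[{:}j]$ and $X_1 = w[j{:}]\,\OL{w}^\bullet[{:}j]$ being abelian equivalent (here $[{:}i]$ and $[i{:}]$ denote the length-$i$ prefix and the complementary suffix). Writing $D(i)$ for $\Delta$ of the length-$i$ prefix of $w$ and using the reversal--complement identity $\Delta(\OL{w}[i{:}]) = -D(\abs{w}-i)$ together with the fact that $\OL{w}^\bullet[j{:}]$ contains the flipped letter while $\OL{w}^\bullet[{:}j]$ does not, I get
\[
  \Delta(X_0) = D(j) - D(\abs{w}-j) + 2, \qquad \Delta(X_1) = D(\abs{w}-j) - D(j),
\]
so $X_0$ and $X_1$ are abelian equivalent only if $D(j) - D(\abs{w}-j) = -1$. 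But $D(i) \equiv i \pmod 2$, whence $D(j) - D(\abs{w}-j) \equiv 2j - \abs{w} \equiv 0 \pmod 2$ because $\abs{w}$ is even; an odd value $-1$ is impossible. Hence no period-$\abs{w}$ abelian $8$-power exists. (Without the flip the condition would read $D(j) = D(\abs{w}-j)$, which for even $\abs{w}$ holds at $j = \abs{w}/2$; this is precisely why $g_1$ fails and $g_2$ is used for even lengths, matching the parity split between \autoref{prp:even_8_power_1} and \autoref{prp:even_8_power_2}.)

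Combining the two cases gives the analogue of \autoref{lem:8_greater_than_w}, and then \autoref{prp:even_8_power_2} follows exactly as \autoref{prp:odd_8_power} did: any abelian $8$-power of period $m < \abs{g_2} = 2\abs{w}$ may, by \autoref{lem:reduction_short}, be assumed to have $m \le \abs{w}$, which the analogue of \autoref{lem:8_greater_than_w} rules out. I expect the parity computation in the case $m = \abs{w}$ to be the main obstacle: it is the only place where the argument genuinely departs from the odd case, where the choice of the flipped final letter (rather than a free symbol $\diamond$) is forced, and where the evenness of $\abs{w}$ is used in an essential way.
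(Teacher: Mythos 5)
Your proposal is correct, and for most of its length it coincides with the paper's own argument: the paper likewise reduces via \autoref{lem:reduction_short} to $m \leq \abs{w}$, proves a half-period lemma (\autoref{lem:even_half}) whose proof uses exactly your observation that $\OL{w}^\bullet$ is only abelian $5$-free (so one extra block is spent there), adjusts the $\Delta$-bounds by the $+2$ contribution of the flipped letter (\autoref{lem:even_split_bounds_2}), and settles $m < \abs{w}$ by the pairing $v_i = u_{2i}u_{2i+1}$ with precisely your key point (\autoref{lem:8_less_than_w_even_2}): since $\abs{\alpha_0} > 0$, the extracted abelian $4$-power $s_3s_2s_1s_0$ is not a suffix of $\OL{w}^\bullet$, hence lies in $\OL{w}$, which is abelian $4$-free. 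The one place you genuinely diverge is the endgame $m = \abs{w}$. The paper first invokes \autoref{lem:even_cube_2} to force every $u_i$ to be of type A or B, then compares two adjacent blocks written as $u_0 = z_0\OL{\beta_0}^\bullet\beta_0$ and $u_1 = z_1\OL{\beta_1}\beta_1$, shows $z_1 = \OL{z_0}$, and derives $\Delta(z_0) + 2 = \Delta(\OL{z_0}) = -\Delta(z_0)$, so $\abs{z_0}$ is odd and $\abs{w} = 2\abs{\beta_0} + \abs{z_0}$ is odd, a contradiction. Your computation with the prefix-imbalance function $D$ is the same parity argument in prefix-sum form (your quantity $D(j) - D(\abs{w}-j)$ plays the role of the paper's $\Delta(z_0)$), but it buys two things: it needs no type classification or cube lemma for this case, since for period $\abs{w}$ the blocks of the $8$-power are automatically the two alternating words $X_0, X_1$, including the boundary offset $j = 0$ where a block is a full copy and has no type; and because equal-length binary words are abelian equivalent exactly when their $\Delta$-values agree, your condition $D(j) - D(\abs{w}-j) = -1$ is a complete characterization, which also yields your correct side remark explaining why $g_1$ fails at even $\abs{w}$ (at $j = \abs{w}/2$ the unflipped condition $D(j) = D(\abs{w}-j)$ holds trivially). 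The paper's version buys reuse of the machinery already set up for $g_1$; yours is more self-contained at that one spot. Both are sound.
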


  As the reader might have observed, the letter $\diamond$ often did not play a particular role in the proofs of
  \autoref{ssec:odd_case}. This means that the previous lemmas transfer to the case of the words $\infw{G}_1$ and
  $\infw{G}_2$ mostly intact. Consequently, we omit repetitive details from the proofs of this section and indicate
  only what has changed.

  Similar to \autoref{ssec:odd_case}, let $u_0 \dotsm u_{N-1}$ be an abelian $N$-power occurring in $\infw{G}_1$ such
  that $\abs{u_0} \leq \abs{w}$, and consider a word $u_i$ for some $i$. We classify the word $u_i$ as follows.
  \begin{enumerate}[(A)]
    \item $u_i = \alpha_i \beta_i$ for a nonempty suffix $\alpha_i$ of $\OL{w}$ and a nonempty prefix $\beta_i$ of $w$.
    \item $u_i = \alpha_i \beta_i$ for a nonempty suffix $\alpha_i$ of $w$ and a nonempty prefix $\beta_i$ of $\OL{w}$.
  \end{enumerate}
  For an abelian $N$-power $u_0 \dotsm u_{N-1}$ occurring in $\infw{G}_2$ such that $\abs{u_0} \leq \abs{w}$, we
  define the type of $u_i$ as follows.
  \begin{enumerate}[(A)]
    \item $u_i = \alpha_i \beta_i$ for a nonempty suffix $\alpha_i$ of $\OL{w}^\bullet$ and a nonempty prefix $\beta_i$ of $w$.
    \item $u_i = \alpha_i \beta_i$ for a nonempty suffix $\alpha_i$ of $w$ and a nonempty prefix $\beta_i$ of $\OL{w}^\bullet$.
  \end{enumerate}

  Propositions \ref{prp:even_8_power_1} and \ref{prp:even_8_power_2} can be again verified when $\abs{w} < 145$, so we
  assume that $w$ has length at least $145$ for the remainder of this section. In the following lemmas, we shall make
  no use of the parity of $\abs{g_1}$ or $\abs{g_2}$. In fact, the parity shall only play a role in the proofs of
  \autoref{prp:even_8_power_1} and \autoref{prp:even_8_power_2} at the end of this section.

  \begin{lemma}\label{lem:even_half}
    If an abelian $8$-power of period $m$ occurs in $\infw{G}_1$ or $\infw{G}_2$, then $m > \tfrac12 \abs{w}$.
  \end{lemma}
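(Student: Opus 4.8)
The plan is to follow the proof of \autoref{lem:odd_half} almost verbatim, treating $\infw{G}_1$ and $\infw{G}_2$ in parallel and flagging only the points where the absent marker $\diamond$ or the modified block $\OL{w}^\bullet$ forces a change. Assume for a contradiction that $\infw{G}_1$ (respectively $\infw{G}_2$) contains an abelian $8$-power $u_0 \dotsm u_7$ with $\abs{u_0} \leq \tfrac12\abs{w}$, and work under the standing hypothesis $\abs{w} \geq 145$. Writing $\infw{G}_1 = (\OL{w}w)^\omega$ and $\infw{G}_2 = (\OL{w}^\bullet w)^\omega$, the block structure alternates between $\OL{w}$ (resp. $\OL{w}^\bullet$) and $w$, each block of length $\abs{w}$, so if no $u_i$ straddled a block boundary then every $u_i$ would be a factor of a single block and every block boundary would fall exactly at a junction between two consecutive words.

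First I would rule out this boundary-free situation, establishing the existence of a type A or type B word $u_i$. The case analysis runs on the run structure of the words across blocks. If all eight words lie in one block, the $8$-power is in particular an abelian $4$-power inside $w$ or $\OL{w}$, which is forbidden. If the words split across a single boundary, one side contains at least four consecutive words, again an abelian $4$-power inside a block. If they split across two or more boundaries, some internal run completely fills a block; when $\abs{u_0} \geq 29$ the analogue of \autoref{lem:imbalance} forbids two runs of different block type from all being abelian equivalent to $u_0$ (one type is light, the other heavy), while when $\abs{u_0} < 29$ the completely filled block forces that run to have at least six words, once more yielding a forbidden abelian $4$-power.

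The remainder mirrors the original argument. Fixing a type A word $u_i$ (type B being symmetric), I would split on the length $\abs{u_0}$. For $\abs{u_0} < 29$, four consecutive words on one side of the crossing lie inside a single block and produce a forbidden abelian $4$-power, the indices $i \leq 3$ and $i \geq 4$ being handled exactly as in \autoref{lem:odd_half}. For $\abs{u_0} \geq 29$, the neighbours $u_{i-1}$ and $u_{i+1}$ fall inside blocks of opposite type and have length $\geq 29$, contradicting the imbalance lemma; the boundary indices $i=0$ and $i=7$ are treated as before by pushing the contradiction one or two steps inward.

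The one genuinely new obstacle is the block $\OL{w}^\bullet$ in $\infw{G}_2$, which by hypothesis only avoids abelian $5$-powers rather than $4$-powers, and whose long factors are therefore not automatically heavy. Here the key observation is that the modification affects only the final letter of $\OL{w}^\bullet$, which sits at the $\OL{w}^\bullet\mid w$ junction. In every place where the argument extracts an abelian $4$-power from $\OL{w}^\bullet$ or applies \autoref{lem:imbalance} to a factor of $\OL{w}^\bullet$, that factor lies strictly to the left of a type-A crossing $u_i=\alpha_i\beta_i$ whose nonempty suffix $\alpha_i$ already absorbs the modified last letter; hence the factor is a genuine factor of $\OL{w}$, which avoids abelian $4$-powers and all of whose factors of length $\geq 29$ are heavy by \autoref{lem:light}. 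The symmetric type-B situation is analogous, the modified letter lying at the opposite end of the block from the prefix $\beta_i$. Verifying that the extremal length-$\tfrac12\abs{w}$ configurations never place the modified letter inside a factor to which \autoref{lem:light} or \autoref{lem:imbalance} is applied is the only delicate bookkeeping; once this is checked the contradiction is identical to the odd case, and \autoref{lem:even_half} follows.
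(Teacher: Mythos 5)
Your overall plan is the paper's plan: once every $u_i$ is known to have type A or B, the paper literally says to follow the proof of \autoref{lem:odd_half}, and your paragraph on the crossing case, including the observation that a nonempty $\alpha_i$ ``absorbs'' the modified last letter so that $u_{i-1}$ is a genuine factor of $\OL{w}$, is exactly the remark the paper makes. The new content of \autoref{lem:even_half}, however, is the first part --- establishing that some $u_i$ has a type at all, which in the even case no longer follows from the marker $\diamond$ --- and there your argument has a genuine gap. In the boundary-free situation you analyse by counting block boundaries, and in each case you conclude with ``a forbidden abelian $4$-power inside a block.'' But for $\infw{G}_2$ one of the blocks is $\OL{w}^\bullet$, which is only guaranteed to avoid abelian $5$-powers, not abelian $4$-powers; an abelian $4$-power that is a suffix of $\OL{w}^\bullet$ (hence contains the modified letter) is not excluded by anything you have. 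Concretely: with a single interior boundary $\OL{w}^\bullet \mid w$ and the eight words split $4+4$, your stated reasoning applied to the left group proves nothing, and with a split $5+3$ the left group is an abelian $5$-power in $\OL{w}^\bullet$, which is forbidden only via the $5$-power-freeness of $\OL{w}^\bullet$ --- a hypothesis you cite but never actually deploy. Your proposed repair mechanism cannot rescue these cases either: you say every extracted repetition ``lies strictly to the left of a type-A crossing whose nonempty suffix $\alpha_i$ absorbs the modified last letter,'' but in the boundary-free case there is, by assumption, no word of type A or B, so no crossing exists and the extracted run can perfectly well end at the block boundary and contain the modified letter. The same defect infects your $\geq 2$-boundary case: a word that is the unique word of its block and ends at the $\OL{w}^\bullet$-boundary has $\Delta \leq 1$ rather than $\Delta \leq -1$, so the light/heavy dichotomy of \autoref{lem:imbalance} does not apply to it directly; one must instead select words avoiding the modified letter (e.g.\ the first word of a fully covered block, which exists since $m \leq \tfrac12\abs{w}$).

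These holes are all repairable, and it is instructive to compare with how the paper closes them: rather than counting boundaries, the paper tracks how the words $u_j, u_{j+1}, \dotsc$ fill the block $w'$ following the one containing $u_0$, and observes that if no word ends exactly at the end of $w'$ then the abelian $5$-power $u_j \dotsm u_{j+4}$ is a \emph{prefix} of $w'$ --- impossible for both $\OL{w}$ and $\OL{w}^\bullet$ precisely because both avoid abelian $5$-powers. Otherwise $w'$ is an abelian $N$-power with $N \leq 4$, forcing $m \geq 36$, and then $u_j$ is a \emph{proper} prefix of $w'$, hence a genuine (heavy) factor of $\OL{w}$ untouched by the modified letter, contradicting the lightness of $u_0$ via \autoref{lem:light}; the degenerate case $w' = u_j$ contradicts $m \leq \tfrac12\abs{w}$. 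So the paper's argument is engineered from the start around the only property $\OL{w}^\bullet$ actually has (abelian $5$-power-freeness) and around factors that provably avoid the modified letter. To fix your write-up you would need to do the same in every boundary-free subcase: invoke $5$-power-freeness of $\OL{w}^\bullet$ where a run of $\geq 5$ words lies in that block, fall back on the complementary group in $w$ when the run has exactly $4$ words, and apply \autoref{lem:imbalance} only to words verified not to contain the final letter of $\OL{w}^\bullet$.
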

  \begin{proof}
    Assume for a contradiction that either of the words contains an abelian $8$-power $u_0 \dotsm u_7$ with period
    $m \leq \tfrac{1}{2} \abs{w}$. We first show that $u_i$ is of type A or type B for some $i$. Assume the contrary
    that no $u_i$ is of type A or type B. Say the word $u_0$ occurs in $w$ and that $w$ is followed by $w'$ where
    $w' \in \{\OL{w}, \OL{w}^\bullet\}$. Since $w$ avoids abelian $4$-powers, one of the words $u_1$, $u_2$, or $u_3$,
    say $u_j$, is a prefix of $w'$ (since they do not have a type). Since $j \leq 3$, we see that $u_{j+4}$ exists.
    There exists $u_k$ such that $u_k$ is a prefix of $w$ and $w' = u_j u_{j+1} \dotsm u_{k-1}$ for otherwise the
    abelian $5$-power $u_j u_{j+1} \dotsm u_{j+4}$ is a prefix of $w'$, but neither $\OL{w}$ nor $\OL{w}^\bullet$ can
    have such a factor. It follows that either $w'$ is an abelian $N$-power for some $N \leq 4$ or $w' = u_j$. In the
    former case, we have $\abs{u_0} = \abs{w'}/N \geq 145/4 \geq 36$, so $u_0$ is light by \autoref{lem:light}.
    However, the word $u_j$, a proper prefix of $w'$, is heavy by \autoref{lem:light}. Therefore it must be that
    $w' = u_j$, but this contradicts the assumption that $\smash[t]{\abs{u_j} \leq \tfrac12 \abs{w}}$. The case that
    $u_0$ occurs in $w'$ is symmetric.
    
    To conclude the proof, we may now follow the proof of \autoref{lem:odd_half}. Notice in particular that if $u_i$ is
    of type A, then $\alpha_i$ is nonempty, and thus the change of the final letter of $\OL{w}$ does not affect
    $u_{i-1}$. 
  \end{proof}

  The following two lemmas are combinations of Lemmas \ref{lem:split_bounds_1} and \ref{lem:split_bounds_2} adjusted
  for the words $\infw{G}_1$ and $\infw{G}_2$.

  \begin{lemma}\label{lem:even_split_bounds_1}
    Suppose that the word $\infw{G}_1$ contains an abelian $N$-power $u_0 \dotsm u_{N-1}$. Suppose that $u_i$ is of
    type A.
    \begin{enumerate}[(i)]
      \item If $\abs{\alpha_i} \geq \abs{\beta_i}$, then $\Delta(u_i) \leq 3$.
      \item If $\abs{\alpha_i} \leq \abs{\beta_i}$, then $\Delta(u_i) \geq -3$.
    \end{enumerate}
    Suppose that $u_i$ is of type B.
    \begin{enumerate}[(i)]
      \item If $\abs{\alpha_i} \geq \abs{\beta_i}$, then $\Delta(u_i) \geq -3$.
      \item If $\abs{\alpha_i} \leq \abs{\beta_i}$, then $\Delta(u_i) \leq 3$.
    \end{enumerate}
  \end{lemma}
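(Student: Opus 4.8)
The plan is to recognise that the four assertions are exactly the content of \autoref{lem:split_bounds_1} and \autoref{lem:split_bounds_2}, transported to $\infw{G}_1$, the only difference being that the separating letter $\diamond$ present in $\infw{F}$ is now absent (equivalently, one sets $\Delta(\diamond) = 0$). A type-B factor $u_i = \alpha_i\beta_i$ of $\infw{G}_1$, with $\alpha_i$ a suffix of $w$ and $\beta_i$ a prefix of $\OL{w}$, has precisely the shape treated in \autoref{lem:split_bounds_1}; a type-A factor $u_i = \alpha_i\beta_i$, with $\alpha_i$ a suffix of $\OL{w}$ and $\beta_i$ a prefix of $w$, has the shape of the type-A factors of \autoref{lem:split_bounds_2} with the $\diamond$ deleted. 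Thus I would simply replay those two proofs, and the stated bounds $\Delta(u_i) \leq 3$ and $\Delta(u_i) \geq -3$ are what remains after dropping the $\Delta(\diamond)$ term.

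The two facts I rely on are the reversal--complement structure of $\OL{\cdot}$ and the balance lemmas. Since $\OL{\cdot}$ is an anti-morphism carrying $w$ to $\OL{w}$, a prefix $\beta_i$ of $\OL{w}$ has its image $\OL{\beta_i}$ a suffix of $w$, while a prefix of $w$ has its image a suffix of $\OL{w}$; moreover $\Delta(\OL{x}) = -\Delta(x)$, so $\Delta(\OL{\beta_i}\beta_i) = 0$. For type B with $\abs{\alpha_i} \geq \abs{\beta_i}$ I would write $\alpha_i = z\,\OL{\beta_i}$, so that both $\OL{\beta_i}$ and $\alpha_i$ are suffixes of $w$ and $\Delta(u_i) = \Delta(z)$ with $z$ a factor of $w$. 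If $\Delta(z) \leq 0$ then $\abs{z} < 29$ by \autoref{lem:light}, whence $\Delta(z) \geq -3$ by \autoref{lem:difference_upper_bound}; in every case $\Delta(u_i) = \Delta(z) \geq -3$, which is (i), and the subcase $\abs{\alpha_i} \leq \abs{\beta_i}$ is symmetric and gives (ii).

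For type A the same manipulation yields $\Delta(u_i) = \Delta(z)$, but now $z$ is a factor of $\OL{w}$, hence $\OL{z}$ is a factor of $w$; applying \autoref{lem:light} and \autoref{lem:difference_upper_bound} to $\OL{z}$ and using $\Delta(z) = -\Delta(\OL{z})$ gives $\Delta(u_i) \leq 3$ in case (i) and, symmetrically, $\Delta(u_i) \geq -3$ in case (ii). Because these arguments are verbatim transcriptions of the two earlier proofs, I anticipate no genuine obstacle; the only point requiring care is the bookkeeping of whether the residual factor $z$ lies in $w$ or in $\OL{w}$, since this alone decides whether the resulting bound is $\Delta(z) \geq -3$ or $\Delta(z) \leq 3$.
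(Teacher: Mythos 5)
Your proposal is correct and matches the paper exactly: the paper's own proof is literally ``Follow the proof of \autoref{lem:split_bounds_1}'', and your replay of the arguments of Lemmas \ref{lem:split_bounds_1} and \ref{lem:split_bounds_2} with the $\diamond$ deleted (writing $\alpha_i = z\,\OL{\beta_i}$, using $\Delta(\OL{\beta_i}\beta_i)=0$, and invoking Lemmas \ref{lem:light} and \ref{lem:difference_upper_bound} on $z$ or $\OL{z}$ according to whether $z$ sits in $w$ or $\OL{w}$) is precisely the intended argument. No gap; your bookkeeping of which word contains the residual factor $z$ is the only delicate point, and you handle it correctly.
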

  \begin{proof}
    Follow the proof of \autoref{lem:split_bounds_1}.
  \end{proof}

  \begin{lemma}\label{lem:even_split_bounds_2}
    Suppose that the word $\infw{G}_2$ contains an abelian $N$-power $u_0 \dotsm u_{N-1}$. Suppose that $u_i$ is of
    type A.
    \begin{enumerate}[(i)]
      \item If $\abs{\alpha_i} \geq \abs{\beta_i}$, then $\Delta(u_i) \leq 5$.
      \item If $\abs{\alpha_i} \leq \abs{\beta_i}$, then $\Delta(u_i) \geq -1$.
    \end{enumerate}
    Suppose that $u_i$ is of type B.
    \begin{enumerate}[(i)]
      \item If $\abs{\alpha_i} \geq \abs{\beta_i}$, then $\Delta(u_i) \geq -3$.
      \item If $\abs{\alpha_i} \leq \abs{\beta_i}$, then $\Delta(u_i) \leq 3$.
    \end{enumerate}
  \end{lemma}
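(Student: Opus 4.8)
The plan is to reduce both cases to Lemma~\ref{lem:even_split_bounds_1} by carefully tracking the single-letter difference between $\OL{w}^\bullet$ and $\OL{w}$. The key preliminary observation is about that difference: the word $w$ is a prefix of $\sigma^\omega(0)$ and hence begins with $0$, so $h(w)$ begins with $1$; thus $\OL{w}$, being the reversal of $h(w)$, ends with $h(0) = 1$. Changing this final $1$ to $0$ raises the number of $0$'s by one and lowers the number of $1$'s by one, so for any word $t$ whose occurrence in $\OL{w}^\bullet$ contains the altered final position we have $\Delta(t) = \Delta(\hat{t}) + 2$, where $\hat{t}$ is the word read at the corresponding position of $\OL{w}$.

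For the type~B bounds I would argue that the modification never interferes. Here $\beta_i$ is a prefix of $\OL{w}^\bullet$, and since $u_i = \alpha_i\beta_i$ with $\alpha_i$ nonempty and $\abs{u_i} = \abs{u_0} \leq \abs{w} = \abs{\OL{w}^\bullet}$, the prefix $\beta_i$ is a \emph{proper} prefix of $\OL{w}^\bullet$. A proper prefix does not reach the altered final letter, so $\beta_i$ is equally a prefix of $\OL{w}$; the factorization $u_i = \alpha_i\beta_i$ is then literally a type~B factorization over $\OL{w}$, and the two inequalities $\Delta(u_i) \geq -3$ and $\Delta(u_i) \leq 3$ follow verbatim from the type~B part of Lemma~\ref{lem:even_split_bounds_1}.

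For the type~A bounds the modification does contribute, but in a controlled way. Now $\alpha_i$ is a \emph{suffix} of $\OL{w}^\bullet$, so, being nonempty, it always contains the altered final position; by the observation above, $\Delta(\alpha_i) = \Delta(\hat{\alpha}_i) + 2$, where $\hat{\alpha}_i$ is the suffix of $\OL{w}$ of the same length. Since $\abs{\hat{\alpha}_i} = \abs{\alpha_i}$, the hypothesis comparing $\abs{\alpha_i}$ with $\abs{\beta_i}$ transfers unchanged to $\hat{\alpha}_i$, and $\hat{\alpha}_i \beta_i$ is a type~A word for $\infw{G}_1$. Applying the type~A part of Lemma~\ref{lem:even_split_bounds_1} gives $\Delta(\hat{\alpha}_i\beta_i) \leq 3$ when $\abs{\alpha_i} \geq \abs{\beta_i}$ and $\Delta(\hat{\alpha}_i\beta_i) \geq -3$ when $\abs{\alpha_i} \leq \abs{\beta_i}$; adding the shift $\Delta(u_i) = \Delta(\hat{\alpha}_i\beta_i) + 2$ yields $\Delta(u_i) \leq 5$ and $\Delta(u_i) \geq -1$ respectively, exactly as claimed.

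The routine ingredients (that $\OL{\beta_i}\beta_i$ is neutral, and that short factors of $w$ or $\OL{w}$ obey the $\pm 3$ bounds via Lemmas~\ref{lem:light} and~\ref{lem:difference_upper_bound}) are inherited untouched from the earlier proofs, so I would not reprove them. The only genuine point to get right is the bookkeeping of the $+2$ shift and its asymmetric effect: it widens the type~A bounds, since the suffix $\alpha_i$ always sees the altered letter, but leaves the type~B bounds intact, since the length restriction $\abs{u_0} \leq \abs{w}$ forces $\beta_i$ to be a proper prefix that misses it. Verifying this asymmetry—in particular that $\beta_i$ can never be all of $\OL{w}^\bullet$—is the one place where I expect care to be needed.
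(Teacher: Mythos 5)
Your proof is correct and takes essentially the same route as the paper's: the paper likewise isolates the $+2$ shift in $\Delta$ caused by the altered final letter of $\OL{w}^\bullet$ (writing $\alpha_i = z\OL{\beta_i}^\bullet$ in the type-A case) and, in the type-B case, uses $\abs{\alpha_i} > 0$ together with $\abs{u_i} \leq \abs{w}$ to force $\beta_i$ to be a proper prefix of $\OL{w}^\bullet$, hence a prefix of $\OL{w}$, so the earlier bounds apply unchanged. The only cosmetic difference is that you route the type-A estimate through Lemma~\ref{lem:even_split_bounds_1} via the unmodified suffix $\hat{\alpha}_i$ (legitimate, since that lemma's proof uses only the factorization data, not the ambient abelian power), whereas the paper redoes the short computation $\Delta(u_i) = \Delta(z) + 2 \leq 5$ directly.
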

  \begin{proof}
    We show how to handle the cases (i). Say $u_i$ is of type A and $\abs{\alpha_i} \geq \abs{\beta_i}$. We may write
    $\alpha_i = z \OL{\beta_i}^\bullet$ for a word $z$ (here we have $\abs{\beta_i} > 0$ by definition). It follows
    that $\Delta(u_i) = \Delta(z) + 2$. Since $z$ is a factor of $\OL{w}$, we have $\Delta(z) \leq 3$, so
    $\Delta(u_i) \leq 5$.

    Suppose that $u_i$ is of type B and $\abs{\alpha_i} \geq \abs{\beta_i}$. Since $\abs{\alpha_i} > 0$ by its
    definition and $\abs{u_i} \leq \abs{w}$, we see that $\abs{\beta_i} < \abs{w}$. It follows that
    $\alpha_i = z \OL{\beta_i}$ for a word $z$. Thus $\Delta(u_i) = \Delta(z)$. Since $z$ is a factor of $w$, we see
    that $\Delta(z) \geq -3$.
  \end{proof}

  \begin{lemma}\label{lem:even_cube_1}
    Let $u_0 u_1 u_2$ be an abelian $3$-power occurring in $\infw{G}_1$. If
    \begin{enumerate}[(i)]
      \item $u_0$ occurs in $w$ or $\OL{w}$ or
      \item $u_2$ occurs in $w$ or $\OL{w}$,
    \end{enumerate}
    then $\abs{u_0} \leq \tfrac12 \abs{w}$.
  \end{lemma}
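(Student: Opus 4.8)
The plan is to mirror the proof of \autoref{lem:cube_1} almost verbatim, using \autoref{lem:even_split_bounds_1} in place of \autoref{lem:split_bounds_1} and dropping the bookkeeping for the marker $\diamond$, which is absent in $\infw{G}_1$. I argue by contradiction, assuming $\abs{u_0} > \tfrac12\abs{w}$. Since $\abs{w} \geq 145$, we have $\abs{u_0} > 72 \geq 64$, so \autoref{lem:light} and \autoref{lem:delta_lower_bound_6}, together with their mirror images for $\OL{w}$ (factors of $\OL{w}$ of length at least $29$ are heavy, and those of length at least $64$ satisfy $\Delta \leq -6$), apply to any $u_i$ that is a factor of $w$ or $\OL{w}$. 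By the left--right symmetry of the power $u_0 u_1 u_2$ it suffices to treat condition (i), and I present the representative subcase in which $u_0$ is a factor of $w$; the subcase where $u_0$ is a factor of $\OL{w}$ is the light/heavy mirror, while condition (ii) follows by reading the power from right to left.

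So suppose $u_0$ is a factor of $w$. Then $u_0$ is light and $\Delta(u_0) \geq 6$. Since $u_0$ occupies more than half of the copy of $w$ containing it, the next word $u_1$ begins inside that copy of $w$ and spills into the following block $\OL{w}$; hence $u_1$ is of type B, say $u_1 = \alpha_1\beta_1$. (If $u_0$ were exactly a suffix of $w$, then $u_1$ would be a factor of $\OL{w}$ of length at least $29$, contradicting \autoref{lem:imbalance} against $u_0$; so I may assume $\alpha_1$ is nonempty.) If $\abs{\alpha_1} \leq \abs{\beta_1}$, then \autoref{lem:even_split_bounds_1} gives $\Delta(u_1)\leq 3$, contradicting $\Delta(u_1) = \Delta(u_0) \geq 6$. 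Hence $\abs{\alpha_1} > \abs{\beta_1}$.

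Now $u_0\alpha_1$ is a suffix of $w$, so $\abs{u_0\alpha_1} \leq \abs{w}$, and since $\abs{\beta_1} < \abs{\alpha_1}$ I obtain $\abs{\beta_1 u_2} = \abs{\beta_1} + \abs{u_0} < \abs{\alpha_1} + \abs{u_0} \leq \abs{w}$. As $\beta_1$ is a prefix of $\OL{w}$ and $u_2$ immediately follows it, this forces $u_2$ to be a factor of $\OL{w}$; but $\Delta(u_2)=\Delta(u_0)$ with $\abs{u_0} \geq 29$, so \autoref{lem:imbalance} is violated. For the subcase where $u_0$ is a factor of $\OL{w}$ one replaces ``light'' by ``heavy'' and $\Delta \geq 6$ by $\Delta \leq -6$, notes that $u_1$ is now of type A, and uses the type-A clauses of \autoref{lem:even_split_bounds_1} to force $\abs{\alpha_1} > \abs{\beta_1}$ and then conclude that $u_2$ is a factor of $w$. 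The two subcases of condition (ii) are identical after examining the pair $u_1, u_0$ in place of $u_1, u_2$, so that $u_1$ precedes $u_2$ and spills backward into the preceding block. The only genuine subtlety is the correct type assignment of $u_1$ together with the degenerate boundary case noted above; the remaining bookkeeping is exactly as in \autoref{lem:cube_1}.
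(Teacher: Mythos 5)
Your proof is correct and follows essentially the same route as the paper, whose own proof of this lemma is literally the instruction to follow the proof of \autoref{lem:cube_1} with \autoref{lem:even_split_bounds_1} substituted for \autoref{lem:split_bounds_1}: assume $\abs{u_0} > \tfrac12\abs{w}$, use \autoref{lem:delta_lower_bound_6} and the type-A/B clauses to force $\abs{\alpha_1} > \abs{\beta_1}$, and then trap $u_2$ (or $u_0$) in the opposite block to contradict \autoref{lem:imbalance}. Your explicit treatment of the degenerate case where $u_0$ is a suffix of its block (so that $u_1$ lacks a type) is a detail the paper glosses over, but it is handled correctly and does not change the argument.
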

  \begin{proof}
    Follow the proof of \autoref{lem:cube_1} and apply \autoref{lem:even_split_bounds_1} appropriately.
  \end{proof}

  \begin{lemma}\label{lem:even_cube_2}
    Let $u_0 u_1 u_2$ be an abelian $3$-power occurring in $\infw{G}_2$. If
    \begin{enumerate}[(i)]
      \item $u_0$ occurs in $w$ or $\OL{w}^\bullet$ or
      \item $u_2$ occurs in $w$ or $\OL{w}^\bullet$,
    \end{enumerate}
    then $\abs{u_0} \leq \tfrac12 \abs{w}$.
  \end{lemma}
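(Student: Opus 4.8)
The plan is to mirror the proof of \autoref{lem:cube_1}, now reading it through \autoref{lem:even_split_bounds_2} and through the block structure $\infw{G}_2 = (\OL{w}^\bullet w)^\omega$. I argue by contradiction, assuming $\abs{u_0} > \tfrac12 \abs{w}$; since $u_0$ or $u_2$ is a factor of a single block of length $\abs{w}$, this forces $\tfrac12 \abs{w} < \abs{u_0} \leq \abs{w}$, and in particular $\abs{u_i} \geq 73 \geq 64$ for every $i$ (recall $\abs{w} \geq 145$). The engine of the argument is a replacement for \autoref{lem:imbalance}: a factor $x$ of $w$ with $\abs{x} \geq 64$ satisfies $\Delta(x) \geq 6$ by \autoref{lem:delta_lower_bound_6}, while a factor $x$ of $\OL{w}^\bullet$ with $\abs{x} \geq 64$ satisfies $\Delta(x) \leq -4$: if $x$ avoids the altered final letter it is a factor of $\OL{w}$, so $\OL{x}$ is a factor of $w$ and $\Delta(x) = -\Delta(\OL{x}) \leq -6$; and if $x$ contains the altered letter, changing that letter perturbs $\Delta$ by at most $2$, giving $\Delta(x) \leq -4$. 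Thus a long factor of a $w$-block and a long factor of an $\OL{w}^\bullet$-block are never abelian equivalent, since $6 > -4$.

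For condition (i), suppose first $u_0 \subseteq w$, so $\Delta(u_1) = \Delta(u_2) = \Delta(u_0) \geq 6$. As $u_0$ fills more than half its block, $u_1$ straddles the following $w \mid \OL{w}^\bullet$ boundary and has type B. Were $\abs{\alpha_1} \leq \abs{\beta_1}$, \autoref{lem:even_split_bounds_2} would give $\Delta(u_1) \leq 3$, so $\abs{\alpha_1} > \abs{\beta_1}$. Writing $u_0 \alpha_1$ as a suffix of $w$ and counting lengths exactly as in \autoref{lem:cube_1}, this inequality forces $\abs{\beta_1 u_2} \leq \abs{w}$, so $u_2$ is a factor of the following $\OL{w}^\bullet$-block and hence $\Delta(u_2) \leq -4$ --- contradicting $\Delta(u_2) \geq 6$. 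The case $u_0 \subseteq \OL{w}^\bullet$ is symmetric (now $\Delta(u_0) \leq -4$, the word $u_1$ has type A, and \autoref{lem:even_split_bounds_2} yields $\Delta(u_1) \geq -1$ unless $\abs{\alpha_1} > \abs{\beta_1}$), with the lone extra point that if $u_0$ is a suffix of $\OL{w}^\bullet$ one instead reads off $u_1 \subseteq w$ directly.

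For condition (ii), say $u_2 \subseteq w$ (the case $u_2 \subseteq \OL{w}^\bullet$ being symmetric). If $u_1$ sat inside the preceding $\OL{w}^\bullet$-block it would be heavy and could not equal the light $u_2$; hence $u_1$ straddles the $\OL{w}^\bullet \mid w$ boundary with type A, and \autoref{lem:even_split_bounds_2} forces $\abs{\beta_1} > \abs{\alpha_1}$ (else $\Delta(u_1) \leq 5 < 6$). If $u_0$ now fits inside the $\OL{w}^\bullet$-block, the sign clash closes the case at once. Otherwise $u_0$ straddles the previous $w \mid \OL{w}^\bullet$ boundary with type B, and a second application of \autoref{lem:even_split_bounds_2} gives $\abs{\alpha_0} > \abs{\beta_0}$; writing both split inequalities in terms of the block offsets and adding them collapses to $\abs{w} < \abs{u_0} \leq \abs{w}$, the final contradiction.

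The one genuinely new difficulty, and the step I would check most carefully, is the altered final letter of $\OL{w}^\bullet$: it destroys the exact symmetry exploited by \autoref{lem:imbalance} and by \autoref{lem:split_bounds_1} and \autoref{lem:split_bounds_2}, which is precisely why \autoref{lem:even_split_bounds_2} carries shifted constants (for instance $\Delta(u_i) \leq 5$ for type A rather than $\leq 3$). The point to confirm is that these shifts never close the gap: the $+2$ correction keeps heavy factors at $\Delta \leq -4 < 0$, and each of the type bounds $5$, $-1$, $3$, $-3$ stays strictly on the correct side of the thresholds $6$ and $-6$ supplied by \autoref{lem:delta_lower_bound_6}. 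Once this margin is in hand, the length bookkeeping of \autoref{lem:cube_1} transfers essentially unchanged.
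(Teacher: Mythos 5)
Your proof is correct and follows the paper's skeleton --- mirror \autoref{lem:cube_1} through \autoref{lem:even_split_bounds_2} --- but it differs in two worthwhile ways. First, where the paper handles the altered final letter ad hoc (splitting on whether $u_0$ is a suffix of $\OL{w}^\bullet$, with a separate Parikh computation when $\abs{u_0} = \abs{w}$ and an appeal to \autoref{lem:imbalance} otherwise), you prove a uniform quantitative imbalance bound: every factor of $\OL{w}^\bullet$ of length at least $64$ has $\Delta \leq -4$, since the changed letter perturbs $\Delta$ by exactly $2$. This absorbs all the suffix sub-cases at once; in fact it quietly repairs a slip in the paper's own treatment of the sub-case $u_0 = \OL{w}^\bullet$, $u_1 = w$, where the displayed chain actually yields $\Delta(w) = 1$ (not $-1$), which contradicts \autoref{lem:delta_lower_bound_6} rather than \autoref{lem:light}. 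Second, the paper writes out only the case ``$u_0$ occurs in $\OL{w}^\bullet$'' and leaves the rest as analogous, whereas you carry out condition (ii) explicitly, with a different finishing move: instead of mirroring the suffix-length bookkeeping of case (i), you force $\abs{\alpha_0} > \abs{\beta_0}$ and $\abs{\beta_1} > \abs{\alpha_1}$ via the split bounds and add these against $\abs{\beta_0} + \abs{\alpha_1} = \abs{w}$ to obtain $\abs{u_0} > \abs{w}$, contradicting $\abs{u_0} = \abs{u_2} \leq \abs{w}$. Both length computations check out: in case (i), $\abs{\beta_1 u_2} = \abs{\beta_1} + \abs{u_0} < \abs{\alpha_1} + \abs{u_0} \leq \abs{w}$, so $u_2$ indeed lands inside the oppositely-signed block, and the case (ii) summation is valid because $\beta_0 \alpha_1$ exactly covers the intervening $\OL{w}^\bullet$-block. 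One cosmetic unguarded spot: in the sub-case $u_0 \subseteq w$ you assert that $u_1$ ``straddles the boundary with type B,'' which fails when $u_0$ is a suffix of its block (then $u_1$ lies inside the following $\OL{w}^\bullet$-block); your own sign clash ($\Delta(u_1) \leq -4$ versus $\Delta(u_1) = \Delta(u_0) \geq 6$) disposes of this immediately, and you flag the analogous point in the symmetric sub-case, so nothing is broken.
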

  \begin{proof}
    We show how to handle the case where $u_0$ occurs in $\OL{w}^\bullet$. Assume on the contrary that
    $\abs{u_0} > \tfrac12 \abs{w}$ and $u_0$ occurs in $\OL{w}^\bullet$. Suppose first that $u_0$ is a suffix of
    $\OL{w}^\bullet$. If $\abs{u_0} = \abs{w}$, then $u_1 = w$ and consequently $\OL{w}^\bullet$ and $w$ are abelian
    equivalent. This means that $\abs{w}_0 = \abs{\OL{w}}_1 = \abs{\OL{w}^\bullet}_1 + 1 = \abs{w}_1 + 1$. Thus
    $\Delta(w) = -1$, and this contradicts \autoref{lem:light}. Therefore $\abs{u_0} < \abs{w}$ implying that $u_1$ is
    a factor of $w$. Thus $\Delta(u_1) \geq 6$ by \autoref{lem:delta_lower_bound_6}. On the other hand, by taking into
    account the changed final letter of $\OL{w}^\bullet$, \autoref{lem:difference_upper_bound} implies that
    $\Delta(u_0) \leq -1$, so it is not possible that $\Delta(u_0) = \Delta(u_1)$.

    We may thus assume that $u_0$ is not a suffix of $\OL{w}^\bullet$. Since $\abs{u_0} > \tfrac12 \abs{w}$, it follows
    that the word $u_1$ is of type A. If $\abs{\alpha_1} \leq \abs{\beta_1}$, then $\Delta(u_0) = \Delta(u_1) \geq -1$
    by \autoref{lem:even_split_bounds_2}. This contradicts \autoref{lem:delta_lower_bound_6}, so
    $\abs{\alpha_1} > \abs{\beta_1}$. Since $u_0 \alpha_1$ is a suffix of $\OL{w}^\bullet$, it follows that
    $\abs{u_0 \alpha_1} \leq \abs{w}$. Hence $\abs{\beta_1 u_2} \leq \abs{w}$ and $u_2$ is a factor of $w$. This
    contradicts \autoref{lem:imbalance}.
  \end{proof}

  \begin{lemma}\label{lem:8_less_than_w_even_1}
    The word $\infw{G}_1$ does not contain abelian $8$-powers of period $m$ such that $m < \abs{w}$.
  \end{lemma}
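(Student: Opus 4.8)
The plan is to mirror the proof of \autoref{lem:8_greater_than_w} almost verbatim, substituting the even-case auxiliary results for their odd-case counterparts and tracking the one place where the geometry genuinely differs. First I would assume, for a contradiction, that $\infw{G}_1$ contains an abelian $8$-power $u_0 \dotsm u_7$ of period $m < \abs{w}$. \autoref{lem:even_half} guarantees $m > \tfrac12\abs{w}$, so in fact $\tfrac12\abs{w} < \abs{u_0} < \abs{w}$. Applying \autoref{lem:even_cube_1} to the consecutive abelian $3$-powers among $u_0, \dotsc, u_7$ then shows that none of the $u_i$ can be a factor of $w$ or $\OL{w}$, so each $u_i$ has type A or type B. Since $\infw{G}_1 = (\OL{w}\,w)^\omega$ and each $u_i$ has length strictly between $\tfrac12\abs{w}$ and $\abs{w}$, every $u_i$ straddles exactly one of the alternating boundaries $\OL{w}\mid w$ and $w\mid\OL{w}$; consequently the types alternate, so that $u_0, u_2, u_4, u_6$ share one type while $u_1, u_3, u_5, u_7$ share the other.

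Next I would fold the $8$-power into the abelian $4$-power $v_0 v_1 v_2 v_3$ with $v_i = u_{2i}u_{2i+1}$ of period $2\abs{u_0}$, and set $M = \abs{g_1} - 2\abs{u_0} = 2(\abs{w} - \abs{u_0})$. This is the crucial point where the even case diverges from the odd one: because $\abs{g_1} = 2\abs{w}$ rather than $2\abs{w}+1$, the quantity $M$ is positive precisely on account of the strict inequality $\abs{u_0} < \abs{w}$. This is exactly why the present lemma must be stated with $m < \abs{w}$ and why the borderline case $m = \abs{w}$ is deferred to the parity-dependent arguments in the proofs of \autoref{prp:even_8_power_1} and \autoref{prp:even_8_power_2}. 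With $M > 0$, the period $2\abs{u_0}$ lies in the range $\tfrac12\abs{g_1} \leq 2\abs{u_0} < \abs{g_1}$, so \autoref{lem:reduction_short} yields an abelian $4$-power $s_3 s_2 s_1 s_0$ of period $M$ in $\infw{G}_1$ that, by inspection of that lemma's proof, ends exactly where $v_0 \dotsm v_3$ begins.

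To reach the contradiction I would assume $u_0$ is of type A (the type-B case being symmetric) and write $\OL{w} = \beta_{-1}\alpha_0$. Because $u_1$ is of type B, the absence of the separator now gives $v_0 = \alpha_0\, w\, \beta_1$, and comparing lengths yields $\beta_{-1} = \beta_1 s_0$, where the identification of the length-$M$ suffix of $\beta_{-1}$ with $s_0$ uses the positioning supplied by \autoref{lem:reduction_short} together with the fact that $\beta_1$ and $\beta_{-1}$ are both prefixes of $\OL{w}$. Repeating this for $v_1, v_2, v_3$ produces the telescoping relations $\beta_{2i-1} = \beta_{2i+1} s_i$, whence $\beta_{-1} = \beta_7 s_3 s_2 s_1 s_0$. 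Since $\beta_{-1}$ is a prefix of $\OL{w}$, the word $\OL{w}$ would then contain the abelian $4$-power $s_3 \dotsm s_0$, contradicting the fact that $\OL{w}$ is abelian $4$-free. I expect the only real subtlety to be the bookkeeping that turns the geometric statement that $s_3\dotsm s_0$ ends where $v_0\dotsm v_3$ begins into the prefix identity $\beta_{-1} = \beta_1 s_0$ and its iterates; once the marker $\diamond$ is dropped this is routine, so the main conceptual obstacle is simply recognizing that $M>0$ forces the strict bound $m<\abs{w}$, thereby isolating the case $m=\abs{w}$ for the separate, parity-sensitive treatment.
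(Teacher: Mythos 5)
Your proof is correct and follows exactly the route the paper takes: its own proof of this lemma simply says to repeat the argument of \autoref{lem:8_greater_than_w} with $\diamond$ omitted, using Lemmas \ref{lem:even_half} and \ref{lem:even_cube_1} in place of their odd-case counterparts, with $M = \abs{g_1} - 2\abs{u_0} > 0$ secured by the strict bound $\abs{u_0} < \abs{w}$. You have in fact spelled out the details the paper leaves implicit, including the correct observation that the strict inequality is forced by $\abs{g_1} = 2\abs{w}$ and that the case $m = \abs{w}$ is deferred to the parity arguments in Propositions \ref{prp:even_8_power_1} and \ref{prp:even_8_power_2}.
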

  \begin{proof}
    The proof of \autoref{lem:8_greater_than_w} works mostly as it is for the word $\infw{G}_1$. Indeed, Lemmas
    \ref{lem:even_half} and \ref{lem:even_cube_1} guarantee that each $u_i$ is of type A or B. Set
    $M = \abs{g_1} - 2\abs{u_0}$. Notice that we assume $\abs{u_0} < \abs{w} = \abs{g_1}/2$, so $M > 0$. The remaining
    arguments are the same, only the $\diamond$ symbol is omitted. The conclusion is that $\OL{w}$ contains an abelian
    $4$-power of period $M$ ending at position $\abs{w} - \abs{\alpha_0}$. This is impossible.
  \end{proof}

  \begin{lemma}\label{lem:8_less_than_w_even_2}
    The word $\infw{G}_2$ does not contain abelian $8$-powers of period $m$ such that $m < \abs{w}$.
  \end{lemma}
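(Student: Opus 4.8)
The plan is to run the argument of \autoref{lem:8_less_than_w_even_1} almost verbatim, the only genuine new feature being the modified final letter of $\OL{w}^\bullet$. So I would assume for a contradiction that $\infw{G}_2$ contains an abelian $8$-power $u_0 \dotsm u_7$ of period $m < \abs{w}$. By \autoref{lem:even_half} I may take $m > \tfrac12\abs{w}$, so that $\tfrac12\abs{w} < m < \abs{w}$. Next I would invoke \autoref{lem:even_cube_2}: applied to each of the consecutive abelian $3$-powers $u_j u_{j+1} u_{j+2}$ (all of period $m > \tfrac12\abs{w}$), it forbids any $u_i$ from being a factor of $w$ or $\OL{w}^\bullet$, and ranging over the triples $(u_0,u_1,u_2),\dotsc,(u_5,u_6,u_7)$ covers every index $i\in\{0,\dotsc,7\}$.

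Since no $u_i$ lies inside a single segment of $\infw{G}_2$, each $u_i$ crosses a boundary, and as $m<\abs{w}$ equals the length of one segment, it crosses exactly one boundary; hence each $u_i$ has type A or type B, the types alternate, and $u_0,u_2,u_4,u_6$ share one type while $u_1,u_3,u_5,u_7$ share the other, with $u_0$ of type A if and only if $u_1$ is of type B. Writing $v_i = u_{2i}u_{2i+1}$, the word $v_0v_1v_2v_3$ is an abelian $4$-power of period $2m$ satisfying $\tfrac12\abs{g_2} < 2m < \abs{g_2}$, so \autoref{lem:reduction_short} produces an abelian $4$-power $s_3 s_2 s_1 s_0$ of period $M = \abs{g_2} - 2m > 0$ ending exactly where $v_0\dotsm v_3$ begins. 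The chaining argument of \autoref{lem:8_greater_than_w}, carried out with the $\diamond$ symbol simply omitted, then confines $s_3 s_2 s_1 s_0$ to a single segment: to a $w$-segment when $u_0$ has type B, and to an $\OL{w}^\bullet$-segment when $u_0$ has type A.

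The one delicate point — and the only real obstacle — is that $\OL{w}^\bullet$ avoids only abelian $5$-powers, not abelian $4$-powers, so the type A case is not an immediate contradiction. Here I would use the precise location of the power. The chaining writes $\OL{w}^\bullet = \beta_{-1}\alpha_0$ with $s_3 s_2 s_1 s_0$ a suffix of $\beta_{-1}$, and $\alpha_0$ is nonempty by the definition of type A; hence $s_3 s_2 s_1 s_0$ ends at position $\abs{\beta_{-1}} = \abs{w} - \abs{\alpha_0} \leq \abs{w} - 1$, strictly before the final, modified letter of $\OL{w}^\bullet$. As $\OL{w}^\bullet$ and $\OL{w}$ agree except in that last letter, $s_3 s_2 s_1 s_0$ is in fact a factor of $\OL{w}$, contradicting that $\OL{w}$ avoids abelian $4$-powers. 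In the type B case $s_3 s_2 s_1 s_0$ is a factor of $w$, which also avoids abelian $4$-powers, so the contradiction is immediate. This handles both cases and establishes the lemma.
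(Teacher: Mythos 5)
Your proposal is correct and follows the paper's own proof essentially verbatim: reduce to $\tfrac12\abs{w} < m < \abs{w}$ via \autoref{lem:even_half}, force each $u_i$ to have type A or B via \autoref{lem:even_cube_2}, run the chaining argument of \autoref{lem:8_greater_than_w} with $\diamond$ omitted, and resolve the only new difficulty exactly as the paper does — observing that since $\abs{\alpha_0} > 0$ the resulting abelian $4$-power ends at position $\abs{w} - \abs{\alpha_0}$, so it is not a suffix of $\OL{w}^\bullet$ and hence occurs in $\OL{w}$, which is abelian $4$-free. Your treatment is in fact slightly more explicit than the paper's (spelling out the type B case rather than citing symmetry), but the route is the same.
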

  \begin{proof}
    We proceed as in the proof of \autoref{lem:8_greater_than_w}. By Lemmas \ref{lem:even_half} and
    \ref{lem:even_cube_2}, we may suppose that each $u_i$ is of type A or B. Set $M = \abs{g_2} - 2\abs{u_0}$. Again,
    $\abs{u_0} < \abs{w}$ is assumed so $M > 0$. Following the arguments of \autoref{lem:8_greater_than_w} (omitting
    $\diamond$), we find that $\OL{w}^{\bullet}$ contains an abelian $4$-power ending at position
    $\abs{w} - \abs{\alpha_0}$. Observe that since the repetition is not a suffix of $\OL{w}^{\bullet}$ (as
    $\abs{\alpha_0} > 0$), the same abelian $4$-power occurs in $\OL{w}$. This is a contradiction.
  \end{proof}

  \begin{proof}[Proof of \autoref{prp:even_8_power_1}]
    Say $\abs{w}$ is odd, and suppose for a contradiction that $\infw{G}_1$ contains an abelian $8$-power of period $m$
    such that $m < \abs{g_1} = 2\abs{w}$. By \autoref{lem:reduction_short}, we may suppose that $m \leq \abs{w}$.
    \autoref{lem:8_less_than_w_even_1} implies that $m = \abs{w}$. By \autoref{lem:even_cube_1}, it is not possible
    that $u_i = w$ or $u_i = \OL{w}$ for some $i$. Therefore all $u_i$ are of type A or B. We handle the case that
    $\abs{\alpha_0} \geq \abs{\beta_0}$; the case $\abs{\alpha_0} \leq \abs{\beta_0}$ is symmetric. Write
    $\alpha_0 = z_0 \OL{\beta_0}$ so that $\Delta(u_0) = \Delta(z_0)$. When $u_0$ is of type A, the word $w$ has prefix
    $\beta_0 \OL{z_0}$ and suffix $z_1 \OL{\beta_1}$ (here $u_1 = \alpha_1 \beta_1 = z_1 \OL{\beta_1} \beta_1$). Since
    $m = \abs{w}$, we have $\abs{\beta_0} = \abs{\beta_1}$. Since
    $m = 2\abs{\beta_0} + \abs{z_0} = 2\abs{\beta_1} + \abs{z_1}$, we conclude that $\OL{z_0} = z_1$. The same
    conclusion is reached if $u_0$ is type B. Since $\Delta(u_0) = \Delta(u_1) = \Delta(z_1)$, we have
    $\Delta(z_0) = \Delta(z_1) = \Delta(\OL{z_0})$, so $\Delta(z_0) = 0$. Therefore $\abs{z_0}$ is even. Since
    $\abs{w} = m = 2\abs{\beta_0} + \abs{z_0}$, it follows that $\abs{w}$ is even. This is contrary to our hypothesis
    that $\abs{w}$ is odd.
  \end{proof}

  \begin{proof}[Proof of \autoref{prp:even_8_power_2}]
    Suppose that $\abs{w}$ is even, and assume for a contradiction that $\infw{G}_2$ contains an abelian $8$-power of
    period $m$ with $m < 2\abs{w}$. As in the proof of \autoref{prp:even_8_power_1}, we see that it must be that
    $m = \abs{w}$. Moreover, the words $u_i$ are of type A or B by \autoref{lem:even_cube_2}. Suppose that $u_0$ is of
    type A and $\abs{\alpha_0} \geq \abs{\beta_0}$. The remaining cases are similar. Write
    $u_0 = \alpha_0^\bullet \beta_0 = z_0 \OL{\beta_0}^\bullet \! \beta_0$ and
    $u_1 = \alpha_1 \beta_1 = z_1 \OL{\beta_1} \beta_1$ for some words $z_0$ and $z_1$ of the same length. Therefore
    $\Delta(u_0) = \Delta(z_0) + 2 = \Delta(u_1) = \Delta(z_1)$. Since $\abs{\beta_0} > 0$, it is straightforward to
    see that $\OL{z_0} = z_1$. Thus $\Delta(z_0) + 2 = \Delta(\OL{z_0})$, that is,
    $\abs{z_0}_0 - \abs{z_0}_1 + 2 = \abs{z_0}_1 - \abs{z_0}_0$. It follows that $\abs{z_0}_0 + 1 = \abs{z_0}_1$, and
    so $\abs{z_0} = \abs{z_0}_0 + \abs{z_0}_1 = 2\abs{z_0}_0 + 1$. Therefore $\abs{z_0}$ is odd, and consequently
    $\abs{w} = 2\abs{\beta_0} + \abs{z_0}$ is odd. This is a contradiction.
  \end{proof}

  Propositions \ref{prp:even_8_power_1} and \ref{prp:even_8_power_2} together with \autoref{prp:odd_8_power} imply
  \autoref{thm:binary_case}.

  \section{Avoiding Ordinary Powers Cyclically}\label{sec:ordinary_powers}
  As mentioned in the introduction, previous research has considered the avoidance of ordinary powers in circular
  words. A circular word is simply a conjugacy class of words, that is, a word $w$ avoids $N$-powers circularly if none
  of the conjugates of $w$ contains an $N$-power as a factor. This constrains the periods to have length at most
  $\floor{\abs{w}/N}$ while our definition of cyclic avoidance disallows periods up to length $\abs{w} - 1$. The
  purpose of this section is to generalize the known results on circular avoidance of powers to our cyclic setting.

  \begin{definition}
    Let $w$ be a word. Then $w$ avoids $N$-powers \emph{cyclically} if for each $N$-power of period $m$ occurring in
    the infinite word $w^\omega$, we have $m \geq \abs{w}$.
  \end{definition}

  The following analogue of \autoref{lem:reduction_short} is straightforward to prove.

  \begin{lemma}\label{lem:ordinaryShortPowers}
    Assume that $x^\omega$ contains an $N$-power of period $m$ with $\frac12 \abs{x} \leq m < \abs{x}$. Then it
    contains an $N$-power with period $\abs{x} - m$.
  \end{lemma}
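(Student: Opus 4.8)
The plan is to treat this as the exact ordinary analogue of \autoref{lem:reduction_short}, but \emph{not} to mimic that lemma's inductive proof: for ordinary powers the relevant blocks are governed by literal equality of positions rather than by Parikh vectors, which makes a direct and uniform periodicity argument available. First I would reduce to the case where the $N$-power is a prefix of $x^\omega$; conjugating $x$ replaces $x^\omega$ by one of its shifts and hence leaves its set of factors unchanged, so we may assume $x^\omega$ begins with an $N$-power of period $m$. Write $d = \abs{x} - m$; the hypothesis $\tfrac12\abs{x} \le m < \abs{x}$ gives $0 < d \le \tfrac12\abs{x} \le m$, and the resulting inequality $d \le m$ is what makes everything work.

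The key point is that $x^\omega$ carries two periods simultaneously: the global period $\abs{x}$, valid everywhere, and the local period $m$, valid on the prefix of length $Nm$; and crucially $m \equiv -d \pmod{\abs{x}}$ since $m + d = \abs{x}$. I would encode $x$ as a cyclic word $c\colon \Z/\abs{x}\Z \to A$ with $x^\omega[i] = c(i \bmod \abs{x})$. The assumption that $x^\omega$ begins with an $N$-power of period $m$ then reads $c(i\bmod\abs{x}) = c((i+m)\bmod\abs{x})$ for $0 \le i < (N-1)m$, and because $(i+m) \equiv (i - d) \pmod{\abs{x}}$ this is exactly the statement that $c(a) = c(a+d)$ holds for every $a$ in a cyclic arc of length $(N-1)m$ (which is all of $\Z/\abs{x}\Z$ as soon as $(N-1)m \ge \abs{x}$). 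Since $m \ge d$, this arc is at least as long as the $(N-1)d$ positions one needs: restricting to a sub-arc of length $(N-1)d$ yields a starting index $t$ with $x^\omega[t+j] = x^\omega[t+j+d]$ for $0 \le j < (N-1)d$, that is, an $N$-power of period $d = \abs{x} - m$ occurring in $x^\omega$.

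The step needing care — and the only real obstacle — is the length bookkeeping, because a purely linear combination of the two periods is insufficient. Ignoring the wrap-around shows merely that the prefix of $x^\omega$ of length $(N-1)m$ has period $d$, and $(N-1)m$ may be strictly smaller than the $Nd$ letters required to house an $N$-power of period $d$ (for instance when $N = 3$ and $m$ is just above $\tfrac12\abs{x}$). The cyclic viewpoint removes this difficulty. For $N \ge 3$ the arc has length $(N-1)m \ge 2m \ge \abs{x}$, so it wraps completely around $\Z/\abs{x}\Z$ and forces $c(a) = c(a+d)$ for \emph{all} $a$; then $x^\omega$ has global period $d$ and any window of length $Nd$ is the desired power. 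For $N = 2$ the arc has length $m \ge d$ and already contains a sub-arc of length $d$ without wrapping. In the write-up I would either run the uniform arc argument above or, for transparency, split along these two regimes, neither of which involves more than a line of checking.
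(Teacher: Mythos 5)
Your proof is correct, but it takes a genuinely different route from the paper, which in fact writes no proof at all: it declares the lemma ``straightforward to prove'' as the analogue of \autoref{lem:reduction_short}, implicitly inviting the reader to rerun that lemma's induction on $N$ (base case $N=2$ by comparing the overhanging suffixes, inductive step by conjugating to $s_0u_0$). You instead exploit what is special about ordinary powers --- positionwise equality --- to superpose the two periods: the global period $\abs{x}$ and the local period $m \equiv -d \pmod{\abs{x}}$ combine on the cyclic word into $c(a) = c(a+d)$ along an arc of length $(N-1)m$, and since $m \geq d$ this arc houses the $(N-1)d$ equalities needed for an $N$-power of period $d$. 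This is sound in every detail I checked (the arc starts at residue $m$, and a concrete starting position such as $t = m$ realizes the sub-arc as a factor of $x^\omega$; for $N=2$ the sub-arc $[m, \abs{x})$ even avoids wrapping exactly because $m + d = \abs{x}$). What your approach buys: it is non-inductive and uniform in $N$, and for $N \geq 3$ it yields the strictly stronger conclusion that $x^\omega$ is globally $d$-periodic, not merely that one short power occurs. What it cannot do is replace the paper's actual proof of \autoref{lem:reduction_short}, since abelian equivalence is not positionwise and the two-period superposition has no Parikh-vector analogue --- which is presumably why the paper proves the abelian version by induction and only gestures at this one. One small remark: the ``obstacle'' you flag is slightly overstated. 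Careful linear bookkeeping already shows $x^\omega[a] = x^\omega[a+d]$ for all $a \in [m, Nm)$, i.e.\ the factor $x^\omega[m \,..\, Nm+d-1]$ of length $(N-1)m + d \geq Nd$ has period $d$, so even the wrap-free argument closes; your cyclic formulation is nonetheless the cleaner way to see it.
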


  This lemma implies that the concepts of avoiding $2$-powers circularly and avoiding $2$-powers cyclically are the
  same concept. When $N > 2$, this is not true. For example, the word $00$ avoids abelian $N$-powers circularly for
  $N > 2$, but it never avoids abelian $N$-powers cyclically.

  Currie proved in \cite{2002:there_are_ternary_circular_square-free} that if $n \notin \{5, 7, 9, 10, 14, 17\}$ then
  there exists a word of length $n$ over a $3$-letter alphabet that avoids $2$-powers circularly. By the preceding
  paragraph, we have the following result (notice that $2$-powers cannot be avoided with just two letters).

  \begin{theorem}\label{thm:ordinary_ternary}
    If $n \notin \{5, 7, 9, 10, 14, 17\}$ then there exists a word of length $n$ over a $3$-letter alphabet that avoids
    $2$-powers cyclically.
  \end{theorem}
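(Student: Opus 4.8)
The plan is to deduce the statement directly from Currie's circular result, by making precise the observation recorded just after \autoref{lem:ordinaryShortPowers} that, for $2$-powers, circular avoidance and cyclic avoidance are literally the same condition. The bulk of the argument is this equivalence; the conclusion is then a one-line citation.

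First I would fix a word $w$ and compare the two notions. Cyclic avoidance forbids every period $m$ with $m < \abs{w}$, whereas circular avoidance forbids only the periods $m \leq \floor{\abs{w}/2}$: a $2$-power of period $m$ has length $2m$, and it occurs in some conjugate of $w$ (equivalently, in a length-$\abs{w}$ window of $w^\omega$) precisely when $2m \leq \abs{w}$. Hence cyclic avoidance trivially implies circular avoidance, and it remains to prove the converse.

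For the converse I would show that circular avoidance already rules out the remaining periods. Suppose $w^\omega$ contained a $2$-power of period $m$ with $\floor{\abs{w}/2} < m < \abs{w}$. Then $\tfrac12\abs{w} \leq m < \abs{w}$, so \autoref{lem:ordinaryShortPowers} produces a $2$-power of period $\abs{w} - m$ in $w^\omega$. A short parity check shows $\abs{w} - m \leq \floor{\abs{w}/2}$: if $\abs{w} = 2t$ then $m \geq t+1$ gives $\abs{w}-m \leq t-1$, and if $\abs{w} = 2t+1$ then $m \geq t+1$ gives $\abs{w}-m \leq t$. In either case this shorter period is forbidden by circular avoidance, a contradiction. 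Thus the two notions coincide for $N = 2$, and Currie's construction \cite{2002:there_are_ternary_circular_square-free} of a ternary word of length $n$ avoiding $2$-powers circularly for each $n \notin \{5, 7, 9, 10, 14, 17\}$ yields exactly the words required by \autoref{thm:ordinary_ternary}.

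I do not expect any genuine obstacle here, since every ingredient is already in place; the only point requiring a moment of care is the boundary arithmetic verifying $\abs{w} - m \leq \floor{\abs{w}/2}$ under the hypothesis $\tfrac12\abs{w} \leq m$, which reduces to separating the even and odd cases and is entirely routine.
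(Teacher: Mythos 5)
Your proposal is correct and takes essentially the same route as the paper: the paper also deduces \autoref{thm:ordinary_ternary} by observing that \autoref{lem:ordinaryShortPowers} makes circular and cyclic avoidance of $2$-powers coincide, and then citing Currie's construction of circularly square-free ternary words. Your explicit parity check of $\abs{w}-m \leq \floor{\abs{w}/2}$ merely spells out the boundary arithmetic that the paper leaves implicit in the sentence following \autoref{lem:ordinaryShortPowers}.
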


  Notice that for $n \in \{5, 7, 9, 10, 14, 17\}$ there exists a word of length $n$ over a $4$-letter alphabet avoiding
  $2$-powers cyclically. Such words are, e.g., $01023$, $0102013$, $010201203$, $0102010313$, $01020103010213$, and
  $01020103010212313$. Notice in addition that for each $n$ there exists a word of length $n$ over a $3$-letter
  alphabet that avoids $2^+$-powers cyclically (see below for the definition). To see this, it is sufficient to observe
  that the words $00102$, $0010012$, $001001102$, $0010011202$, $00100112001002$, and $00100112001001202$ avoid
  $2^+$-powers cyclically.

  What is left is to determine the least exponent $N$ such that for all $n$ there exists a binary word $w$ of length
  $n$ such that $w$ avoids $N$-powers cyclically. In the context of ordinary powers, it is natural to consider
  fractional exponents, and thus we give the following definition. We do not consider fractional abelian exponents in
  this paper; for discussion on this concept, see
  \cite{1999:words_strongly_avoiding_fractional_powers,2012:on_abelian_repetition_threshold}.

  \begin{definition}
    Let $w$ be a word and $N$ be a rational number such that $N > 1$. Then $w$ avoids $N^+$-powers \emph{cyclically} if
    for each $N^+$-power of period $m$ occurring in the infinite word $w^\omega$, we have $m \geq \abs{w}$. A word $u$
    is an $N^+$-power if $u$ is an $R$-power for some $R > N$.
  \end{definition}

  Let $\infw{t}$ be the fixed point $\sigma^\omega(0)$ of the morphism $\sigma\colon 0 \mapsto 01$, $1 \mapsto 10$. The
  word $\infw{t}$ is the famous Thue-Morse word; see \cite[Sect.~1.6]{2003:automatic_sequences}. Aberkane and Currie
  proved in \cite{2005:the_thue-morse_word_contains_circular} that the Thue-Morse word $\infw{t}$ contains a factor
  avoiding $5/2^+$-powers circularly for all lengths. We generalize this result to our cyclic setting. This result
  implies \autoref{thm:ordinary_binary}.

  \begin{theorem}\label{thm:thue-morse}
    For each $n$, there exists a factor of length $n$ of the Thue-Morse word avoiding $5/2^+$-powers cyclically.
  \end{theorem}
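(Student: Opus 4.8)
The plan is to upgrade the circular-avoidance result of Aberkane and Currie \cite{2005:the_thue-morse_word_contains_circular} to cyclic avoidance. Fix $n$ and take a factor $w$ of $\infw{t}$ of length $n$ that avoids $5/2^+$-powers circularly; I would first ensure that $w$ is primitive, which is genuinely needed here, since a square $w = v^2$ avoids $5/2^+$-powers circularly but never cyclically (as $w^\omega = v^\omega$ contains the cube $v^3$, whose period $\tfrac n2 < n$). The factors of the conjugates of $w$ are exactly the factors of $w^\omega$ of length at most $n$, so circular avoidance says precisely that $w^\omega$ contains no $5/2^+$-power of length at most $n$. Since the shortest $5/2^+$-power of period $m$ has length $\lfloor\tfrac52 m\rfloor+1$, which is at most $n$ exactly when $m < \tfrac25 n$, circular avoidance already rules out every period $m < \tfrac25 n$, and the task is to exclude $5/2^+$-powers of period $m$ with $\tfrac25 n \le m < n$ from $w^\omega$.

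Using the $5/2^+$-analogue of \autoref{lem:ordinaryShortPowers} (whose proof adapts routinely), any $5/2^+$-power of period $m \in [\tfrac n2, n)$ yields one of period $n-m \le \tfrac n2$, so it suffices to treat the range $\tfrac25 n \le m \le \tfrac n2$. Replacing $m$ by the minimal period of the repetition (which, by the previous paragraph, still lies in this range), we may take the length-$m$ root $x$ to be primitive. For such $m$ a $5/2^+$-power $F$ of period $m$ satisfies $|F| > \tfrac52 m \ge n$; as $F$ is a factor of the $n$-periodic word $w^\omega$ and is longer than $n$, it carries both the period $m$ and the period $n$.

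I would then split on $d = \gcd(m,n)$. When $d \ge n - \tfrac32 m$ we have $|F| > \tfrac52 m \ge m + n - d$, so the Fine--Wilf theorem forces $F$ to have period $d$. Since $d \mid n$, $d < n$, and $|F| > n$, a full length-$n$ window of $F$ is a conjugate $w'$ of $w$ of period $d$, i.e. a proper power $v^{n/d}$ with $n/d \ge 2$; as conjugates of proper powers are proper powers, $w$ itself would be a proper power, contradicting its primitivity.

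The remaining case $d < n - \tfrac32 m$ (small $\gcd$) is the crux. Here $F$ forces a conjugate $w' = x^2 y$ of $w$, with $x$ primitive of length $m$ and $|y| = n - 2m \in (0, \tfrac12 m]$, together with an extra matching condition: writing $c = |y|$, the period $m$ must persist into $(w')^\omega$ for more than $\tfrac12 m - c$ further letters, which amounts to $x[i] = x[(i+c)\bmod m]$ on all those positions. Unlike the large-$\gcd$ case, this partial cyclic self-matching of $x$ does not by itself contradict primitivity, so ruling it out requires the fine repetition structure of the Thue-Morse word --- its description as the $2$-uniform fixed point of $0 \mapsto 01,\ 1\mapsto 10$ and the resulting constraints on the periods of its factors and their conjugates. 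I expect this sub-case, which essentially re-deploys the core of the Aberkane--Currie analysis in the cyclic setting, to be the main obstacle.
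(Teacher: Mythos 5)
Your preliminary reductions are sound: primitivity is indeed necessary (a square factor $v^2$ never avoids cyclically); circular avoidance does exclude all periods $m < \tfrac25 n$; the $5/2^+$-analogue of \autoref{lem:ordinaryShortPowers} does adapt routinely (a repetition of length $\ell$ and period $m$ in the $n$-periodic word $w^\omega$ with $\tfrac12 n \leq m < n$ yields one of period $n-m$ and exponent $1 + (\ell-m)/(n-m) \geq \ell/m$); and the Fine--Wilf argument correctly disposes of the case $\gcd(m,n) \geq n - \tfrac32 m$. But at the decisive point the proposal stops: for $\tfrac25 n \leq m \leq \tfrac12 n$ with small $\gcd$, you describe the configuration $w' = x^2y$ with a partial cyclic self-matching of $x$ and then write that you ``expect this sub-case \ldots{} to be the main obstacle.'' That is an admission, not an argument. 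This sub-case is not a routine remainder --- it is the entire content of the theorem beyond what circular avoidance plus folklore period reductions give, and no proof of it is offered. Moreover, your plan implicitly commits you to a statement stronger than the theorem: that \emph{every} primitive factor of $\infw{t}$ avoiding $5/2^+$-powers circularly also avoids them cyclically. The theorem asserts only the existence of one suitable factor per length; neither the proposal nor the Aberkane--Currie result supports the universal version, and if it is false you would additionally need a rule for selecting $w$, which you do not have. (You would also need to verify, for each $n$, that a \emph{primitive} circularly-avoiding factor exists, which is a further unchecked point for even $n$.)

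For comparison, the paper takes an entirely different route that bypasses this analysis. It observes that any $N$-power of period $m < |w|$ with $N > 3$ occurring in $w^\omega$ contains a $3$-power of the same period, so only exponents in the interval $(5/2, 3]$ need to be excluded, and any such power lies inside $w^4$. The property ``the length-$n$ factor of $\infw{t}$ starting at position $i$ avoids $5/2^+$-powers cyclically'' is then encoded as a first-order predicate (the predicates $\operatorname{cRepK}$ and $\operatorname{ncyc}$) over the $2$-automatic Thue--Morse word, and the whole theorem, quantified over all $n$, is discharged mechanically by Walnut's decision procedure. In other words, the part you leave open is exactly the part the paper settles by computation; completing your route by hand would require redoing the fine Aberkane--Currie analysis of Thue--Morse repetitions in the cyclic setting, and that work is precisely what is missing from the proposal.
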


  It can be shown that the exponent $5/2$ is optimal for binary words by inspecting all binary words of length $5$.

  In order to prove \autoref{thm:thue-morse}, we employ the automatic theorem-proving software Walnut \cite{walnut}.
  Properties of automatic sequences \cite{2003:automatic_sequences} that are expressible in a certain first-order logic
  are decidable, and Walnut implements the decision procedure. The Thue-Morse word $\infw{t}$ is a $2$-automatic word,
  so Walnut is applicable. We wish to keep the discussion on the decision procedure and usage of Walnut brief, so we
  merely describe the logical formulas necessary to encode our problem and refer the reader to
  \cite{2019:circularly_squarefree_words_and_unbordered_conjugates} for a proof of \autoref{thm:ordinary_ternary} using
  Walnut. See also \cite{2019:circular_critical_exponents_for_thue-morse_factors}.

  Let $w$ be a factor of the Thue-Morse word. If $w^\omega$ contains an $N$-power of period $m$ such that $m < \abs{w}$
  and $N > 3$, then $w^\omega$ contains a $3$-power of period $m$. Therefore in order to show that $w$ avoids
  $5/2^+$-powers cyclically, we only need to consider $N$-powers with $5/2 < N \leq 3$. Notice that such a power $u$ is
  necessarily a factor of $w^4$. We first write a predicate $\operatorname{cRepK}(i,j,m,n,p)$, $K = 1, \ldots, 4$, that
  evaluates to true if and only if the factor $w$ of length $n$ beginning at the position $i$ of the Thue-Morse word
  $\infw{t}$ is such that $w^K$ has a factor $u$ of length $m$ beginning at position $j$, $i \leq j < i + n$, such that
  $u$ has period $p$ and $u$ is not a factor of $w^{K-1}$. The predicate needs to be written somewhat awkwardly as
  $w^K$ is not necessarily a factor of $\infw{t}$. For example, we have
  \begin{align*}
    \operatorname{cRep2}(i,j,m,n,p) &= (i \leq j < i+n) \land (i+n \leq j+m \leq i+2n) \land \\
    & \hspace{1.4em} (\forall k (j \leq k < i+n-p) \Longrightarrow \infw{t}[k] = \infw{t}[k+p]) \land \\
    & \hspace{1.4em} (\forall k (i+n-p \leq k < i+n) \Longrightarrow \infw{t}[k] = \infw{t}[k+p-n]) \land \\
    & \hspace{1.4em} (\forall k (i+n \leq k < j+m-p) \Longrightarrow \infw{t}[k-n] = \infw{t}[k+p-n]).
  \end{align*}
  We can then write a predicate $\operatorname{ncyc}(i,n)$ that evaluates to true if and only if the factor $w$ of
  $\infw{t}$ of length $n$ starting at position $i$ is such that $w^4$ contains a factor that has period $p$ with
  $5/2 < \abs{u}/p \leq 3$. Its definition is:
  \begin{align*}
    \operatorname{ncyc}(i,n) &= \exists j,m,p ((0 < p < n) \land (5p < 2m \leq 6p) \land \\
    & \hspace{1.4em} (cRep1(i,j,m,n,p) \lor cRep2(i,j,m,n,p) \lor cRep3(i,j,m,n,p) \lor  \\
    & \hspace{1.4em} cRep4(i,j,m,n,p))).
  \end{align*}
  Finally the following predicate evaluates to true if and only if \autoref{thm:thue-morse} is true:
  \begin{align*}
    \forall n ( (n > 0) \Longrightarrow (\exists i \lnot \operatorname{ncyc}(i, n))).
  \end{align*}
  Inputting the above predicates to Walnut produces an automaton accepting all inputs meaning that
  \autoref{thm:thue-morse} is true.

  \section{Discussion on Future Research}\label{sec:conclusions}
  Obviously the main question is what is the value of $\aci{k}$ for $k = 2, 3, 4$. \autoref{thm:main2} seems to support
  the claims that $\aci{2} = 4$, $\aci{3} = 3$, and $\aci{4} = 2$, but the first claim is false as there is no binary
  word of length $8$ avoiding abelian $4$-powers cyclically. This leads us to ask the following questions.

  \begin{question}
    Is it the case that $\aci{2} = 5$, $\aci{3} = 3$, and $\aci{4} = 2$?
  \end{question}

  \begin{question}
    If $n \neq 8$, does there exist a word of length $n$ over a $2$-letter alphabet avoiding abelian $4$-powers
    cyclically?
  \end{question}

  Our computer experiments have not found a counterexample to the above questions among lengths less than $150$. Notice
  that our question whether $\aci{4} = 2$ is stronger than the conjecture of \cite{2018:anagram-free_graph_colouring}
  mentioned in the preliminaries after \autoref{thm:main2}. A positive answer to the latter question would imply that
  $\aci{2} = 5$ as the word $00001011$ of length $8$ avoids abelian $5$-powers cyclically.

  We do not know how to approach these questions. The lowest hanging fruit is to improve the construction of
  \autoref{sec:bounds_for_aci} and lower the upper bound on $\aci{2}$. We remark that the particular construction given
  here cannot be used to improve the upper bound $8$ in \autoref{thm:binary_case} as some of the words constructed
  contain abelian $7$-powers with short period. If two words that avoid abelian $4$-powers are concatenated, then a
  priori abelian $7$-powers could appear. An improved construction would need to take special care to concatenate the
  words in such a way that their respective abelian $3$-powers of common period do not appear too close to each other.
  It seems that no precise information on the structure and location of abelian $3$-powers in words that avoid abelian
  $4$-powers is found in the literature. Even the sets of possible periods of abelian powers occurring in infinite
  words have been studied very little. The only papers in this direction are the papers
  \cite{2016:abelian_powers_and_repetitions_in_sturmian_words,2020:abelian_periods_of_factors_of_sturmian_words}
  concerning the abelian period sets of Sturmian words. This knowledge however is not helpful in this context as
  Sturmian words contain abelian powers of arbitrarily high exponent
  \cite[Proposition~4.10]{2016:abelian_powers_and_repetitions_in_sturmian_words}. It seems that making such
  concatenation arguments work for the alphabet sizes $3$ and $4$ is even more difficult especially because there is
  less room for improvement.

  An alternative way to improve our results would be to find infinite words whose language contains the sought words.
  For example, Justin's morphism $0 \mapsto 00001$, $1 \mapsto 01111$ seems promising
  \cite{1973:characterization_of_the_repetitive_commutative_semigroups}. It has a factor of length $n$ avoiding abelian
  $5$-powers cyclically for $n = 1, \ldots, 400$. We do not know how to prove that such a factor exists for each
  length. Since the the fixed point of Justin's morphism is automatic, it might be possible to attack this problem via
  automatic theorem-proving as in \autoref{sec:ordinary_powers}. The problem in this plan is that this type of
  automatic theorem-proving requires the problem to be written in a certain restricted first order logic and generally
  abelian properties of words cannot be expressed in this logic \cite[Sect.~5.2]{mthesis:luke}.

  We have dealt in this paper only with the question of existence. A significantly harder problem would be to provide a
  lower bound, for example, for the number of binary words of length $n$ that avoid abelian $4$-powers cyclically. We
  have recorded this sequence as the sequence \href{https://oeis.org/A334831}{A334831} in Sloane's \emph{On-Line
  Encyclopedia of Integer Sequences} \cite{oeis}. The first values of the sequence are $2$, $2$, $6$, $8$, $10$, $6$,
  $28$, $0$, $36$, $120$, $132$, $168$, $364$, $112$.
  
  \section*{Acknowledgements}
  We thank the reviewer for remarks that improved the quality of the paper. We also thank him/her for pointing out the
  conjecture in \cite{2018:anagram-free_graph_colouring}.

  %
  %
%
%
%
%
%
%
%
  
  \printbibliography

\end{document}